\newcommand{\pt}{po\-ly\-nom\-i\-al-time}
\newcounter{saveenum}
\newtheorem{proposition}{Proposition}
\newtheorem{theorem}{Theorem}
\newtheorem{lemma}{Lemma}
\newtheorem{claim}{Claim}
\newtheorem{corollary}{Corollary}
\theoremstyle{definition}
\newtheorem{definition}{Definition}
\crefname{table}{Table}{Tables}
\crefname{figure}{Figure}{Figures}
\crefname{theorem}{Theorem}{Theorems}
\crefname{definition}{Definition}{Definitions}
\crefname{corollary}{Corollary}{Corollaries}
\crefname{obs}{Observation}{Observations}
\crefname{lemma}{Lemma}{Lemmas}
\crefname{example}{Example}{Examples}
\crefname{claim}{Claim}{Claims}
\crefname{proposition}{Proposition}{Propositions}
\crefname{construction}{Construction}{Constructions}
\crefname{property}{Property}{Properties}
\crefname{subsection}{Subsection}{Subsections}
\crefname{section}{Section}{Sections}
\crefname{algorithm}{Algorithm}{Algorithms}
\crefname{section}{Section}{Sections}
\crefname{subsection}{Section}{Sections}
\crefname{figure}{Figure}{Figures}
\crefname{theorem}{Theorem}{Theorems}
\crefname{lemma}{Lemma}{Lemmas}
\crefname{proposition}{Proposition}{Propositions}
\crefname{definition}{Definition}{Definitions}
\crefname{claim}{Claim}{Claims}
\newcommand{\NP}{NP\xspace}
\newcommand{\classP}{P\xspace}
\DeclareMathOperator{\card}{card}
\DeclareMathOperator{\new}{new}
\newcommand{\sold}{\ensuremath{s}}
\newcommand{\snew}{\ensuremath{s_{\new}}}
\newcommand{\Deltas}{\ensuremath{\Delta_{s}}}
\newcommand{\kwin}{\ensuremath{r_{\alpha}}}
\newcommand{\kdis}{\ensuremath{d}}
\newcommand{\krem}{\ensuremath{r}}
\newcommand{\demand}{\ensuremath{\kappa}}
\newcommand{\partyA}{A\xspace}
\newcommand{\partyB}{B\xspace}
\newcommand{\partyAsupporter}{\partyA-supporter\xspace}
\newcommand{\partyAsupporters}{\partyA-supporters\xspace}
\newcommand{\partyBsupporter}{\partyB-supporter\xspace}
\newcommand{\partyBsupporters}{\partyB-supporters\xspace}
\DeclareMathOperator\inc{inc}
\DeclareMathOperator\adj{adj}
\newcommand{\Vdis}{\ensuremath{D}}
\newcommand{\Vrem}{\ensuremath{R}}
\newcommand{\Vwin}{\ensuremath{R_\alpha}}
 \newcommand{\decprob}[3]{%
 {\def\descriptionlabel##1{\hspace\labelsep\quad{}\it{}##1}%
 \par\vspace{\topsep}\noindent%
 \begin{compactdesc}
 \item[\textsc{#1}]
 \item[Input:] #2
 \item[Question:] #3
\end{compactdesc}
}\vspace{\topsep}}
\newcommand{\probValid}{\textsc{Dissolution}\xspace}
\newcommand{\probWinner}{\textsc{Biased Dissolution}\xspace}
\newcommand{\probExCover}{\textsc{Exact Cover by $t$-Sets}\xspace}
\newcommand{\probPlanarPacking}[1]{\textsc{Perfect Planar #1-Matching}\xspace}
\newcommand{\probRestTwoFactor}{\textsc{$L$-Restricted Two Factor}\xspace}
\newcommand{\starpartition}[1]{\ensuremath{#1}-star partition\xspace}
\newcommand{\dissolution}{dissolution\xspace}
\newcommand{\biased}{biased\xspace}
\newcommand{\biaseddissolution}{\biased \dissolution}
\newcommand{\Sdissol}[2]{\mbox{\ensuremath{(#1,#2)}}-\dissolution}
\newcommand{\AbiasedSdissol}[3]{$#1$-\biased \Sdissol{#2}{#3}}
\newcommand{\biasedSdissol}[2]{\kwin-\biased \Sdissol{#1}{#2}}
\newcommand{\dissol}{\Sdissol{\sold}{\Deltas}}
\newcommand{\biaseddissol}{\AbiasedSdissol{\kwin}{\sold}{\Deltas}}
\newcommand{\A}[1]{\ensuremath{Z(#1)}} %
\newcommand{\move}{move\xspace}
\newcommand{\moved}{moved\xspace}
\newcommand{\moving}{moving\xspace}
\newcommand{\z}{\ensuremath{z}}
\newcommand{\w}{\ensuremath{\alpha}}
\newcommand{\za}{\ensuremath{\z_{\w}}}
\tikzset{Asupporter/.style={draw,shape=circle,fill=black,scale=.5}}
\tikzset{Bsupporter/.style={draw,shape=circle,fill=white,scale=.5}}
\tikzset{ZEdge/.style={thick,-,sloped}}
\tikzset{ZArrow/.style={thick,->,sloped}}
\newcommand{\nodeBBBBB}[2]{
    \node [Bsupporter,#2] (#1v1) [transform shape] {};
    \node [Bsupporter, below left = 4.5pt of #1v1] (#1v2) [transform shape] {};
    \node [Bsupporter, below right = 4.5pt of #1v1] (#1v3) [transform shape] {};
    \node [Bsupporter, below right = 4.5pt of #1v2] (#1v4) [transform shape] {};
    \node [fit=(#1v1)(#1v2)(#1v3)(#1v4),draw, inner sep=0pt, circle] (#1) [transform shape] {};
    \node [Bsupporter, below right = 0pt of #1.center,anchor=center] (#1v5) [transform shape] {};
}
\newcommand{\nodeABBBB}[2]{
    \node [Asupporter,#2] (#1v1) [transform shape] {};
    \node [Bsupporter, below left = 4.5pt of #1v1] (#1v2) [transform shape] {};
    \node [Bsupporter, below right = 4.5pt of #1v1] (#1v3) [transform shape] {};
    \node [Bsupporter, below right = 4.5pt of #1v2] (#1v4) [transform shape] {};
    \node [fit=(#1v1)(#1v2)(#1v3)(#1v4),draw, inner sep=0pt, circle] (#1) [transform shape] {};
    \node [Bsupporter, below right = 0pt of #1.center,anchor=center] (#1v5) [transform shape] {};
}
\newcommand{\nodeAABBB}[2]{
    \node [Asupporter,#2] (#1v1) [transform shape] {};
    \node [Bsupporter, below left = 4.5pt of #1v1] (#1v2) [transform shape] {};
    \node [Bsupporter, below right = 4.5pt of #1v1] (#1v3) [transform shape] {};
    \node [Asupporter, below right = 4.5pt of #1v2] (#1v4) [transform shape] {};
    \node [fit=(#1v1)(#1v2)(#1v3)(#1v4),draw, inner sep=0pt, circle] (#1) [transform shape] {};
    \node [Bsupporter, below right = 0pt of #1.center,anchor=center] (#1v5) [transform shape] {};
}
\newcommand{\nodeAAABB}[2]{
    \node [Asupporter,#2] (#1v1) [transform shape] {};
    \node [Bsupporter, below left = 4.5pt of #1v1] (#1v2) [transform shape] {};
    \node [Bsupporter, below right = 4.5pt of #1v1] (#1v3) [transform shape] {};
    \node [Asupporter, below right = 4.5pt of #1v2] (#1v4) [transform shape] {};
    \node [fit=(#1v1)(#1v2)(#1v3)(#1v4),draw, inner sep=0pt, circle] (#1) [transform shape] {};
    \node [Asupporter, below right = 0pt of #1.center,anchor=center] (#1v5) [transform shape] {};
}
\newcommand{\nodeAAA}[2]{
    \node [Asupporter,#2] (#1v1) [transform shape] {};
    \node [Asupporter, below left = 4.5pt of #1v1] (#1v2) [transform shape] {};
    \node [Asupporter, below right = 4.5pt of #1v1] (#1v3) [transform shape] {};
    \node [fit=(#1v1)(#1v2)(#1v3),draw, inner sep=0pt, circle] (#1) [transform shape] {};
}
\newcommand{\nodeAAAd}[2]{
    \node [Asupporter,#2] (#1v1) [transform shape] {};
    \node [Asupporter, below left = 4.5pt of #1v1] (#1v2) [transform shape] {};
    \node [Asupporter, below right = 4.5pt of #1v1] (#1v3) [transform shape] {};
    \node [fit=(#1v1)(#1v2)(#1v3),draw,dashed, inner sep=0pt, circle] (#1) [transform shape] {};
}
\newcommand{\nodeAAB}[2]{
    \node [Asupporter,#2] (#1v1) [transform shape] {};
    \node [Asupporter, below left = 4.5pt of #1v1] (#1v2) [transform shape] {};
    \node [Bsupporter, below right = 4.5pt of #1v1] (#1v3) [transform shape] {};
    \node [fit=(#1v1)(#1v2)(#1v3),draw, inner sep=0pt, circle] (#1) [transform shape] {};
}
\newcommand{\nodeABB}[2]{
    \node [Asupporter,#2] (#1v1) [transform shape] {};
    \node [Bsupporter, below left = 4.5pt of #1v1] (#1v2) [transform shape] {};
    \node [Bsupporter, below right = 4.5pt of #1v1] (#1v3) [transform shape] {};
    \node [fit=(#1v1)(#1v2)(#1v3),draw, inner sep=0pt, circle] (#1) [transform shape] {};
}
\newcommand{\nodeABBd}[2]{
    \node [Asupporter,#2] (#1v1) [transform shape] {};
    \node [Bsupporter, below left = 4.5pt of #1v1] (#1v2) [transform shape] {};
    \node [Bsupporter, below right = 4.5pt of #1v1] (#1v3) [transform shape] {};
    \node [fit=(#1v1)(#1v2)(#1v3),draw,dashed,inner sep=0pt, circle] (#1) [transform shape] {};
}
\newcommand{\nodeBBB}[2]{
    \node [Bsupporter,#2] (#1v1) [transform shape] {};
    \node [Bsupporter, below left = 4.5pt of #1v1] (#1v2) [transform shape] {};
    \node [Bsupporter, below right = 4.5pt of #1v1] (#1v3) [transform shape] {};
    \node [fit=(#1v1)(#1v2)(#1v3),draw, inner sep=0pt, circle] (#1) [transform shape] {};
}
\newcommand{\nodeBBBd}[2]{
    \node [Bsupporter,#2] (#1v1) [transform shape] {};
    \node [Bsupporter, below left = 4.5pt of #1v1] (#1v2) [transform shape] {};
    \node [Bsupporter, below right = 4.5pt of #1v1] (#1v3) [transform shape] {};
    \node [fit=(#1v1)(#1v2)(#1v3),draw,dashed,inner sep=0pt, circle] (#1) [transform shape] {};
}
\newcommand{\nodeAxx}[2]{
    \node [Asupporter,#2] (#1v1) [transform shape] {};
    \node [scale=.5,below left = 4pt of #1v1] (#1v2) [transform shape] {x};
    \node [scale=.5,below right = 4pt of #1v1] (#1v3) [transform shape] {x};
    \node [fit=(#1v1)(#1v2)(#1v3),draw, inner sep=0pt, circle] (#1) [transform shape] {};
}
\newcommand{\nodeAB}[2]{
    \node [Asupporter,#2] (#1v1) [transform shape] {};
    \node [Bsupporter, right = 4pt of #1v1] (#1v2) [transform shape] {};
    \node [fit=(#1v1)(#1v2),draw, inner sep=0.5pt, circle] (#1) [transform shape] {};
}
\newcommand{\nodeABd}[2]{
    \node [Asupporter,#2] (#1v1) [transform shape] {};
    \node [Bsupporter, right = 4pt of #1v1] (#1v2) [transform shape] {};
    \node [fit=(#1v1)(#1v2),draw,dashed,inner sep=0.5pt, circle] (#1) [transform shape] {};
}
\newcommand{\nodeBB}[2]{
    \node [Bsupporter,#2] (#1v1) [transform shape] {};
    \node [Bsupporter, right = 4pt of #1v1] (#1v2) [transform shape] {};
    \node [fit=(#1v1)(#1v2),draw, inner sep=0.5pt, circle] (#1) [transform shape] {};
}
\newcommand{\nodeBBd}[2]{
    \node [Bsupporter,#2] (#1v1) [transform shape] {};
    \node [Bsupporter, right = 4pt of #1v1] (#1v2) [transform shape] {};
    \node [fit=(#1v1)(#1v2),draw,dashed,inner sep=0.5pt, circle] (#1) [transform shape] {};
}
\newcommand{\lA}[1]{
  \begin{tikzpicture}[scale=#1]
    \node [Asupporter] (a) [transform shape] {};
  \end{tikzpicture}
}
\newcommand{\lB}[1]{
  \begin{tikzpicture}[scale=#1]
    \node [Bsupporter] (b) [transform shape] {};
  \end{tikzpicture}
}
\newcommand{\lAA}[1]{
  \begin{tikzpicture}[scale=#1]
    \node [Asupporter] (a1) [transform shape] {};
    \node [Asupporter, right = 1pt of a1] (a2) [transform shape] {};
  \end{tikzpicture}
}
\newcommand{\lBB}[1]{
  \begin{tikzpicture}[scale=#1]
    \node [Bsupporter] (b1) [transform shape] {};
    \node [Bsupporter, right = 1pt of b1] (b2) [transform shape] {};
  \end{tikzpicture}
}
\newcommand{\lAB}[1]{
  \begin{tikzpicture}[scale=#1]
    \node [Asupporter] (a) [transform shape] {};
    \node [Bsupporter, right = 1pt of a] (b) [transform shape] {};
  \end{tikzpicture}
}
\tikzset{Edge/.style={thick,-,shorten <= -8pt, shorten >= -8pt,sloped}}
\tikzset{SEdge/.style={thick,-,shorten <= -2.5pt, shorten >= -2.5pt,sloped}}
\tikzset{Arrow/.style={thick,->,shorten <= -8pt, shorten >= -8pt,sloped}}
\tikzset{SArrow/.style={thick,->,shorten <= -2.5pt, shorten >= -2.5pt,sloped}}
\newcommand{\moveA}{
  \lA{1.0}
}
\newcommand{\moveB}{
  \lB{1.0}
}
\newcommand{\moveAA}{
  \lAA{1.0}
}
\newcommand{\moveBB}{
  \lBB{1.0}
}
\newcommand{\moveAB}{
  \lAB{1.0}
}
\newcommand{\networkI}{\ensuremath{I^*}}
\newcommand{\networkG}{\ensuremath{G^*}}
\newcommand{\networkV}{\ensuremath{V^*}}
\newcommand{\networkE}{\ensuremath{E^*}}
\newcommand{\networkCap}{\ensuremath{c^*}}
\newcommand{\networks}{\ensuremath{\sigma}}
\newcommand{\networkt}{\ensuremath{\tau}}
\newcommand{\networkInst}{\ensuremath{(\networkG=(\networkV,\networkE), \networkCap,\networks,\networkt)}}
\title{Network-Based Vertex Dissolution\thanks{An extended abstract appeared under the title ``Network-Based Dissolution'' in the Proceedings of the 39th International Symposium on Mathematical Foundations of Computer Science (MFCS~'14), volume 8635 of LNCS, Springer, pages 69--80.}}
\author[1]{René van Bevern}
\author[1]{Robert Bredereck}
\author[1]{Jiehua Chen}
\author[1]{Vincent~Froese}
\author[1]{Rolf~Niedermeier}
\author[2]{Gerhard~J.~Woeginger}
\affil[1]{Institut f\"ur Softwaretechnik und Theoretische Informatik, TU Berlin, Germany. \texttt{\{rene.vanbevern, robert.bredereck, jiehua.chen, vincent.froese, rolf.niedermeier\}@tu-berlin.de}}
\affil[2]{Department of Mathematics and Computer Science,
TU Eindhoven, The~Netherlands. \texttt{gwoegi@win.tue.nl}}
\date{}
\begin{document}
\def\sectionautorefname{Section}
\def\subsectionautorefname{Section}
\maketitle

\begin{abstract}
We introduce a graph-theoretic vertex dissolution model that applies to a number 
of redistribution scenarios such as gerrymandering in political districting 
or work balancing in an online situation.
The central aspect of our model is the deletion of certain vertices and the 
redistribution of their load to neighboring vertices in a completely balanced way. 

We investigate how the underlying graph structure, the knowledge of which 
vertices should be deleted, and the relation between old and new vertex loads 
influence the computational complexity of the underlying graph problems. 
Our results establish a clear borderline between tractable and intractable cases.
\end{abstract}

\section{Introduction}
Motivated by applications in areas like political redistricting, economization, 
and distributed systems, we introduce a class of graph modification problems 
that we call \emph{network-based vertex dissolution}. 
We are given an undirected graph where each vertex carries a load 
consisting of discrete entities (e.\,g.\ voters, tasks, data).
These loads are \emph{balanced}: all vertices carry the same load.
Now a certain number of vertices has to be \emph{dissolved}, that is, they 
are to be deleted from the graph and their loads are to be redistributed 
among their neighbors so that afterwards all loads are balanced again. 

In fact, our vertex dissolution problem comes in two flavors: \probValid{}
and \probWinner{}.
\probValid{} is the basic version described in the preceding paragraph.
\probWinner{} is a variant that is motivated by gerrymandering in the context 
of political districting.
It is centered around a two-party scenario with two types, A and~B, of
discrete entities.
The goal is to find a redistribution that maximizes the number of vertices
in which the A-entities form a majority.
See Section~\ref{sect:prelim} for a formal definition of these models.

Our focus lies on analyzing the computational complexity of 
network-based vertex dissolution problems and on getting a good understanding 
of \pt{} solvable and \NP-hard cases.

\subsection{Three application scenarios}
We discuss three example scenarios in detail.
The first two examples more relate to \probWinner{}, while the third
example is closer to \probValid{}.

Our first example comes from \emph{political districting}, the process
of setting electoral districts.  Let us consider a situation with two
political parties (\partyA and~\partyB) and an electorate of voters
that each support either~\partyA or~\partyB.  The electorate is
currently divided into $n$~districts, each consisting of precisely
$\sold$~individual voters.  A district is won by the party that
receives the majority of votes in the district (for simplification,
assume that ties are resolved in favor of \partyB).  The local
government performs an electoral reform that reduces the number of
districts, and the local governor (from party~\partyA) is in charge of
the redistricting.  His goal is, of course, to let party~{\partyA} win
as many districts as possible while dissolving some districts and
moving their voters to adjacent districts.  All resulting new
districts should have equal size~$\snew$ (where $\snew>\sold$).  In
the network-based vertex dissolution model, the districts and their
neighborhoods are represented by an undirected graph, where each
vertex represents a district and each edge indicates that two
districts are adjacent.

Our second example concerns storage updates in parallel or distributed
systems.  Consider a distributed storage array consisting of
$n$~storage nodes, each having a capacity of~$\sold{}$ storage units,
of which some units are empty.  As the prices on cheap hard disk space
are rapidly decreasing, the operators want to upgrade the storage
capacity of some nodes and to deactivate other nodes for saving energy
and cost.  As their distributed storage concept takes full advantage
only if all nodes have equal capacity, they want to upgrade all
(non-deactivated) nodes to the same capacity~$\snew{}$ and move
capacities from deactivated nodes to non-deactivated neighboring
nodes.  In the resulting system, every non-deactivated node should
only use half of its storage units.  In the network-based vertex
dissolution model, storage nodes and their neighborhoods are
represented by an undirected graph, where each vertex represents a
storage node and each edge indicates that two nodes are neighbored in the
array.  The storage units are modeled by our two-party variant, where
empty units are represented by party~A and used units are represented
by party~B.  Finally, one asks for redistribution such that A-entities
form a majority for every vertex.

\looseness=-1 Our third and last example concerns \emph{economization} in a fairly general form. 
Consider a company with $n$~employees, each producing $\sold{}$~units 
of a desirable good during a month; for concreteness, let 
us say that each employee proves $\sold{}$~theorems per month.  
Now, due to the increasing support of automatic theorem provers, each employee 
is able to prove $\snew{}$~theorems per month ($\snew>\sold$). 
Hence, without lowering the total number of proved theorems per month, some 
employees may be moved to a special task force for improving automatic 
theorem provers: this will secure the company's future competitiveness in 
proving theorems, without decreasing the overall theorem output. 
By company regulations, all theorem-proving employees have to be treated
equally and should have identical workload.
In the network-based vertex dissolution model employees correspond to
vertices and edges indicate that the corresponding employees are comparable
in qualification and research interest.
Employees in the special task force are dissolved and disappear from the
scene of action; their workload is to be taken over by 
employees who are comparable in qualification and research interests.

\subsection{Related work}
We are not aware of any previous work on our network-based vertex
dissolution problem.  Our main inspiration came from the area of
political districting, in particular from
gerrymandering~\cite{LS14,PT08,PT09}, and from supervised
regionalization methods~\cite{DRS07}.  Of course, graph-theoretic
models have been employed earlier for (political) districting; for
instance, \citet{MJN98} draws a connection to graph partitioning, and
\citet{Duq04} and \citet{MS95} use graphs to model geographic
information in the regionalization problem.  These models are tailored
towards very specific applications and are mainly used for the purpose
of developing efficient heuristic algorithms, often relying on
mathematical programming techniques.  The computational hardness of
districting problems has been known for quite some time~\cite{Alt98}.

\looseness=-1 Also related to our problem is constructive (or destructive) control
by partitioning voters which has been introduced by \citet{BTT92}.  In
this scenario, a chair wants to make some candidate become a winner
(or a looser) by partitioning the set of voters and applying some
multi-stage voting protocol.  The crucial difference to our model is
that there are no restrictions on the possible voter set partitions.
The computational complexity of control by voter partitioning has been
investigated for many voting rules (\citet{FaliszewskiRotheChapter}
give an overview).

\subsection{Remark on nomenclature}
For the ease of presentation, throughout the work we will adopt a political
districting point of view on network-based vertex dissolution: the words districts
and vertices are used interchangeably, and the entities in districts are
referred to as voters or supporters.

\subsection{Contributions and organization of this paper}
\looseness=-1 We propose two simple computational problems \probValid{} and \probWinner{} to make the model for network-based vertex dissolution (\cref{sect:prelim}) concrete.  In the main body of our work, we provide a variety of computational tractability and intractability results for both problems.  We investigate relations of our new modeling to established models like matchings and flow networks.  Furthermore, we analyze how the structure of the underlying graphs or how an in-advance fixing of which vertices should be dissolved influences the computational complexity (mainly in terms of \pt{} solvability versus \NP-hard cases).

In \cref{sec:relations}, using flow networks, we show that \probWinner is \pt{}
solvable if the set of districts to be dissolved and the set of districts 
to be won are both specified as part of the input.
Furthermore, we show how our new model generalizes established models
such as partitioning graphs into stars and perfect matchings.

\cref{sec:Dichotomy} presents a complexity dichotomy for \probValid{} 
and \probWinner{} with respect to the old district size $\sold$ and the increase 
$\Deltas$ in district size (that is, the difference between the new and the old district size).
\probValid{} is \pt{} solvable for~$\sold=\Deltas$ and \probWinner{}
is \pt{} solvable for~$\sold=\Deltas=1$; all other cases are \NP-complete.

\cref{sec:graph-classes} analyzes the complexity of \probValid{} and 
\probWinner{} for various specially structured graphs, including
planar graphs (\NP-complete),
cliques (\pt{} solvable),
and graphs of bounded treewidth (linear-time solvable if $\sold$ and $\Deltas$ are constant).

\section{Formal setting}
\label{sect:prelim}

\newcommand{\constToNeighbors}
{\ensuremath{\forall {v' \in \Vdis, v \in V\setminus N(v')}: \z(v',v)=0}}
\newcommand{\constFull}
{\ensuremath{\forall {v' \in \Vdis}: \smashoperator{\sum_{(v',v)\in \A{\Vdis,G}}} \z(v',v)=\sold}}
\newcommand{\constEqualSize}
{\ensuremath{\forall {v \in V \setminus \Vdis}: \smashoperator{\sum_{(v',v)\in \A{\Vdis,G}}} \z(v',v)=\Deltas}}

\newcommand{\constAsupportersUpperbound}
{\ensuremath{\forall (v',v) \in \A{\Vdis,G}: \za(v',v) \le \z(v',v)}}
\newcommand{\constAsupportersFull}
{\ensuremath{\forall v' \in \Vdis: \smashoperator{\sum_{(v',v)\in \A{\Vdis,G}}} \za(v',v)= \w(v')}}
\newcommand{\constWinningdistricts}
{\ensuremath{\forall {v \in \Vwin}: 
    \w(v) + \smashoperator{\sum_{(v',v) \in \A{\Vdis,G}}}{\za(v',v)} > \frac{\sold+\Deltas}{2}}}

We start by introducing notation and formal definitions of the
technical terms that we use throughout the paper.

\subsection{Graphs}
Unless stated otherwise, we consider simple, undirected graphs~$G=(V,E)$, where~$V$ is a set
of $n$~vertices and~$E \subseteq {V \choose 2}$ is a set of~$m$ edges.
We use ${V \choose 2}$ to denote the family of all size-two subsets of~$V$.
For a given graph~$G$, we denote by~$V(G)$ the set
of vertices and by~$E(G)$ the set of edges of~$G$.
For a subset~$V' \subseteq V(G)$ of vertices and a
subset~$E'\subseteq (E(G)\cap {V'\choose 2})$ of edges,
the graph $G'=(V',E')$ is called a \emph{subgraph} of~$G$.
We also say $G$ contains~$G'$.
For a vertex subset~$V'\subseteq V$, the \emph{induced
subgraph}~$G[V']$ of~$G$ is defined as~$G[V']:=(V',E\cap {V'\choose 2})$.

A \emph{path} is a graph~$P=(V,E)$ with vertex set~$V=\{v_1,v_2,\dots,v_n\}$
and edge set $E=\{\{v_1,v_2\},\{v_2,v_3\},\dots,\{v_{n-1},v_n\}\}$.
The vertices~$v_1$ and~$v_n$ are the \emph{endpoints} of the path~$P$.
We say two vertices~$v$ and $v'$ in a graph~$G$ are \emph{connected} if $G$~contains
a path with the endpoints~$v$ and~$v'$.
A graph is \emph{connected} if every two vertices are connected.
The \emph{connected components} of a graph are its maximal connected subgraphs.
For a vertex~$v\in V$, we denote by~$N(v):=\{u\in V \mid \{u,v\}\in E\}$ 
the \emph{(open) neighborhood} of~$v$, that is, all vertices that are
connected to~$v$ by an edge.

A \emph{$t$-star} is a graph~$K_{1,t}=(V,E)$ with vertex set~$V=\{v_1,v_2,\dots,v_{t+1}\}$
and edge set $E=\{\{v_1,v_i\} \mid 2 \le i \le t+1\}$.
The vertex~$v_1$ is called the \emph{center} of the star.
A \emph{$t$-star partition} of~$G$ is a
partition~$\{V_1,\ldots,V_{n/(t+1)}\}$ of the vertex set~$V$ into
subsets of size~$t+1$ such that each~$G[V_i]$ contains a~$t$-star as a
subgraph.
Note that a $1$-star partition is a \emph{perfect matching}.

\subsection{Networks and flows}
A flow network~$\networkI$ consists of a directed graph~$\networkG=(\networkV,\networkE)$,
where $\networkV$ is the set of nodes and $\networkE$ is a set of arcs, 
an arc capacity function~$\networkCap\colon\networkE\to \mathbb{R}^+$, and two distinguished nodes~$\networks, \networkt\in \networkV$ called the \emph{source} and the \emph{target} of the network.
An arc is an ordered pair of nodes from~$\networkV$ and $\mathbb{R}^+$
is the set of non-negative real numbers.

A $(\networks,\networkt)$-flow~$f\colon\networkE \to \mathbb{R}^+$ is an arc value function with $f(u,v) \ge 0$ for all $(u,v)\in \networkE$ such that
\begin{enumerate}
 \item the \emph{capacity constraint} is fulfilled, i.e.,  
       $$\forall (u,v)\in \networkE: f(u,v) \le c(u,v)\text{, and}$$
 \item the \emph{conservation property} is satisfied, i.e., 
       $$\forall u \in \networkV\setminus \{\networks,\networkt\}: \smashoperator{\sum_{(u,v)\in \networkE}}f(u,v) = \smashoperator{\sum_{(v,u)\in \networkE}}f(v,u).$$
\end{enumerate}
We call $f$ \emph{integer} if all its values are integers. 
The \emph{value} of~$f$ is $\sum_{(\networks,u)\in \networkE}f(\networks,u)$.
Note that we distinguish between vertices in graphs and nodes in flow networks.

\subsection{Dissolutions}
Let $G$ be an undirected graph representing $n$~districts.
Let $\sold,\Deltas \in \mathbb{N}^+$~be the \emph{district size} and
\emph{district size increase}, respectively,
where $\mathbb{N}^+$ is the set of non-negative integer numbers.
For a subset~$V'\subseteq V(G)$ of districts, let
$$\A{V',G}:=\{(x,y) \mid x\in V' \wedge y \in V(G) \setminus V' \wedge \{x,y\} \in E(G)\}$$
be the set of district pairs consisting of a district from~$V'$ and
a neighbor that is not from~$V'$.
The central notion for our studies is that of 
a \emph{\dissolution}, which basically describes a valid movement of voters
from dissolved districts into non-dissolved districts.
The formal definition is as follows:

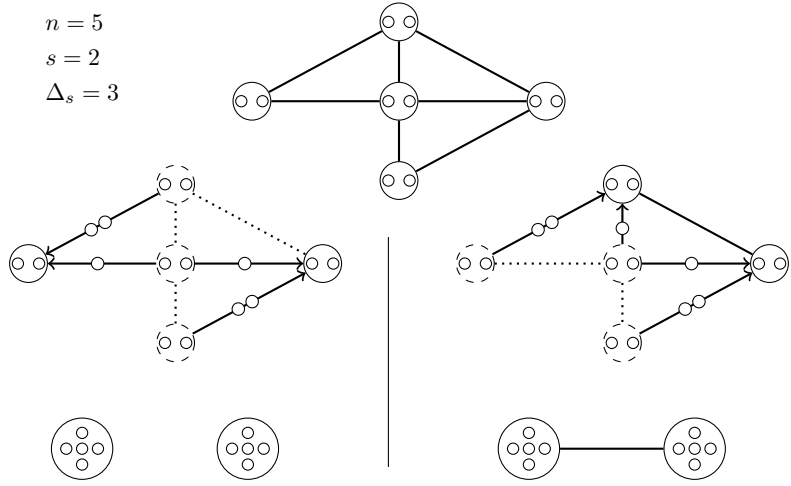
\begin{figure}[t!]
  \centering
    \begin{tikzpicture}[scale=0.9]
      \begin{scope}
        \node at (2,3.6) (null) {};
        \nodeBB{v1}{below=0pt of null}
        \nodeBB{v3}{below=1 of v1v1}
        \nodeBB{v2}{left=2.0 of v3v1}
        \nodeBB{v4}{right=2.0 of v3v1}
        \nodeBB{v5}{below=1 of v3v1}

        \node [left = 4cm of v1v1] (n) [transform shape]  {$n=5$};
        \node[below = 15pt of n.west,anchor=west] (sold) [transform shape]  {$\sold=2$};
        \node[below = 15pt of sold.west,anchor=west] (snew) [transform shape]  {$\Deltas=3$};
        \draw [ZEdge] (v1) -- (v2);
        \draw [ZEdge] (v1) -- (v3);
        \draw [ZEdge] (v1) -- (v4);
        \draw [ZEdge] (v2) -- (v3);
        \draw [ZEdge] (v3) -- (v4);
        \draw [ZEdge] (v3) -- (v5);
        \draw [ZEdge] (v4) -- (v5);
        
        \node [below=0.5cm of v5v1] (sepStart) [transform shape] {};
        \node [below=3.4cm of sepStart] (sepEnd) [transform shape] {};
         
        \draw [transform shape] (sepStart) -- (sepEnd);
      \end{scope}
      
      \begin{scope}[xshift=-3.3cm, yshift=-2.4cm]
        \node at (2,3.6) (null) {};
        \nodeBBd{v1}{below=0pt of null}
        \nodeBBd{v3}{below=1 of v1v1}
        \nodeBB{v2}{left=2.0 of v3v1}
        \nodeBB{v4}{right=2.0 of v3v1}
        \nodeBBd{v5}{below=1 of v3v1}
        \draw [ZEdge,dotted] (v1) -- (v4);
        \draw [ZEdge,dotted] (v1) -- (v3);
        \draw [ZEdge,dotted] (v5) -- (v3);
        \draw [ZArrow] (v1) -- (v2) node [midway]{\moveBB};
        \draw [ZArrow] (v3) -- (v2) node [midway]{\moveB};
        \draw [ZArrow] (v3) -- (v4) node [midway]{\moveB};
        \draw [ZArrow] (v5) -- (v4) node [midway]{\moveBB};
      \end{scope}
      
      \begin{scope}[xshift=3.3cm, yshift=-2.4cm]
        \node at (2,3.6) (null) {};
        \nodeBB{v1}{below=0pt of null}
        \nodeBBd{v3}{below=1 of v1v1}
        \nodeBBd{v2}{left=2.0 of v3v1}
        \nodeBB{v4}{right=2.0 of v3v1}
        \nodeBBd{v5}{below=1 of v3v1}
        \draw [ZEdge] (v1) -- (v4);
        \draw [ZEdge,dotted] (v2) -- (v3);
        \draw [ZEdge,dotted] (v3) -- (v5);
        \draw [ZArrow] (v2) -- (v1) node [midway]{\moveBB};
        \draw [ZArrow] (v3) -- (v1) node [midway]{\moveB};
        \draw [ZArrow] (v3) -- (v4) node [midway]{\moveB};
        \draw [ZArrow] (v5) -- (v4) node [midway]{\moveBB};
      \end{scope}

      \begin{scope}[xshift=-3.3cm, yshift=-6.3cm]
        \node at (2,3.6) (null) {};
        \nodeBBBBB{v1}{left=1 of null}
        \nodeBBBBB{v2}{right=1 of null}
      \end{scope}
      
      \begin{scope}[xshift=3.3cm, yshift=-6.3cm]
        \node at (2,3.6) (null) {};
        \nodeBBBBB{v1}{left=1 of null}
        \nodeBBBBB{v2}{right=1 of null}
        \draw [ZEdge] (v1) -- (v2);
      \end{scope}
  \end{tikzpicture}
  \caption{An illustration of two \Sdissol{2}{3}{}s.
    Small circles represent the voters.
    The graph on the top shows a neighborhood graph of five districts,
    each district consisting of two voters.
    The task is to dissolve three districts such that
    each remaining district contains five voters.
    The graphs in the middle show two possible realizations of dissolutions.
    The graphs on the bottom show the two corresponding outcomes.
    The arrows in the ``middle graphs'' point from the districts to be dissolved to the ``target
    districts'' and the white circle labels on the arrows
    represent the voters \moved along the arrows.}
\label{fig:dissol-ex}
\end{figure}

\begin{definition}[Dissolution]\label[definition]{def:dissolution}
  Let $G$ be an undirected graph,
  let $\Vdis \subset V(G)$ be a subset of \emph{districts to dissolve}
  and let $\z\colon \A{\Vdis,G} \to \{0,\ldots,\sold\}$ be a function that
  describes how many voters shall be \emph{\moved} from one district to its
  non-dissolved neighbors. Then, $(\Vdis, \z)$ is called an \emph{\dissol} for
  $G$ if
  \begin{enumerate}[a)]
    \item\label{prop:a} no voter remains in any dissolved district:
       \[\constFull\text{, and}\]
    \item\label{prop:b} the size of all remaining (non-dissolved) districts increases by $\Deltas$:
       \[\constEqualSize\text{.}\]
    \setcounter{saveenum}{\value{enumi}}
 \end{enumerate}

 \pagebreak[3]
 Throughout this work, we use
 \begin{itemize}
  \item $\snew := \sold + \Deltas$ to denote the new district size,
  \item $\kdis:=|\Vdis| = |V(G)| \cdot \Deltas /\snew$ to
        denote the number of dissolved districts, and
  \item $\krem:=|V(G)|-\kdis{}$ to denote the number of remaining, non-dissolved districts.
 \end{itemize}
\end{definition}

\noindent We write \emph{\dissolution} instead of $(\sold,
\Deltas)$-\dissolution when $\sold$ and $\Deltas$ are clear from the context.
 By definition, a dissolution ensures that the numbers
of voters moving between districts fulfill the given constraints on
the district sizes, that is, the size of each remaining district
increases by~$\Deltas$.
\cref{fig:dissol-ex} gives an example illustrating two possible \Sdissol{2}{3}{}s
for a $5$-vertex graph.

Motivated from social choice application scenarios, we additionally assume that
each voter supports one of two parties \partyA and \partyB.
We then search for a dissolution such that the number of remaining districts won by
party \partyA is maximized.  Here, a district is won by the party that
is supported by a strict majority of the voters inside the
district. This yields the notion of a \emph{biased dissolution}, which
is defined as follows:

\begin{definition}[Biased dissolution]\label[definition]{def:biased}
  Let $G$ be an undirected graph and
  let $\w\colon V(G)\to \{0,\ldots,\sold\}$ be an
  \emph{\partyAsupporter distribution}, where $\w(v)$~denotes the number
  of \partyAsupporters in district~$v\in V$.
  Let $(\Vdis, \z)$ be an \dissol for~$G$, that is,
  Properties~\ref{prop:a}) and~\ref{prop:b}) from \cref{def:dissolution} are fulfilled.
  Let $\kwin \in \mathbb{N}$ be the minimum number of districts 
  that party~\partyA shall win after the dissolution 
  and $\za \colon \A{\Vdis,G} \to \{0,\ldots,\sold\}$
  be an \emph{\partyA-supporter movement}, where $\za(v', v)$~denotes the number
  of \partyAsupporters \moving from district~$v'$ to district~$v$.
  Finally, let $\Vwin\subseteq V(G)\setminus \Vdis$ be a size-$\kwin$ subset of districts.
  Then, $(\Vdis, \z, \za, \Vwin)$ is called an \emph{\biaseddissol} for~$(G, \alpha)$ if
  \begin{enumerate}[a)]
    \setcounter{enumi}{\value{saveenum}}
    \item \label{prop:c}
       a district does not receive more \partyAsupporters from a dissolved
       district than the total number of voters it receives from that district:
       \[\constAsupportersUpperbound\text{,}\]
    \item \label{prop:d}
       no \partyAsupporters remain in any dissolved district:
       \[\constAsupportersFull\text{, and}\]
     \item \label{prop:e}
       each district in~$\Vwin$ has a strict majority of \partyAsupporters:
       \[\constWinningdistricts\text{.}\]
  \end{enumerate}
We also say that a district \emph{wins} if it has a strict majority of \partyAsupporters,
and \emph{loses} otherwise.
\end{definition}

\begin{figure}[t!]
  \centering
    \begin{tikzpicture}[scale=0.9]
      \begin{scope}
        \node at (2,3.6) (null) {};
        \nodeABB{v1}{below=0pt of null}
        \nodeABB{v3}{below=0.8 of v1}
        \nodeABB{v2}{left=2.0 of v3v1}
        \nodeABB{v4}{right=2.0 of v3v1}
        \nodeAAA{v5}{below=0.8 of v3}

        \node [left = 4cm of v1v1] (n) [transform shape]  {$n=5$};
        \node[below = 15pt of n.west,anchor=west] (sold) [transform shape]  {$\sold=3$};
        \node[below = 15pt of sold.west,anchor=west] (snew) [transform shape]  {$\Deltas=2$};
        \draw [ZEdge] (v1) -- (v2);
        \draw [ZEdge] (v1) -- (v3);
        \draw [ZEdge] (v1) -- (v4);
        \draw [ZEdge] (v2) -- (v5);
        \draw [ZEdge] (v3) -- (v5);
        \draw [ZEdge] (v4) -- (v5);
        
        \node [below=0.5cm of v5v1] (sepStart) [transform shape] {};
        \node [below=3.4cm of sepStart] (sepEnd) [transform shape] {};
         
        \draw [transform shape] (sepStart) -- (sepEnd);
      \end{scope}
      
      \begin{scope}[xshift=-3.3cm, yshift=-2.4cm]
        \node at (2,3.6) (null) {};
        \nodeABBd{v1}{below=0pt of null}
        \nodeABB{v3}{below=0.8 of v1}
        \nodeABB{v2}{left=2.0 of v3v1}
        \nodeABB{v4}{right=2.0 of v3v1}
        \nodeAAAd{v5}{below=0.8 of v3}
        \draw [ZArrow] (v1) -- (v2) node [midway]{\moveB};
        \draw [ZArrow] (v1) -- (v3) node [midway]{\moveA};
        \draw [ZArrow] (v1) -- (v4) node [midway]{\moveB};
        \draw [ZArrow] (v5) -- (v2) node [midway]{\moveA};
        \draw [ZArrow] (v5) -- (v3) node [midway]{\moveA};
        \draw [ZArrow] (v5) -- (v4) node [midway]{\moveA};
      \end{scope}
      
      \begin{scope}[xshift=3.3cm, yshift=-2.4cm]
        \node at (2,3.6) (null) {};
        \nodeABBd{v1}{below=0pt of null}
        \nodeABB{v3}{below=0.8 of v1}
        \nodeABB{v2}{left=2.0 of v3v1}
        \nodeABB{v4}{right=2.0 of v3v1}
        \nodeAAAd{v5}{below=0.8 of v3}
        \draw [ZEdge,dotted] (v1) -- (v4);
        \draw [ZEdge,dotted] (v2) -- (v5);
        \draw [ZArrow] (v1) -- (v2) node [midway]{\moveBB};
        \draw [ZArrow] (v1) -- (v3) node [midway]{\moveA};
        \draw [ZArrow] (v5) -- (v3) node [midway]{\moveA};
        \draw [ZArrow] (v5) -- (v4) node [midway]{\moveAA};
      \end{scope}

      \begin{scope}[xshift=-3.3cm, yshift=-6.3cm]
        \node at (2,3.6) (null) {};
        \nodeAAABB{v3}{below=0pt of null}
        \nodeAABBB{v2}{left=2.0 of v3v1}
        \nodeAABBB{v4}{right=2.0 of v3v1}
      \end{scope}
      
      \begin{scope}[xshift=3.3cm, yshift=-6.3cm]
        \node at (2,3.6) (null) {};
        \nodeAAABB{v3}{below=0pt of null}
        \nodeABBBB{v2}{left=2.0 of v3v1}
        \nodeAAABB{v4}{right=2.0 of v3v1}
      \end{scope}
  \end{tikzpicture}
  \caption{An illustration of a \AbiasedSdissol{1}{3}{2}
    (left) and a \AbiasedSdissol{2}{3}{2} (right).
    Black circles represent \partyAsupporters, while white circles
    represent \partyBsupporters.
    The graph on the top shows a neighborhood graph of five districts,
    each district consisting of three voters.
    The task is to dissolve two districts such that each remaining
    district contains five voters.
    The graphs in the middle show two possible realizations of dissolutions.
    The graphs on the bottom show the two corresponding outcomes.
    The arrows point from the districts to be dissolved to the ``target
    districts'' and the black/white circle labels on the arrows
    indicate which kind of voters are \moved along the arrows.}
\label{fig:biased-ex}
\end{figure}

\noindent \cref{fig:biased-ex} shows two biased dissolutions: one with $\kwin=1$ and the other one with $\kwin=2$.  We are now ready to formally state the definitions of the two computational dissolution problems (in their decision versions) that we discuss in this work:

\decprob{\probValid}
{An undirected graph $G=(V,E)$ and positive integers $\sold$ and $\Deltas$.}
{Is there an \dissol for $G$?}

\decprob{\probWinner}
{An undirected graph $G=(V,E)$, positive integers~$\sold, \Deltas,
  \kwin$, and an \partyAsupporter distribution~$\w: V \rightarrow
  \{0,\dots,\sold\}$.}
{Is there an \biaseddissol for $(G, \alpha)$?}

\noindent
Note that \probValid is equivalent to \probWinner with $\kwin=0$.
As we will see later, \probValid and \probWinner are \NP-complete in general.
In this work, we additionally look into special cases 
and investigate what the causes of intractability may be.

\section{Relations to Established Models}
\label{sec:relations}
In this section, we identify relations of our model to established
graph concepts like matchings, flow networks, or star partitions.
This will also be useful for proofs in later sections.
In \cref{sec:Relation2Flownetworks}, we show that
\probValid and \probWinner instances where the roles of the districts are
already known can be translated into flow networks.
In \cref{sec:Relation2Starmatching} we show that \dissolution{}s generalize
star partitions and perfect matchings.

\subsection{Flow Networks}
\label{sec:Relation2Flownetworks}
Sometimes the districts that are to be dissolved and the districts
that are to be won are not arbitrary but already determined beforehand.
For this case we show that \probWinner can be modeled as a network
flow problem, which can be solved in polynomial time.

\newcommand{\criticalvertex}{\ensuremath{v}}
\newcommand{\criticalnode}{\ensuremath{u}}
\newcommand{\Sum}[3]{\ensuremath{\smashoperator{\sum_{x\in{#1(#2)}}}#3(x,#2)}}
\newcommand{\Suminl}[3]{\ensuremath{\sum_{x\in{#1(#2)}}#3(x,#2)}}

\begin{theorem}\label{thm:P-Vdis+Vwin-known}
Let $(G,\sold,\Deltas,\kwin,\w)$ be a \probWinner instance, and let 
$\Vdis,\Vwin\subset V(G)$ be two disjoint fixed subsets of districts.
The problem of deciding whether $(G,\w)$ admits an $\kwin$-biased $(\sold,\Deltas)$-dissolution 
$(\Vdis, \z, \za, \Vwin)$
can be reduced in linear time to a maximum flow problem with $2|V(G)|+2$ nodes, $2|V(G)|+3|E(G)|$~arcs,
and maximum arc capacity $\max(\sold,\Deltas)$. 
\end{theorem}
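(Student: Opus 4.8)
The plan is to build a flow network \networkI\ whose integer $(\networks,\networkt)$-flows of a prescribed value correspond bijectively to the quadruples $(\Vdis,\z,\za,\Vwin)$ satisfying Properties~\ref{prop:a})--\ref{prop:e}) of \cref{def:dissolution,def:biased}, given the fixed sets $\Vdis$ and $\Vwin$. Since $\Vdis$ is fixed, the set of arcs $\A{\Vdis,G}$ along which voters may move is fixed, and the only unknowns are the values $\z(v',v)$ (total voters moved) and $\za(v',v)$ (\partyAsupporters moved). The natural idea is to route the total flow of \partyAsupporters and \partyBsupporters simultaneously through a doubled vertex set: split each vertex into an ``in'' node and an ``out'' node so that edge directions and capacities can be controlled separately for the two voter types.

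First I would describe the construction concretely. Create nodes $\networks,\networkt$ and, for every $v\in V(G)$, two nodes $v_{\inc}$ and $v_{\old}$ (hence $2|V(G)|+2$ nodes). For each dissolved district $v'\in\Vdis$, add a source arc $(\networks,v'_{\old})$ of capacity~$\sold$ — this forces exactly $\sold$ voters to leave $v'$, encoding Property~\ref{prop:a}) once we route the full value. For each remaining district $v\in V(G)\setminus\Vdis$, add a target arc $(v_{\inc},\networkt)$ of capacity~$\Deltas$ to encode Property~\ref{prop:b}). For each ordered pair $(v',v)\in\A{\Vdis,G}$, add an arc $(v'_{\old},v_{\inc})$ of capacity~$\sold$: the flow on it will be $\z(v',v)$. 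This already models a plain \dissolution. To track \partyAsupporters, I would overlay a second, capacity-bounded ``A-flow'' by adding, for each $v'\in\Vdis$, a bound ensuring at most $\w(v')$ units are of type~A leaving $v'$, and for each winning district $v\in\Vwin$ a lower bound ensuring enough A-units arrive; the cleanest way is to use the standard trick of modelling A-supporters and B-supporters as two commodities that are in fact a single commodity by a clever capacity assignment — concretely, route A-supporters on arcs $v'_{\old}\to v_{\inc}$ that are ``shadowed'' by the bound $\za(v',v)\le\z(v',v)$ (Property~\ref{prop:c})), which one enforces by using the three arcs per edge promised in the statement: one arc of capacity $\sold$ carrying $\z$, split internally so that the A-part is capped. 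Counting, each edge $\{v',v\}\in E(G)$ with $v'\in\Vdis$ contributes the $\z$-arc plus two auxiliary arcs handling the A/B split and the winning-majority bookkeeping, giving the claimed $3|E(G)|$ arcs; edges inside $\Vdis$ or inside $V\setminus\Vdis$ contribute nothing relevant. Set the target flow value to $\sold\cdot|\Vdis|$.

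The key correctness steps, in order, are: (1)~show any integer flow of value $\sold\cdot|\Vdis|$ saturates every source arc, forcing $\sum_v\z(v',v)=\sold$ for each $v'\in\Vdis$ (Property~\ref{prop:a})); (2)~by flow conservation and the fact that the total out of the source equals $\Deltas\cdot|V\setminus\Vdis|=\sold\cdot|\Vdis|$ (which holds because $\kdis=|V(G)|\cdot\Deltas/\snew$), deduce every target arc is saturated, giving Property~\ref{prop:b}); (3)~read off $\z$ and $\za$ from the two layers of arc values and check Properties~\ref{prop:c}), \ref{prop:d}), \ref{prop:e}) hold exactly when the A-layer capacities and the winning-district lower bounds are respected; (4)~conversely, given a \biaseddissol, define the flow arc by arc and verify it is feasible. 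Lower bounds on the winning arcs can be removed by the usual reduction to a plain maximum-flow instance (adjusting source/target demands), which keeps the node and arc counts within the stated bounds and preserves integrality via the integrality theorem for flows. Finally, note the whole construction is computable in $O(|V(G)|+|E(G)|)$ time and the maximum arc capacity is $\max(\sold,\Deltas)$ as claimed.

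The main obstacle I anticipate is Property~\ref{prop:c}), the coupling $\za(v',v)\le\z(v',v)$: a single-commodity flow cannot natively express ``the A-part of the flow on this arc is at most the total flow on it'' together with ``the total is at most $\sold$''. The trick is to realize that B-supporters are unconstrained except by the total, so one can model the arc from $v'$ to $v$ as \emph{two} parallel arcs — an ``A-arc'' of capacity $\w(v')$-controlled and a ``B-arc'' of capacity $\sold-\w(v')$-controlled — together with a mechanism (the third arc per edge, or a per-vertex gadget) that enforces the global bound $\za$-total $=\w(v')$ out of $v'_{\old}$ and the total $=\sold$; then $\za\le\z$ holds automatically because the B-arc's flow is nonnegative. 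Verifying that this parallel-arc encoding exactly reproduces the feasible set — no spurious solutions, no lost ones — and that it fits the promised arc budget is the delicate bookkeeping, but it is routine once the gadget is fixed.
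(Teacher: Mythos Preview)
Your overall framework matches the paper's---single-commodity max flow, two nodes per vertex, target value $\sold\cdot|\Vdis|$, read off $(\z,\za)$ from an integral optimum---but the concrete construction you give does not do the job, and the piece you leave as ``a mechanism (the third arc per edge, or a per-vertex gadget)'' is exactly the main idea. Your split into $v_{\inc},v_{\old}$ is an in/out split: for dissolved~$v'$ only $v'_{\old}$ is used and for remaining~$v$ only $v_{\inc}$, so half the nodes are idle and the network carries only the aggregate~$\z$. You then want to ``overlay'' an A-layer, but a single node~$v'_{\old}$ cannot separately cap $\sum_v\za(v',v)=\w(v')$ and $\sum_v\z(v',v)=\sold$; and you cannot impose a lower bound ``on the A-part'' of the flow into a winning district when the network does not distinguish A-flow from B-flow. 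Even if you could, the standard lower-bound elimination adds an auxiliary source and sink, so you would get $2|V(G)|+4$ nodes, not the claimed $2|V(G)|+2$.

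The paper's split is semantic rather than directional, and it removes both the unspecified gadget and the lower bounds in one stroke. Each dissolved~$d$ becomes $d^A,d^B$ with source-arc capacities $\w(d)$ and $\sold-\w(d)$; each remaining~$r$ becomes $r^A,r^{AB}$ with target-arc capacities $\demand(r)$ and $\Deltas-\demand(r)$, where $\demand(r)=\max\{0,\lceil(\snew+1)/2\rceil-\w(r)\}$ if $r\in\Vwin$ and $\demand(r)=0$ otherwise. The three middle arcs per crossing edge are $d^A\to r^A$, $d^A\to r^{AB}$, and $d^B\to r^{AB}$. Then Property~\ref{prop:d}) is the source split, Property~\ref{prop:e}) is the target split (saturation, not a lower bound), and Property~\ref{prop:c}) holds because B-supply can only reach $r^{AB}$, so the B-contribution to~$\z$ is automatically nonnegative. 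This yields exactly the $2|V(G)|+2$ nodes and $2|V(G)|+3|E(G)|$ arcs in the statement; your closing paragraph is circling this idea but never lands on it.
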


\begin{proof}
 Denote the set of remaining districts by $\Vrem$, that is, $\Vrem:=V(G) \setminus \Vdis$.
 With $\Vwin\subseteq \Vrem$ given beforehand,
 we can compute how many \partyAsupporters 
 a district~$v \in \Vwin$ needs from its neighboring dissolved districts
 in order to win after the dissolution. 
 With also $\Vdis$ given beforehand,
 we can use a flow network with two nodes corresponding to each district
 to compute an \biaseddissol{}.
 
  To this end, we first remove all edges between two vertices from $\Vdis$ or between
  two vertices from $\Vrem$ since only edges between $\Vdis$ and $\Vrem$ may be taken into
  account for the dissolution.
  Doing this, we obtain a bipartite neighborhood graph with the
  two disjoint vertex sets~$\Vdis=\{d_1,\dots,d_k\}$ and $\Vrem=\{r_1,\dots,r_{n-k}\}$. 
  Second, we observe that, in order to let a district~$r\in \Vrem$ win after the dissolution,
  $r$~needs at least $\max\{0, \lceil(\snew+1)/2\rceil - \w(r)\}$ additional \partyAsupporters. 
  Hence, we compute a ``demand'' function~$\demand: \Vrem \to \{0,\ldots,\lceil(\snew+1)/2\rceil\}$ for each non-dissolved district~$r$ by $\demand(r) := \max\{0, \lceil(\snew+1)/2\rceil - \w(r)\}$ if $r \in \Vwin$ and $\demand(r) := 0$ otherwise.
  
 The idea now is to construct a flow network which models the movement of \partyAsupporters
 that are necessary for a district in $\Vwin$ to win and models the movement of the remaining
 voters necessary to end up with district size~$\snew$ separately.
 More precisely, we split each $d \in \Vdis$ into a node~$d^{A}$, modeling the supply of
 \partyAsupporters from~$d$, and into a node~$d^{B}$, modeling the supply of the \partyBsupporters from~$d$.
 Similarly, we split each $r \in \Vrem$ into a node~$r^{A}$, modeling the demand for
 \partyAsupporters for $r$, and into a node~$r^{AB}$, modeling the remaining demand for voters,
 that is, voters to finally end up with district size~$\snew$.
 Now, following the constraints given by the neighborhood graph, \partyAsupporters may move
 in order to satisfy some demand for \partyAsupporters or in order to satisfy the general demand
 on voters.
 Clearly, \partyBsupporters may only move in order to satisfy the general demand on voters.
  
\tikzstyle{stnode}=[circle,fill=black!25,minimum size=17pt,inner sep=0pt]
\tikzstyle{disnode}=[circle,fill=black!25,minimum size=17pt,inner sep=0pt]
\tikzstyle{remnode}=[circle,fill=black!25,minimum size=17pt,inner sep=0pt]
\tikzstyle{arc}=[thick,->,shorten <= 0pt, shorten >= 0pt,sloped]
\tikzstyle{edge}=[thick,-]
\tikzstyle{edgecap}=[scale=.8, midway,fill=white]

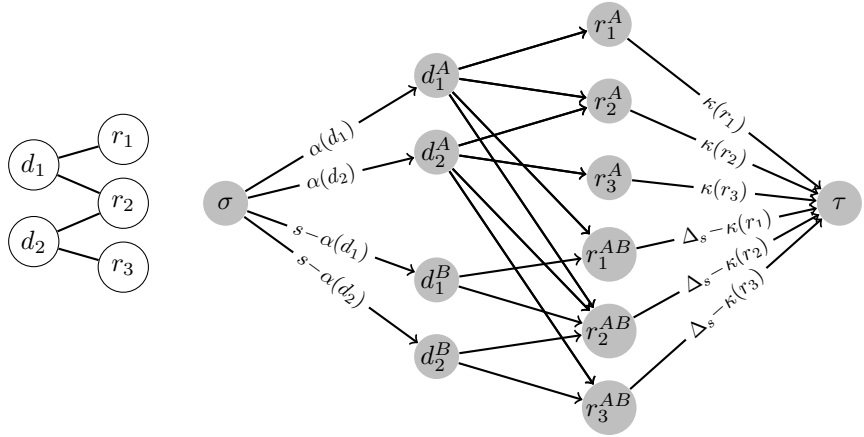
\begin{figure}[t!]
  \centering
 \begin{tikzpicture}[scale=0.85, node distance=\layersep]
  \tikzstyle{every pin edge}=[<-,shorten <=1pt]
  \def\layersep{3}
  \begin{scope}
    \tikzstyle{origdisnode}=[draw, circle, minimum size=19pt, inner sep=0pt]
    \tikzstyle{origremnode}=[draw, circle, minimum size=19pt, inner sep=0pt]
    \node[origdisnode] (origdis-1) at (0,0.6) {$d_1$};
    \node[origdisnode] (origdis-2) at (0,-0.6) {$d_2$};
    
    \node[origdisnode] (origrem-1) at (1.4,1) {$r_1$};
    \node[origdisnode] (origrem-2) at (1.4,0) {$r_2$};
    \node[origdisnode] (origrem-3) at (1.4,-1) {$r_3$};

    \draw[edge] (origdis-1) -- (origrem-1);
    \draw[edge] (origdis-1) -- (origrem-2);
    \draw[edge] (origdis-2) -- (origrem-2);
    \draw[edge] (origdis-2) -- (origrem-3);
  \end{scope}
  
  \begin{scope}[xshift=3cm]
  \node[stnode] (s) at (0,0) {$\networks$};
  \node[stnode] (t) at (\layersep*2+3.6,0) {$\networkt$};

  \foreach \i in {1,2}{
    \node[disnode] (vdisA-\i) at (3.3, -1.2*\i+3.2) {$d_{\i}^{A}$};
    \node[disnode] (vdis-\i) at (3.3, -1.2*\i+0.0) {$d_{\i}^{B}$};
  }

  \foreach \i/\j in {1/1,2/2,3/3}{
    \node[remnode] (vremA-\i) at (6.0,-1.2*\i+4.0) {$r_{\j}^{A}$};
    \node[remnode] (vrem-\i) at (6.0,-1.2*\i+0.4) {$r_{\j}^{AB}$};
  }

  \foreach \target in {1,2} {
    \draw[arc] (s) -- node[edgecap] {$\w(d_{\target})$} (vdisA-\target);
    \draw[arc] (s) -- node[edgecap] {$\sold\!-\!\w(d_{\target})$} (vdis-\target);
  }
  
  \foreach \source/\j in {1/1,2/2,3/3} {
    \draw[arc] (vremA-\source) -- node[edgecap] {$\demand(r_{\j})$} (t);
    \draw[arc] (vrem-\source) -- node[edgecap] {$\Deltas\!-\!\demand(r_{\j})$} (t);
  }
  
  \foreach \i/\j in {1/1,1/2,2/2,2/3} {
   \draw[arc] (vdisA-\i) -- (vremA-\j);
   \draw[arc] (vdisA-\i) -- (vrem-\j);
   \draw[arc] (vdisA-\i) -- (vremA-\j);
   \draw[arc] (vdisA-\i) -- (vrem-\j);
   \draw[arc] (vdis-\i) -- (vrem-\j);
  }
  \end{scope}
\end{tikzpicture}
\caption{An illustration of the network flow construction. 
Left: the graph~$G$ of an instance of \probWinner with $\Vdis=\{d_1,d_2\}$. 
Right: the corresponding network flow. The capacities of the arcs from dissolved nodes to non-dissolved nodes 
are omitted for the sake of brevity.}
\label{fig:flow network for known Vdis+Vwin}
\end{figure}

  Formally, we construct a flow network~$\networkI=\networkInst$ for our input instance
  $(G,\sold,\Deltas,\kwin,\w)$ as follows (see \cref{fig:flow network for known Vdis+Vwin}
  for an illustration).
  The node set~$\networkV$ in $\networkG$ consists of a source node~$\networks$, a target node~$\networkt$, two nodes $d_{i}^{A}$ and $d_{i}^{B}$ for each district~$d_{i} \in \Vdis$,
  and two nodes $r_{i}^{A}$ and $r_{i}^{AB}$ for each district~$r_{i} \in \Vrem$.
  In total, $\networkV$ has $2|V|+2$ nodes.

  The arcs in $\networkE$ are divided into three layers:
  \begin{enumerate}
   \item Arcs from the source node to all nodes corresponding to dissolved districts:
         For each dissolved district~$d_i \in \Vdis$, add to $\networkE$ two arcs~$(\networks,d_{i}^{A})$ and $(\networks, d_{i}^{B})$ with capacities~$\networkCap(\networks,d_{i}^{A})=\w(d_i)$ and $\networkCap(\networks,d_{i}^{B})= \sold - \w(d_i)$.
   \item Arcs from the nodes corresponding to dissolved districts
         to nodes corresponding to non-dissolved neighbored districts:
         For each dissolved district~$d_i \in \Vdis$ and for each $r_j\in N(d_i)$
         of its non-dissolved neighbors, add to $\networkE$ three arcs
         $(d_{i}^{A}, r_{j}^{A})$, $(d_{i}, r_{j}^{AB})$, and
         $(d_{i}^{B},r_{j}^{AB})$ with capacities
         $\networkCap(d_{i}^{A}, r_{j}^{A}) =
          \networkCap(d_{i}^{A}, r_{j}^{AB}) = \w(d_i)$
         and $\networkCap(d_{i}^{B},r_{j}^{AB})=\sold-\w(d_i)$.
   \item Arcs from all non-dissolved nodes to the target node:
         For each non-dissolved district~$r_{j}\in \Vrem$, 
         add to $\networkE$ two arcs~$(r_{j}^{A}, \networkt)$
         and $(r_{j}^{AB}, \networkt)$ with capacities
         $\networkCap(r_{j}^{A}, \networkt)=\demand(r_j)$
         and $\networkCap(r_{j}^{AB}, \networkt)=\Deltas-\demand(r_j)$.
  \end{enumerate} 
  This completes the description of the flow network construction. 

\begin{figure}[t!]
  \centering
\begin{tikzpicture}[scale=0.9]
  \begin{scope}
    \node[disnode] at (0,0) (vdisA) {$d_{i}^{A}$};
    \node[remnode, right = 2.1cm of vdisA] (vremA) {$r_{j}^{A}$};
    \node[remnode, below = 0.53 of vremA] (vrem) {$r_{j}^{AB}$};
    
    \draw[arc] (vdisA) -- node[edgecap] {$\za(d_i, r_j)$} (vremA);
    \draw[arc] (vdisA) -- node[edgecap] {$0$} (vrem);
  \end{scope}

  \begin{scope}[xshift=4.6cm]
    \node[disnode] at (0,0) (vdisA) {$d_{i}^{A}$};
    \node[remnode, right = 2.1cm of vdisA] (vremA) {$r_{j}^{A}$};
    \node[remnode, below = 0.53 of vremA] (vrem) {$r_{j}^{AB}$};
    
    \draw[arc] (vdisA) -- node[edgecap] {$\delta$} (vremA);
    \draw[arc] (vdisA) -- node[edgecap] {$\za(d_{i}, r_j)-\delta$} (vrem);
  \end{scope}
\end{tikzpicture}
\caption{Two cases of setting the flow values for arcs outgoing from $d_i^A$ nodes.
 Left: The sum of the flow through arcs to $r_{j}^A$ is at most $\demand(r_j)-\za(d_i, r_j)$.
 Right: The sum of the flow through arcs to $r_{j}^A$ is $\demand(r_j)-\delta$ where
 $0\le\delta<\za(d_i, r_j)$. 
}
\label{fig:flow values assignment}
\end{figure}
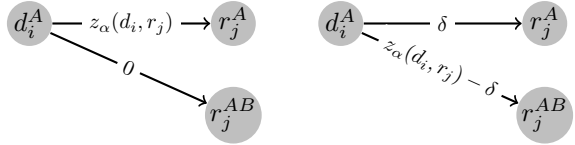

  We show that there is an \biaseddissol~$(\Vdis, \z, \za, \Vwin)$ for $(G,\w)$
  if and only if the constructed flow network $\networkI$ has a
  $(\networks,\networkt)$-flow of value~$\sold\cdot|\Vdis|$. 

  For the ``only if'' part, suppose that there is a dissolution $(\Vdis,\z,\za,\Vwin)$
  for~$(G,\w)$. 
  Construct a $(\networks,\networkt)$-flow~$f: \networkE \to \mathbb{R}$ 
  by defining $f(\networks,d_i^{A}) := \networkCap(\networks,d_i^{A})$
  and $f(\networks,d_i^{B}) := \networkCap(\networks,d_i^{B})$,
  where $d_i$ is a dissolved district.
  Then, define $f(r_j^{A},\networkt) := \networkCap(r_j^{A},\networkt)$
  and $f(r_j^{AB},\networkt) := \networkCap(r_j^{AB},\networkt)$,
  where $r_j$ is a non-dissolved district.
  Note that by definition of the network, this means that
  $f(\networks,d_i^{A})=\w(d_i)$ and $f(\networks,d_i^{B})=s-\w(d_i)$,
  where $d_i$ is a dissolved district, as well as that
  $f(r_j^{A},\networkt) = \demand(r_j)$ and $f(r_j^{AB},\networkt) = \Deltas - \demand(r_j)$,
  where $r_j$ is a non-dissolved district. 
  It remains to define the values of~$f$ for the arcs in layer~2.
  For each dissolved district $d_i \in \Vdis$ and for each~$r_j \in N(d_i)$
  of its non-dissolved neighbors, define
  $f(d_{i}^{B}, r_{j}^{AB}) := \z(d_i, r_j) - \za(d_i, r_j)$.
  To define also the flow values for arcs outgoing from a node~$d_i^{A}, 1 \le i \le k$, 
  we use the following procedure, where we remember in each step the total
  amount~$u(r_j^A)$ of flow going into $r_j^A$.
  We initialize~$u$ by setting~$u(r_j^A):=0$ for each~$r_j \in \Vrem$.
  Now, process all pairs $(d_i,r_j)$ with $d_i \in \Vdis$ and $r_j \in N(d_i)$
  in an arbitrary ordering, where the following two cases may occur
  (illustrated in \cref{fig:flow values assignment}).
  \begin{description}
   \item[Case 1.] 
    If $u(r_j^A)+\za(d_i,r_j)\le\demand(r_j)$, then
    increase $u(r_j^A)$ by $\za(d_i,r_j)$ and
    set $f(d_i^A,r_j^A):=\za(d_i,r_j)$ and $f(d_i^A,r_j^{AB}):=0$.
   \item[Case 2.] 
    If $u(r_j^A)+\delta=\demand(r_j)$ for some non-negative integer $\delta<\za(d_i,r_j)$, then
    increase $u(r_j^A)$ by~$\delta$ and
    set $f(d_i^A,r_j^A):=\delta$ and $f(d_i^A,r_j^{AB}):=\za(d_i,r_j)-\delta$.
  \end{description}
  Now, observe that by our definition of~$f$
  the flow value is $\sum_{(s,x)\in\networkE}{f(s,x)} = \sold \cdot |\Vdis|$.
  It remains to show that $f$ is valid.
  By our definition of~$f$, the flow value of each arc does not exceed its capacity.
  For each $d_i \in \Vdis$, the conservation property for the nodes $d_i^{A}$ and $d_i^{B}$
  is fulfilled by Property~\ref{prop:a}) (\cref{def:dissolution}) and~\ref{prop:d}) (\cref{def:biased}) of the \biaseddissolution.
  For each $r_j \in \Vrem$, the conservation property for the node $r_j^{A}$ is fulfilled by
  our definition of~$f$ (which ensures that the ingoing flow is at most $\demand(r_j)$) 
  and by Property~\ref{prop:e}) (\cref{def:biased}) of the \biaseddissolution
  (which ensures that the ingoing flow is at least $\demand(r_j)$).
  The conservation property for the node $r_j^{AB}$ is fulfilled by Property~\ref{prop:c})
  and~\ref{prop:e}) (\cref{def:biased}) of the \biaseddissolution (and the way we defined~$f$).
  
  For the ``if'' part, suppose that $f$ is a $(\networks,\networkt)$-flow for $\networkI$ with value $\sold \cdot |\Vdis|$. 
  Let $\za: \A{\Vdis,G} \to \{0,\ldots,\sold\}$ and $\z: \A{\Vdis,G} \to \{0,\ldots, \sold\}$ be two functions with values~$\za(d_i, r_j):=f(d_{i}^{A}, r_{j}^{A}) + f(d_{i}^{A},r_{j}^{AB})$ 
  and $\z(d_i,r_j):=\za(d_i,r_j)+f(d_{i}^{B},r_{j}^{AB})$. 
  One can verify that $(\Vdis, \z, \za, \Vwin)$ is an $\kwin$-biased $(\sold,\Deltas)$-dissolution for~$(G,\w)$ as follows:
  Property~\ref{prop:a}) (\cref{def:dissolution}) is fulfilled since the total flow going over $d_{i}^{A}$
  and $d_{i}^{B}$ has value exactly~$\sold$.
  Property~\ref{prop:b}) (\cref{def:dissolution}) is fulfilled since the total flow going over $r_{j}^{A}$
  and $r_{j}^{AB}$ has value exactly~$\Deltas$.
  Property~\ref{prop:c}) (\cref{def:biased}) is fulfilled by our definition of $\z$ and $\za$.
  Property~\ref{prop:d}) (\cref{def:biased}) is fulfilled since the total flow going over $d_{i}^{A}$
  is $\w(d_i)$.
  Property~\ref{prop:e}) (\cref{def:biased}) is fulfilled since the total flow going over $r_{j}^{A}$
  is $\demand(r_j)$.
\end{proof}

\noindent The following corollary shows that plain \dissolution{}s can be modeled using a much simpler
flow network in comparison to \biaseddissolution{}s.
In particular, all capacity values are either~$\sold$ or~$\Deltas$---a property
which will be important in later proofs.

\begin{corollary}
\label[corollary]{cor:flow}
Let $G$ be a graph and let $\Vdis\subset V(G)$ be a subset of vertices.
If there exists an \dissol~$(D,z)$ for $G$, then it can be found by computing the 
maximum flow in a network with $|V(G)|+2$~nodes and $|E(G)|+2|V(G)|$~arcs where all capacities 
are either $\sold$ or $\Deltas$.
\end{corollary}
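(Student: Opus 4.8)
The plan is to derive the statement as a simplification of the flow network built for \cref{thm:P-Vdis+Vwin-known}, specialized to the case $\kwin=0$, where \probWinner coincides with \probValid and the \partyAsupporter distribution is irrelevant. With $\kwin=0$ one may take $\Vwin=\emptyset$, so the demand function is identically $0$; then every arc $(r_j^{A},\networkt)$ has capacity $0$ and the nodes $r_j^{A}$ (together with all incident arcs) can be removed. After that, for each dissolved district the nodes $d_i^{A},d_i^{B}$ merely forward flow from $\networks$ to nodes $r_j^{AB}$, so they merge into one node $d_i$, and each $r_j^{AB}$ becomes one node $r_j$. This collapses the $2|V(G)|+2$ nodes of \cref{thm:P-Vdis+Vwin-known} to $|V(G)|+2$, with merged source-, target-, and middle-arc capacities $\sold$, $\Deltas$, and $\sold$.

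Equivalently --- and this is how I would present it --- one constructs the network directly: a source $\networks$, a target $\networkt$, and one node per district of $G$; an arc $(\networks,d_i)$ of capacity $\sold$ for every $d_i\in\Vdis$; an arc $(r_j,\networkt)$ of capacity $\Deltas$ for every $r_j\in\Vrem:=V(G)\setminus\Vdis$; and an arc $(d_i,r_j)$ of capacity $\sold$ for every edge $\{d_i,r_j\}\in E(G)$ with $d_i\in\Vdis$ and $r_j\in\Vrem$ (capacity $\Deltas$ would serve equally well, since a remaining district absorbs only $\Deltas$ voters in total). This gives $|V(G)|+2$ nodes, $|\Vdis|+|\Vrem|+|E(G)|\le 2|V(G)|+|E(G)|$ arcs, and all capacities in $\{\sold,\Deltas\}$, as claimed. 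I would then establish the correspondence: $G$ has an \dissol $(\Vdis,\z)$ if and only if the network has an integer $(\networks,\networkt)$-flow of value $\sold\cdot|\Vdis|$, and from such a flow one reads off a \dissol in linear time via $\z(d_i,r_j):=f(d_i,r_j)$. For the forward direction, saturate all source and target arcs and place $\z(d_i,r_j)$ on the middle arc $(d_i,r_j)$: conservation at $d_i$ is Property~\ref{prop:a}) of \cref{def:dissolution}, conservation at $r_j$ is Property~\ref{prop:b}), and all capacity constraints hold because $\z$ takes values in $\{0,\dots,\sold\}$. For the converse, a flow of value $\sold\cdot|\Vdis|$ must saturate every source arc, so the total flow arriving at $\Vrem$ equals $\sold\cdot|\Vdis|$; this matches the total target-arc capacity $\Deltas\cdot|\Vrem|$, which forces every target arc to be saturated as well, so the recovered $\z$ meets Properties~\ref{prop:a}) and~\ref{prop:b}), and integrality of an integral maximum flow keeps $\z$ within $\{0,\dots,\sold\}$.

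The one step needing care --- and the only place this is not a verbatim specialization of \cref{thm:P-Vdis+Vwin-known} --- is the balance identity $\sold\cdot|\Vdis|=\Deltas\cdot|\Vrem|$ invoked in the converse; it is precisely what makes a flow of the prescribed value enough to force Property~\ref{prop:b}). It holds whenever a \dissol with dissolved set $\Vdis$ exists, since then $|\Vdis|=|V(G)|\cdot\Deltas/\snew$ by \cref{def:dissolution}, so one may simply restrict attention to such $\Vdis$ (the only ones relevant to the statement) or add this size check as trivial preprocessing. Beyond this, the argument is a routine thinning-out of the proof of \cref{thm:P-Vdis+Vwin-known}, and I expect no real obstacle.
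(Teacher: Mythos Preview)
Your proposal is correct and follows essentially the same approach as the paper: specialize the flow network of \cref{thm:P-Vdis+Vwin-known} to the case $\kwin=0$ (equivalently $\Vwin=\emptyset$, $\w\equiv 0$), prune the resulting zero-capacity arcs, and collapse each district's two nodes into one, arriving at the same two-layer network with capacities in $\{\sold,\Deltas\}$. Your write-up is in fact more thorough than the paper's, which only describes the simplified network and leaves the correctness correspondence implicit from \cref{thm:P-Vdis+Vwin-known}; your explicit forward/converse argument and the remark on the balance identity $\sold\cdot|\Vdis|=\Deltas\cdot|\Vrem|$ are welcome additions but not a different route.
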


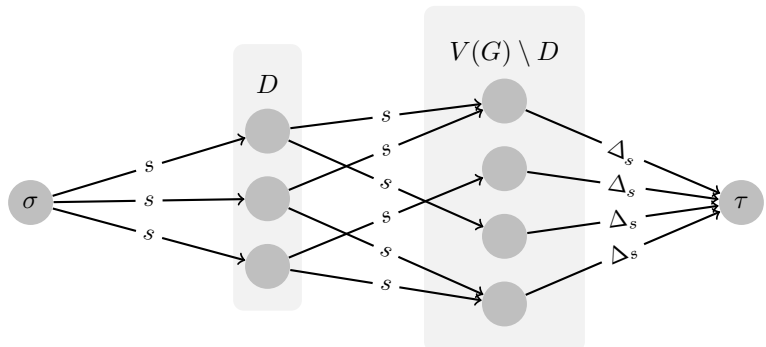
\begin{figure}[t]
 \centering
 \def\layersep{3.5}
 \begin{tikzpicture}[scale=0.9, node distance=\layersep]
  \tikzstyle{neuron}=[circle,fill=black!25,minimum size=17pt,inner sep=0pt]
  \tikzstyle{arc}=[thick,->,shorten <= 0pt, shorten >= 0pt,sloped]
  \tikzstyle{edge}=[thick,-]
  \tikzstyle{edgecap}=[scale=.9, midway,fill=white]

  \node[neuron] (s) at (0-\layersep,-1.5) {\networks};

  \node[yshift=0.5cm] (V-0) at (0,-0.3) {$\Vdis$};
  \foreach \y in {1,2,3}
   \node[yshift=0.5cm,neuron] (V-\y) at (0,-\y) {};%

  \node (U-0) at (\layersep,0.7) {$V(G) \setminus \Vdis$};
  \foreach \name / \y in {1,2,3,4}
   \node[neuron] (U-\name) at (\layersep,1-\y) {};%

  \node[neuron] (t) at (2*\layersep,-1.5) {\networkt};

  \foreach \source in {1,2,3}   
   \draw[arc] (s) -- node[edgecap] {$\sold$} (V-\source);
  \foreach \source / \dest in {1/1,1/3,2/1,3/2,3/4,2/4}
   \draw[arc] (V-\source) -- node[edgecap] {$\sold$} (U-\dest);
  \foreach \source in {1,2,3,4}
   \draw[arc] (U-\source) -- node[edgecap] {$\Deltas$} (t);

  \begin{pgfonlayer}{background}
   \node (V_fit) [draw=none,rounded corners,fill=gray!10,fit=(V-0) (V-3)] {};
   \node (U_fit) [draw=none,rounded corners,fill=gray!10,fit=(U-0) (U-4)] {};
  \end{pgfonlayer}
 \end{tikzpicture}
 \caption{Flow network for \probValid when the set~$\Vdis$ of districts to dissolve is known.}
\label{fig:simpleFlowExample}%
\end{figure}

\begin{proof}
 If the districts to dissolve are known and we only search for a \dissolution
 (in other words, $\kwin=0$),
 then the flow network used to compute a \dissolution from the proof of \cref{thm:P-Vdis+Vwin-known}
 basically reduces to a much simpler flow network.
 For this case, we can assume that $\Vwin=\emptyset$ and $\w(v)=0$ for all $v \in V(G)$,
 remove all arcs with capacity zero, and finally
 also remove nodes without a directed path from the source or to the sink.
 
 Doing this, we end up with the following:
 We have a source~$\networks$ and a sink~$\networkt$ and two additional layers of nodes:
 the first layer contains one node for each vertex from~$\Vdis$ and
 the second layer contains one node for each vertex from~$V(G) \setminus \Vdis$.
 There is an arc from the source~$\networks$ to each node in the first layer with capacity~$\sold$ and
 an arc from each node in the second layer to the sink~$\networkt$ with capacity~$\Deltas$.
 Finally, there is an arc of capacity~$\sold$ from a node in the first layer to a node in the second layer
 if and only if the corresponding vertices in the neighborhood graph~$G$ are adjacent.
 See \cref{fig:simpleFlowExample} for an illustration.
\end{proof}

\noindent Contrasting the \pt{} solvability when $\Vdis$ and $\Vwin$ are known,
we obtain \NP-completeness for \probWinner once at least one of the two
sets~$\Vdis$ and~$\Vwin$ is unknown.
\probValid is the special case of \probWinner with $\Vwin=\emptyset$
and we will see in \cref{sec:dissolution-dichotomy} that \probValid
is \NP-complete for the case
of $\sold \neq \Deltas$ (\cref{thm:dissolution-dichotomy}).
This means that \probWinner is \NP-hard even if the set~$\Vwin$ is known to be empty. 
For the case that only the set~$\Vdis$ of dissolved districts is given
beforehand, it remains to decide how many \partyAsupporters are
\moved to a certain non-dissolved district.
We will see in \cref{sec:biaseddissolution-dichotomy}, however, that in the hardness
construction for \cref{thm:biaseddissolution-dichotomy}
it is already fixed which districts are to be dissolved.
This means that \probWinner is \NP-hard even if the set~$\Vdis$ of dissolved districts is given
beforehand.
Summarizing, \probWinner is \NP-hard even if either the set~$\Vdis$ of districts
to dissolve or the set~$\Vwin$ of districts to win is known.

With the help of the above flow network construction from \cref{thm:P-Vdis+Vwin-known},
we can design an exact algorithm for \probWinner that runs in polynomial time when the number of districts is a constant.
Since the degree of the polynomial does not depend on the number of districts, this
means fixed-parameter tractability with respect to the number of districts
(see \cite{DF13,FG06,Nie06} for details on fixed-parameter tractability).

\begin{corollary}
\label[corollary]{cor:exact-algorithm}
Any instance $(G,\sold, \Deltas,\w)$ of \probWinner can be solved in 
$O(3^{|V(G)|} \cdot (\max(\sold, \Deltas) \cdot |V(G)|\cdot|E(G)| + |V(G)|^3) )$ time.
\end{corollary}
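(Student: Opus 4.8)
The plan is to reduce to the special case that \cref{thm:P-Vdis+Vwin-known} already solves, in which both $\Vdis$ and $\Vwin$ are fixed, by brute-forcing over all choices of these two sets simultaneously. The observation that keeps the base of the exponential at~$3$ rather than~$4$ is that $\Vdis$ and $\Vwin$ are disjoint, so each vertex of~$G$ has exactly one of three roles: it is a dissolved district (a member of~$\Vdis$), a district required to be won (a member of~$\Vwin$), or a remaining district on which we impose no further condition. Hence it suffices to try the $3^{|V(G)|}$ ordered triples $(\Vdis,\Vwin,V(G)\setminus(\Vdis\cup\Vwin))$; note that we need not distinguish a remaining district that happens to end up winning from one that does not, since \cref{def:biased} constrains only the districts in~$\Vwin$. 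For each triple, the algorithm first checks the necessary cardinality conditions $|\Vdis| = |V(G)|\cdot\Deltas/\snew$ and $|\Vwin| = \kwin$ — if $\snew$ does not divide $|V(G)|\cdot\Deltas$ there is no \dissolution, and we may answer ``no'' at once — and, if they hold, it invokes the flow construction of \cref{thm:P-Vdis+Vwin-known} for the fixed pair $(\Vdis,\Vwin)$ and tests whether the resulting network admits an $(\networks,\networkt)$-flow of value $\sold\cdot|\Vdis|$. The algorithm answers ``yes'' if and only if this test succeeds for at least one triple.

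Correctness is immediate from \cref{thm:P-Vdis+Vwin-known}: if the flow test succeeds for some triple, that theorem yields an \biaseddissolution~$(\Vdis,\z,\za,\Vwin)$ with exactly $\kwin$ winning districts, so $(G,\w)$ is a yes-instance; conversely, if $(\Vdis,\z,\za,\Vwin)$ is an \biaseddissolution, then $|\Vdis| = |V(G)|\cdot\Deltas/\snew$ by \cref{def:dissolution} and $|\Vwin| = \kwin$ by \cref{def:biased}, so the triple induced by $(\Vdis,\Vwin)$ passes the cardinality check and, again by \cref{thm:P-Vdis+Vwin-known}, the network built for it carries an $(\networks,\networkt)$-flow of value $\sold\cdot|\Vdis|$, so the test succeeds.

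It remains to bound the running time. There are $3^{|V(G)|}$ triples, and for each one the cardinality checks, the computation of the demand function~$\demand$, and the construction of the network from \cref{thm:P-Vdis+Vwin-known} cost $O(|V(G)|+|E(G)|)$ time. That network has $O(|V(G)|)$ nodes, $O(|V(G)|+|E(G)|)$ arcs, and integer arc capacities at most $\max(\sold,\Deltas)$, and every feasible flow has value at most $\sold\cdot|\Vdis|\le\max(\sold,\Deltas)\cdot|V(G)|$. Hence a maximum integer flow can be found within $O(\max(\sold,\Deltas)\cdot|V(G)|\cdot|E(G)| + |V(G)|^{3})$ time by a standard maximum-flow routine: running an augmenting-path algorithm component by component (after first discarding any instance containing an isolated vertex, which admits no \dissolution, as $\Deltas\ge1$; in the remaining instances $|V(G)|\le 2|E(G)|$) gives the $O(\max(\sold,\Deltas)\cdot|V(G)|\cdot|E(G)|)$ term, while a strongly polynomial algorithm on the $O(|V(G)|)$-node network gives the $O(|V(G)|^{3})$ term and covers the case that $\max(\sold,\Deltas)$ is large. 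Multiplying the per-triple cost by $3^{|V(G)|}$ yields the claimed bound.

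Apart from invoking \cref{thm:P-Vdis+Vwin-known}, the argument is routine and I expect no genuine obstacle. The one point worth emphasizing is the three-roles observation, which is exactly what avoids the naive $4^{|V(G)|}$ bound that independent enumeration of $\Vdis$ and $\Vwin$ would give; the only care needed afterwards is the bookkeeping for the max-flow subroutine and a quick re-reading of \cref{thm:P-Vdis+Vwin-known} to confirm that it really requires only the pair $(\Vdis,\Vwin)$ to be fixed in advance, and not, say, the movement~$\za$ of \partyAsupporters.
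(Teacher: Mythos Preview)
Your proposal is correct and follows essentially the same approach as the paper: enumerate the $3^{|V(G)|}$ assignments of the three roles (dissolved, winning, remaining) and invoke \cref{thm:P-Vdis+Vwin-known} for each. You supply a bit more detail than the paper on the cardinality checks and on why the two max-flow terms arise, but the argument is the same.
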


\begin{proof}
Since each district will either be dissolved, won, or lost, there are at most 
$3^{|V(G)|}$ different ways to fix the roles of all $|V(G)|$~districts.
In each case, we can construct a flow network with $O(|V(G)|)$ nodes 
and maximum capacity~$\max(\sold,\Deltas)$ in 
$O(\max(\sold,\Deltas)\cdot|V(G)|\cdot|E(G)|)$ time and compute the maximum flow 
(\cref{thm:P-Vdis+Vwin-known}) to solve \probWinner.  
Hence, by using an $O(|V(G)|^3)$-time maximum flow algorithm we solve 
\probWinner in $O(3^{|V(G)|}(\max(\sold,\Deltas)\cdot|V(G)|\cdot|E(G)|+|V(G)|^3))$ time.
\end{proof}

\subsection{Relation to Star Partition and Matching}
\label{sec:Relation2Starmatching}
In this subsection, we analyze how \dissolution{}s relate to star partitioning
and matching.
If $\Deltas=1$, then each non-dissolved district receives exactly one additional
voter from one of its neighboring districts.
Each dissolved district has to move exactly one voter to each of $\sold$~neighboring districts.
Hence, it is easy to see that a graph has an \Sdissol{s}{1} if and only
if it has an \starpartition{s}.

Using the flow construction from \cref{cor:flow}, we can even show that this equivalence
to star partition generalizes to the case that~$\sold$ is an integer multiple of any~$\Deltas$.

\newcommand{\stardissol}[1]{\Sdissol{#1\cdot\Deltas}{\Deltas}}
\begin{proposition}
\label[proposition]{prop:starpacking-relation}
There exists a \stardissol{t} for an undirected graph~$G$ if and only if~$G$~has a \starpartition{t}.
\end{proposition}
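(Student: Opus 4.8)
The plan is to prove both directions fairly directly, reserving a Hall-type argument for the harder one. Note first that whenever an \Sdissol{\sold}{\Deltas} exists, conservation of voters forces $\krem = n\sold/\snew$ and $\kdis = n\Deltas/\snew$; with $\sold = t\Deltas$ (so $\snew=(t+1)\Deltas$) this gives $|\Vdis| = n/(t+1)$ and $|\Vrem| = tn/(t+1)$, hence $t|\Vdis| = |\Vrem|$. This identity will be the bookkeeping that makes the argument go through, and it also matches the number of parts in a \starpartition{t}.

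For the ``if'' direction, I would start from a \starpartition{t} $\{V_1,\dots,V_{n/(t+1)}\}$, where each $V_i$ consists of a center $c_i$ adjacent to its $t$ leaves. Set $\Vdis := \{c_1,\dots,c_{n/(t+1)}\}$ and define $\z$ by sending $\Deltas$ voters from each center $c_i$ along the arc to each of its $t$ leaves, and $0$ along every other arc of $\A{\Vdis,G}$ (all such arcs and endpoints are legal: a leaf of $V_i$ is distinct from every center, and adjacent to $c_i$ in $G$). Property~\ref{prop:a}) holds because every center ships out $t\Deltas = \sold$ voters in total; Property~\ref{prop:b}) holds because every non-dissolved vertex is a leaf of exactly one star and therefore receives exactly $\Deltas$ voters, namely from that one star's center. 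Hence $(\Vdis,\z)$ is a \stardissol{t}. This direction is a routine verification; it does not even need \cref{cor:flow}.

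For the ``only if'' direction, let $(\Vdis,\z)$ be a \stardissol{t} and let $H$ be the bipartite graph between $\Vdis$ and $\Vrem := V(G)\setminus\Vdis$ keeping exactly the $G$-edges between the two sides (the bipartite neighborhood graph from \cref{cor:flow}). The key step is to establish Hall's condition in the ``$t$-fold blow-up'' of $H$, where each $d\in\Vdis$ is replaced by $t$ copies with the same neighborhood: for any $S\subseteq\Vdis$, the $\sold|S| = t\Deltas|S|$ voters leaving $S$ all land in $N_H(S)$ (Property~\ref{prop:a})), while each vertex of $\Vrem$ absorbs at most $\Deltas$ voters in total (Property~\ref{prop:b})), so $t\Deltas|S| \le \Deltas\cdot|N_H(S)|$, i.e.\ $|N_H(S)|\ge t|S|$ — which is exactly Hall's condition for the blow-up. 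By Hall's theorem there is a matching saturating all $t|\Vdis|$ copies, and since $t|\Vdis| = |\Vrem|$ this matching is in fact perfect; it therefore assigns to each $r\in\Vrem$ exactly one $G$-neighbor $d(r)\in\Vdis$, with each $d\in\Vdis$ receiving exactly $t$ such vertices. Setting $V_d := \{d\}\cup\{r\in\Vrem : d(r)=d\}$ for each $d\in\Vdis$ yields a partition of $V(G)$ into sets of size $t+1$, each of which induces a $t$-star centered at $d$; that is, a \starpartition{t}.

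I expect the Hall-condition step, together with the counting identity $t|\Vdis| = |\Vrem|$ that upgrades a $\Vdis$-saturating matching to a perfect one, to be the only real content; the property checks in the ``if'' direction are mechanical. As an alternative phrasing of the ``only if'' direction one could invoke \cref{cor:flow}: a \dissolution yields an integral $(\networks,\networkt)$-flow of value $\sold|\Vdis|$, and applying the max-flow/min-cut bound to the cut determined by a set $S\subseteq\Vdis$ and its neighborhood $N_H(S)$ recovers the same inequality $|N_H(S)|\ge t|S|$ — but deriving Hall's condition straight from Properties~\ref{prop:a}) and~\ref{prop:b}) is shorter and I would prefer it.
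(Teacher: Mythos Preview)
Your proof is correct. The ``if'' direction matches the paper's almost verbatim. For the ``only if'' direction, the paper takes a different route: it invokes the flow network of \cref{cor:flow}, divides every capacity by~$\Deltas$ (so all capacities become~$1$ or~$t$), observes that the existence of the \dissolution guarantees a flow of value $t|\Vdis|=|V\setminus\Vdis|$ in the scaled network, and then appeals to integrality of maximum flows with integer capacities to read off the star partition from the arcs carrying unit flow. Your Hall-type argument derives the same conclusion more elementarily, extracting the inequality $|N_H(S)|\ge t|S|$ directly from Properties~\ref{prop:a}) and~\ref{prop:b}) and then applying Hall's marriage theorem to the $t$-fold blow-up; the counting identity $t|\Vdis|=|\Vrem|$ upgrades the saturating matching to a perfect one. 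The two arguments are of course cousins (Hall is max-flow/min-cut in disguise), but yours is self-contained and avoids the integrality theorem, while the paper's is shorter in context because it recycles the flow construction already established. You correctly anticipate this alternative in your final paragraph.
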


\begin{proof}
  If $G=(V,E)$ can be partitioned into $t$-stars, then it is easy to see that
  there is a \stardissol{t} for $G$:
  Let $C=\{c_1,\dots,c_\kdis\} \subset V$ be the set of $t$-star centers and
  let $L_i \subset V, 1 \le i \le \kdis,$  be the set of leaves of the $i$-th star.
  Define function~$\z: \A{C,G} \to \{0,\dots,t\cdot\Deltas\}$
  so that, for all $(c_i,l)\in \A{C,G}$, $\z(c_i,l) := \Deltas$ if $l \in L_i$ and $\z(c_i,l) := 0$ otherwise.
  Obviously, $(C, \z)$ is a \stardissol{t} for~$G$.

  Now, let $(\Vdis,z)$ be a \stardissol{t} for~$G$.
  We show that $G$ can be partitioned into $t$-stars with $\Vdis$ being the $t$-star centers.
  To this end, consider the network flow constructed in \cref{cor:flow}
  and modify the network as follows.
  For each arc, divide its capacity by~$\Deltas$.
  Clearly, if there is a flow with value~$|\Vdis|\cdot t\cdot\Deltas=|V \setminus \Vdis| \cdot \Deltas$,
  then the modified network has a flow with value~$|\Vdis|\cdot t=|V \setminus \Vdis|$.
  As all capacities are integers, there exists a maximum flow~$f$ such that for each arc~$a$
  it holds that $f(a)$ is integer~\cite{AMO93}.
  Hence, a partition of~$G$ into $t$-stars consists of one star for each~$v_i \in \Vdis$
  such that $v_i$~is the star center connected to its leaves~$L_i=\{u \mid f(v_i,u)=1\}$.
\end{proof}

\noindent Since a \starpartition{t} with $t=1$ is a perfect matching, we obtain the following corollary.

\begin{corollary}
\label[corollary]{cor:dissolution-P}
There exists an \Sdissol{$\sold$}{$\sold$} for an undirected graph~$G$ if and only if~$G$~has a perfect matching.
\end{corollary}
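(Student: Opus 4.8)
The plan is to read off this corollary from \cref{prop:starpacking-relation} by taking $t=1$. That proposition states that $G$ has a \stardissol{t} if and only if $G$ has a \starpartition{t}, and \stardissol{t} is by definition an \Sdissol{t\cdot\Deltas}{\Deltas}. Specializing to $t=1$, the left-hand side becomes an \Sdissol{\Deltas}{\Deltas}, i.e.\ a dissolution in which the old district size coincides with the district size increase. Since $\Deltas$ ranges over all positive integers, renaming $\Deltas$ to $\sold$ yields the equivalence: $G$ has an \Sdissol{\sold}{\sold} if and only if $G$ has a \starpartition{1}.

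It then only remains to invoke the observation recorded in \cref{sect:prelim} that a $1$-star partition is exactly a perfect matching: a $1$-star is the single edge $K_{1,1}$, so a $1$-star partition of $G$ splits $V(G)$ into pairs each of which is an edge of $G$, which is precisely a perfect matching. Chaining the two equivalences proves the corollary.

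I do not anticipate any real difficulty here, since everything needed has already been established. The one subtlety --- that in an \Sdissol{\sold}{\sold} a dissolved district could a priori split its $\sold$ voters among several non-dissolved neighbors, so that the correspondence with a perfect matching is not purely formal --- is already handled inside the proof of \cref{prop:starpacking-relation} through the integrality of a maximum flow with integer capacities. Routing through that proposition lets us reuse this argument rather than redo it; a self-contained direct proof would essentially have to reproduce the same flow-integrality step.
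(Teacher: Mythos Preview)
Your proposal is correct and matches the paper's own proof essentially verbatim: the paper also derives the corollary immediately from \cref{prop:starpacking-relation} by taking $t=1$ and noting that a $1$-star partition is a perfect matching. Your additional remark about the flow-integrality step handling the non-trivial direction is accurate but goes beyond what the paper spells out here.
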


\section{Complexity dichotomy with respect to district sizes}
\label{sec:Dichotomy}
In this section, we study the computational complexity of \probValid and \probWinner
with respect to the relation of the district size~$\sold$ to the district size increase~$\Deltas$.
We show that \probValid is \pt{}
solvable if $\sold=\Deltas$, and \NP-complete
otherwise~(\cref{thm:dissolution-dichotomy}).
\probWinner is \pt{} solvable if $\sold=\Deltas=1$,
and \NP-complete otherwise~(\cref{thm:biaseddissolution-dichotomy}).

We start by showing a useful structural observation for \dissolution{}s.
More precisely, we observe a symmetry concerning the district
size~$\sold$ and the district size increase~$\Deltas$ in the sense
that exchanging their values yields an equivalent instance of
\probValid.
Intuitively, the idea behind the following lemma is that the roles of
dissolved and non-dissolved districts in a given~\dissol can in fact
be exchanged by ``reversing'' the movement of voters to obtain a~\Sdissol{\Deltas}{\sold}.

\begin{lemma}
\label[lemma]{lem:d-delta-mirror}
There exists an \dissol for an undirected graph~$G$ if and only if there 
exists a \Sdissol{\Deltas}{\sold} for~$G$.
\end{lemma}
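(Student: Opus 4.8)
The plan is to give an explicit bijection between \dissolution{}s and \Sdissol{\Deltas}{\sold}{}s of the same graph~$G$ by swapping the roles of dissolved and non-dissolved districts. Let $(\Vdis,\z)$ be an \dissol for~$G$, so $|\Vdis| = |V(G)|\cdot\Deltas/\snew$ and $|V(G)\setminus\Vdis| = |V(G)|\cdot\sold/\snew$. Observe first that if $\sold = \Deltas$ the statement is trivial, so assume otherwise; but in fact the construction works uniformly. I would set $\Vdis' := V(G)\setminus\Vdis$ as the new set of districts to dissolve, and for each arc $(x,y)\in\A{\Vdis',G}$ (so $x\notin\Vdis$, $y\in\Vdis$, $\{x,y\}\in E(G)$) define $\z'(x,y) := \Deltas - \z(y,x)$, where $\z(y,x)$ is well defined because $(y,x)\in\A{\Vdis,G}$. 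Note $\A{\Vdis',G}$ is exactly $\A{\Vdis,G}$ with every pair reversed, so this is well defined; and since $\z(y,x)\in\{0,\dots,\sold\}$, we need $\Deltas - \z(y,x)\in\{0,\dots,\Deltas\}$, which holds provided $\z(y,x)\le\Deltas$ — this is the one point that needs an argument (see below).

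The verification then runs through the two properties of \cref{def:dissolution} for $(\Vdis',\z')$ with the roles of $\sold$ and $\Deltas$ exchanged. Property~\ref{prop:a}) for $(\Vdis',\z')$ asks that each $x\in\Vdis'$ sends away exactly $\Deltas$ voters in total: $\sum_{(x,y)\in\A{\Vdis',G}}\z'(x,y) = \sum_{(y,x)\in\A{\Vdis,G}}(\Deltas - \z(y,x)) = \deg_B(x)\cdot\Deltas - \sum \z(y,x)$, where $\deg_B(x)$ is the number of neighbors of~$x$ lying in~$\Vdis$. By Property~\ref{prop:b}) of the original dissolution, $\sum_{(y,x)\in\A{\Vdis,G}}\z(y,x) = \Deltas$ for every $x\in V(G)\setminus\Vdis$; so I need $\deg_B(x)\cdot\Deltas - \Deltas = \Deltas$, i.e.\ $\deg_B(x) = 2$. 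This is \emph{not} automatic, so the clean symmetric bijection only works when it happens to hold — meaning the naive approach above is too crude, and the real argument must go through the flow characterization instead.

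Therefore the approach I would actually take is via \cref{cor:flow} (and, for the biased-free setting, it suffices): an \dissol for~$G$ with dissolved set~$\Vdis$ exists iff the flow network $N(\Vdis)$ — source to each $d\in\Vdis$ with capacity $\sold$, each $r\in V\setminus\Vdis$ to sink with capacity $\Deltas$, and capacity-$\sold$ arcs along edges of~$G$ between the two sides — has a flow of value $\sold\cdot|\Vdis|$, which by the structure of the network means saturating all source and all sink arcs simultaneously; equivalently, there is a nonnegative integer weighting $\z$ of the bipartite edge set with all degree-$\sold$ constraints on the $\Vdis$ side and all degree-$\Deltas$ constraints on the other side. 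But one still has to choose the right set~$\Vdis'$. The key structural fact to prove is: if $(\Vdis,\z)$ is an \dissol, then \emph{some} valid choice of $\z'$ exists on $\Vdis' = V\setminus\Vdis$ with the $\sold\leftrightarrow\Deltas$ roles swapped. I would argue this by a flow/circulation argument: orient each edge $\{x,y\}$ with $x\in\Vdis$ carrying value $\z(x,y)\in\{0,\dots,\sold\}$; build the new target flow by sending, from each $x\in V\setminus\Vdis$, a total of $\sold$ units out to its $\Vdis$-neighbors and into each $y\in\Vdis$ a total of $\Deltas$; this is a transportation problem whose feasibility is witnessed by a fractional solution obtained by scaling, then integrality of max-flow (\cite{AMO93}) gives an integer $\z'$. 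Concretely: the assignment $\z^\ast(x,y) := \sold\cdot\z(y,x)/\Deltas$ on the reversed arcs has the right row sums ($\sum_y \z^\ast(x,y) = \sold\cdot\Deltas/\Deltas = \sold$ per $x\in V\setminus\Vdis$, using Property~\ref{prop:b})) but possibly non-integer; feed its support structure into the flow network with source-capacity $\sold$ and sink-capacity $\Deltas$ reversed, note the column sums work out because $|\Vdis|\cdot\Deltas = (|V|-|\Vdis|)\cdot\sold$ cannot be right either — so instead I route through the max-flow: the network with source$\to(V\setminus\Vdis)$ arcs of capacity $\sold$, $(V\setminus\Vdis)\to\Vdis$ arcs of capacity $\snew$ along $G$-edges, $\Vdis\to$sink arcs of capacity $\Deltas$ has a flow of value $\sold\cdot(|V|-|\Vdis|)$ precisely when a \Sdissol{\Deltas}{\sold} with that dissolved set exists, and the fractional $\z^\ast$ (truncated to capacities) certifies the max-flow value is attained, after which integrality yields the dissolution.

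\textbf{Main obstacle.} The crux is that the obvious ``reverse every move'' bijection fails because degrees in the bipartite neighborhood graph are uncontrolled; the substance of the lemma is precisely that feasibility is nonetheless preserved, which I would extract from the flow formulation of \cref{cor:flow} by a scaling-plus-integrality argument rather than a direct combinatorial swap. I expect the bookkeeping of which arcs carry which capacities after the swap, and checking the max-flow value is exactly right on both sides, to be the fiddly part; everything else is routine.
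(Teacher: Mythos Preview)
Your proposal contains a genuine error that leads you down an unnecessarily complicated path. The direct ``reverse every move'' bijection \emph{does} work; you simply wrote down the wrong formula for it. You set $\z'(x,y) := \Deltas - \z(y,x)$, which introduces a spurious dependence on the number of $\Vdis$-neighbors of~$x$ and then, unsurprisingly, fails. The correct definition is just $\z'(x,y) := \z(y,x)$, unchanged. With $\Vdis' := V(G)\setminus\Vdis$, Property~\ref{prop:a}) for the new pair asks that each $x\in\Vdis'$ sends out exactly~$\Deltas$ voters in total, i.e.\ $\sum_{(x,y)\in\A{\Vdis',G}}\z'(x,y) = \sum_{(y,x)\in\A{\Vdis,G}}\z(y,x) = \Deltas$, which is precisely Property~\ref{prop:b}) of the original \dissol. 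Symmetrically, Property~\ref{prop:b}) for the new pair is Property~\ref{prop:a}) of the original. The codomain constraint $\z'(x,y)\in\{0,\ldots,\Deltas\}$ is automatic because the values $\z(y,x)$ over all $y$ are nonnegative and sum to~$\Deltas$, so each one is at most~$\Deltas$. That is the entire proof in the paper.

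Your subsequent flow-based salvage is both unnecessary and itself confused: in a \Sdissol{\Deltas}{\sold} with dissolved set $V(G)\setminus\Vdis$, each newly dissolved district must send out $\Deltas$ voters (the new ``old size''), not~$\sold$ as you write, and each district in~$\Vdis$ must receive~$\sold$. Your scaled assignment $\z^\ast(x,y)=\sold\cdot\z(y,x)/\Deltas$ targets the wrong row sums. No scaling, integrality, or transportation argument is needed; the lemma is a two-line symmetry.
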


\begin{proof}
 Let $(\Vdis,z)$ be an \dissol for $G$.
 We show that $(V(G)\setminus\Vdis,z')$, where $z'$ is defined by $z'(x,y):=z(y,x)$
 is a \Sdissol{\Deltas}{\sold} for~$G$:
 First, observe that the domain of $z'$ is correct:
 \begin{align*}
  Z(V(G)\setminus\Vdis,G) =& \{(x,y) \mid x\in V(G)\setminus\Vdis \wedge y \in V(G)\setminus(V(G)\setminus\Vdis)\\ &\wedge \{x,y\} \in E(G))\}\\
  =&\{(x,y) \mid x\in V(G)\setminus\Vdis \wedge y \in \Vdis \wedge \{x,y\} \in E(G))\}.
 \end{align*}
 Second, observe that $(V(G)\setminus\Vdis,z')$ fulfills all properties of \cref{def:dissolution}:
 Property~\ref{prop:a}) is fulfilled for~$(V(G)\setminus\Vdis,z')$ if and only if
 Property~\ref{prop:b}) is fulfilled for~$(\Vdis,z)$, and Property~\ref{prop:b})
 is fulfilled for~$(V(G)\setminus\Vdis,z')$ if and only if Property~\ref{prop:a}) is fulfilled
 for~$(\Vdis,z)$.
\end{proof}

\subsection{Dissolution}
\label{sec:dissolution-dichotomy}

In this subsection, we show a \classP vs.\ \NP dichotomy %
of \probValid with respect to the district size~$\sold$ and the size increase~$\Deltas$.
Observe that from \cref{cor:dissolution-P} it directly follows that \probValid is
\pt{} solvable if $\sold=\Deltas$.

If $\sold\neq \Deltas$, then \probValid is \NP-complete. 
We can use a result from number theory to encode instances of the \NP-complete
\probExCover problem into instances of \probValid.

\decprob{\probExCover}
{A finite set $X$ and a collection $\mathcal{C}$ of subsets of $X$ of
  size $t$.}
{Is there a subcollection $\mathcal{C}' \subseteq \mathcal{C}$ that
  partitions $X$, that is, each element of $X$ is contained in exactly one subset in $\mathcal{C}'$?}
\noindent Now, let us briefly recall some elementary number theory.

\begin{lemma}[B\'ezout's identity]\label[lemma]{lem:bezout's identity}
  Let $a$ and $b$ be two positive integers and let $g$ be their greatest common divisor.
  Then, there exist two integers~$x$ and $y$ with $ax + by = g$.
\end{lemma}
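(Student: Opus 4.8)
The plan is to prove B\'ezout's identity by the standard ``smallest positive linear combination'' argument (equivalently, one could unroll the extended Euclidean algorithm by induction on $a+b$, but the non-constructive version is shorter and suffices here). Consider the set
$$S := \{\, ax + by \mid x,y \in \mathbb{Z},\ ax+by > 0 \,\}$$
of all strictly positive integers that are integer linear combinations of $a$ and $b$. Since $a = a\cdot 1 + b\cdot 0 \in S$, the set $S$ is non-empty, so by the well-ordering principle it has a least element $d = a x_0 + b y_0$ for some $x_0, y_0 \in \mathbb{Z}$.

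First I would show $d \mid a$ and $d \mid b$. Divide $a$ by $d$ with remainder, $a = qd + r$ with $0 \le r < d$. Then
$$r = a - qd = a - q(ax_0 + by_0) = a(1 - qx_0) + b(-qy_0),$$
so $r$ is itself an integer linear combination of $a$ and $b$; were $r > 0$, it would belong to $S$ and be smaller than $d$, contradicting minimality, so $r = 0$ and thus $d \mid a$. The identical argument with $b$ in place of $a$ gives $d \mid b$. Hence $d$ is a common divisor of $a$ and $b$, and since $g$ is their \emph{greatest} common divisor, $d \le g$.

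For the reverse inequality, note that $g \mid a$ and $g \mid b$, so $g$ divides every integer linear combination of $a$ and $b$; in particular $g \mid d$, and as $d \ge 1$ this yields $g \le d$. Combining the two inequalities gives $g = d = a x_0 + b y_0$, so taking $x := x_0$ and $y := y_0$ proves the claim. There is no genuine obstacle here; the only points needing a word of care are the appeal to the well-ordering principle (to extract the minimal element) and the observation that the remainder $r$ stays inside the family of integer linear combinations of $a$ and $b$ — both are routine.
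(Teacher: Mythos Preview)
Your proof is correct and is the standard textbook argument for B\'ezout's identity. However, the paper does not actually prove this lemma: it is stated without proof as a well-known fact from elementary number theory, and the surrounding text merely remarks that the coefficients~$x$ and~$y$ can be computed in polynomial time via the extended Euclidean algorithm (citing \cite[Section~31.2]{CLRS09}). So there is nothing to compare against; your proposal simply supplies a self-contained proof where the paper is content to quote the result.
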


\noindent Moreover, $x$ and $y$ in \cref{lem:bezout's identity}
can be computed in polynomial time using the
extended Euclidean algorithm~\citep[Section~31.2]{CLRS09}.  Indeed, we
can infer from Lemma~\ref{lem:bezout's identity} 
that any two integers $x'$ and $y'$
with $x'=ix+jb/g$ and $y'=iy-ja/g$ for some $i,j \in \mathbb{Z}$
satisfy $ax' + by' = ig$.  We will make use of this fact several times
in the \NP-hardness proof of the following theorem.

\begin{theorem}
\label{thm:dissolution-dichotomy}
If $\sold=\Deltas$, then \probValid is solvable in $O(n^{\omega})$ time
(where $\omega$ is the matrix multiplication exponent); otherwise the
problem is \NP-complete.
\end{theorem}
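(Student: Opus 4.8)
The case $\sold=\Deltas$ is already handled by \cref{cor:dissolution-P}: a \dissol exists iff $G$ has a perfect matching, which can be decided in $O(n^\omega)$ time via the standard algebraic matching algorithm (Mucha--Sankowski / Rabin--Vazirani). So the whole content is the case $\sold\neq\Deltas$, where I claim \probValid is \NP-complete. Membership in \NP is immediate: a \dissol is a pair $(\Vdis,\z)$ whose size is polynomial, and Properties~\ref{prop:a}) and~\ref{prop:b}) are checkable in polynomial time. For hardness I would reduce from \probExCover (\NP-complete for every fixed $t\ge 3$), choosing the parameter $t$ of the exact-cover instance in terms of $\sold$ and $\Deltas$. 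By \cref{lem:d-delta-mirror} I may assume without loss of generality that $\sold>\Deltas$ (otherwise swap the two). Write $g:=\gcd(\sold,\Deltas)$.

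\textbf{The gadget idea.} Given an instance $(X,\mathcal C)$ of \probExCover with $|X|=n_X$ and $t$-element sets, I build a graph $G$ as follows. For each element $x\in X$ create an ``element'' vertex $v_x$, and for each set $C\in\mathcal C$ create a ``set'' vertex $w_C$ together with a small attached gadget (a bundle of auxiliary vertices hanging off $w_C$) whose role is to calibrate how many voters $w_C$ must absorb or emit. Put an edge between $w_C$ and $v_x$ whenever $x\in C$. The intended semantics is: a set $C$ is chosen into $\mathcal C'$ iff the element vertices $v_x$ for $x\in C$ are \emph{dissolved} and each sends its load into $w_C$; a non-chosen set's vertices play the opposite role. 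To make this work I need the degree/size arithmetic to force exactly the cover behaviour. This is where B\'ezout's identity enters: because $\sold$ and $\Deltas$ need not divide each other, the number of dissolved neighbours a vertex must have, and the number of voters flowing along each edge, are governed by linear Diophantine relations $ \sold x'+\Deltas y' = \text{(target)}$; the extended Euclidean algorithm lets me compute, in polynomial time, the explicit flow values and the sizes of the auxiliary bundles realizing the intended gadget, and the ``shift by $(ib/g,-ia/g)$'' freedom quoted after \cref{lem:bezout's identity} is exactly what lets me keep all flow values in the legal range $\{0,\dots,\sold\}$ while adjusting global counts. Concretely each element vertex $v_x$ should be forced to be dissolved and to distribute its $\sold$ voters among its incident set-vertices in blocks, while each chosen set-vertex must receive a net increase of exactly $\Deltas$; the auxiliary bundle on $w_C$ soaks up the difference between ``$t$ blocks of incoming load'' and ``$\Deltas$''.

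\textbf{Correctness of the reduction.} I would prove the two directions separately. ($\Rightarrow$) From an exact subcover $\mathcal C'$, dissolve precisely the element vertices together with the auxiliary vertices of the non-chosen sets (or whatever the gadget arithmetic dictates), route voters as prescribed, and verify Properties~\ref{prop:a}) and~\ref{prop:b}) hold everywhere --- this is a direct computation using the B\'ezout coefficients. ($\Leftarrow$) From any \dissol $(\Vdis,\z)$, I must extract an exact cover. The key structural claim is a ``rigidity'' lemma: the gadgets are designed so that in \emph{every} valid \dissol, each element vertex $v_x$ is dissolved and sends its entire load to exactly one incident set-vertex, and a set-vertex receives element-load from either all or none of its elements; the collection of set-vertices receiving full element-load then forms an exact cover of $X$. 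Establishing this rigidity --- ruling out ``fractional'' or ``mixed'' behaviours where a vertex splits its load in unintended ways --- is the main obstacle, and it is exactly the point where one must exploit that $\gcd(\sold,\Deltas)$ constrains residues: load amounts are forced into a coarse additive structure, so partial assignments cannot balance. I would isolate this as one or two claims, prove them by a counting argument over the bundle sizes (summing Property~\ref{prop:b}) over all non-dissolved vertices and comparing with the total load emitted), and then the equivalence follows. Finally I check the whole construction has polynomial size and is computable in polynomial time (the only nontrivial ingredient being the extended Euclidean algorithm, already noted to run in polynomial time), completing the proof of \NP-completeness.
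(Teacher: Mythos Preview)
Your high-level strategy coincides with the paper's: reduce from \probExCover, use \cref{lem:d-delta-mirror} to assume $\sold>\Deltas$, and invoke B\'ezout to size the gadgets. However, your construction has a concrete gap at precisely the point you flag as the ``main obstacle'', namely rigidity.

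You attach a gadget only to each set vertex~$w_C$ and leave each element vertex~$v_x$ bare. With a bare~$v_x$ there is nothing forcing it to be dissolved (it could just as well receive $\Deltas$ voters from a neighbouring dissolved gadget), and even if it is dissolved, there is nothing forcing its $\sold$ voters to go to a \emph{single} set vertex rather than being split among several. Your proposed fix---``$\gcd(\sold,\Deltas)$ constrains residues'' and a global counting argument---is not enough: residues mod~$g$ do not by themselves rule out fractional splits, since any split of $\sold$ into parts summing to $\sold$ is consistent with the residue structure. The paper's rigidity is \emph{local}, not global: it attaches a clique~$C_u$ of carefully chosen size $q=x_q+y_q$ (with $x_q\sold-y_q\Deltas=g$) to \emph{each element}~$u$, with a single connector vertex~$v_u$ linking~$C_u$ to the rest of the graph. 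The arithmetic then forces exactly $x_q$ vertices of~$C_u$ to be dissolved in any valid \dissol (one more would push the overflow above~$\sold$, one fewer would push the deficit above~$\Deltas$, and only $v_u$ can carry the excess), so exactly $g$~voters must exit~$C_u$ through~$v_u$, which must itself be dissolved. This is the mechanism you are missing.

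Correspondingly, the paper does not leave $t$ free: it sets $t:=(\sold+\Deltas)/g>2$, so that a set clique~$C_S$ (sized via a second B\'ezout relation $x_r\sold-y_r\Deltas=-(\sold+\Deltas)$) can absorb either $0$ or exactly $t\cdot g=\sold+\Deltas$ voters---all or none of its incident element-overflows. Once both sides are cliques calibrated this way, the backward direction becomes a short local argument rather than a global count. Your plan would go through if you add the element-side cliques and fix $t$ as above; without them, the rigidity claim as stated does not hold.
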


  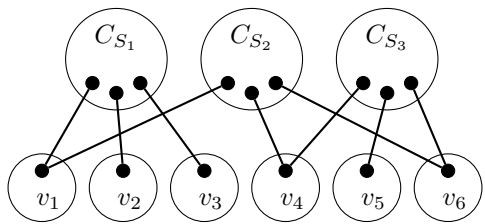
\begin{figure}[t]
    \centering
    \begin{tikzpicture}[scale=0.9]
      \tikzstyle{node}=[circle, draw, fill=black, inner sep=0pt, minimum size=5pt]
      \foreach \i in {1,2,...,6} {
        \coordinate (center) at (.5+1.2*\i+1.2,0);
        \draw (center) circle(.5);
        \draw (center) ++ (0,.25) node[node] (v\i) {};
        \draw (center) ++ (.1,-.2) node {$v_\i$};
      }
      \foreach \i in {1,2,3} {
        \coordinate (center) at (2*\i+2,1.9);
        \draw (center) circle(.75);
        \draw (center) ++ (225:.5) node[node] (S\i1) {};
        \draw (center) ++ (270:.5) node[node] (S\i2) {};
        \draw (center) ++ (315:.5) node[node] (S\i3) {};
        \node[above] at (center) {$C_{S_\i}$};
      }
      \draw[ZEdge] (v1) -- (S11);
      \draw[ZEdge] (v2) -- (S12);
      \draw[ZEdge] (v3) -- (S13);
      \draw[ZEdge] (v1) -- (S21);
      \draw[ZEdge] (v4) -- (S22);
      \draw[ZEdge] (v6) -- (S23);
      \draw[ZEdge] (v4) -- (S31);
      \draw[ZEdge] (v5) -- (S32);
      \draw[ZEdge] (v6) -- (S33);
    \end{tikzpicture}
    \caption{The constructed instance for $t=3$.}
    \label{fig:xsc-reduction-instance}
  \end{figure}

\begin{proof}
  First, \cref{cor:dissolution-P} says that there is an~\Sdissol{s}{s}
  if and only if there is a perfect matching in~$G$, which can be computed in
  $O(n^{\omega})$ time with $\omega$ being the smallest exponent such that
  matrix multiplication can be computed in $O(n^{\omega})$ time.
  Currently, the smallest known upper bound of $\omega$ is 2.3727~\cite{Wil12}.
  
  \newcommand{\subsetS}{\ensuremath{S}}
  For the case~$\sold \neq \Deltas$,
  we show that \probValid is \NP-complete if $\sold > \Deltas$.
  Due to \cref{lem:d-delta-mirror}, this also transfers to the cases where $\sold < \Deltas$.
  First, given a \probValid instance~$(G,\sold,\Deltas)$ and a function~$\z: \A{\Vdis,G} \rightarrow
  \{0,\dots,\sold\}$ where $\Vdis \subset V(G)$, 
  one can check in polynomial time whether $(\Vdis, \z)$ is an \dissol.
  Thus, \probValid is in~\NP.

  To show the \NP-hardness result, 
  we give a reduction from the \NP-complete \probExCover~\cite{GJ79}
  for $t := (\sold+\Deltas)/g > 2$, where $g := \gcd(\sold, \Deltas)$
  $\leq \Deltas$ is the
  greatest common divisor of~$\sold$ and~$\Deltas$.
  
  Given an \probExCover instance $(X,\mathcal{C})$,
  we construct a \probValid instance $(G,\sold,\Deltas)$ with a
  neighborhood graph $G = (V, E)$ defined as follows:
  For each element~$u \in X$, 
  add a clique $C_u$ of properly chosen size~$q$ to~$G$ and 
  let~$v_u$ denote an arbitrary fixed vertex in~$C_u$.
  For each subset~$\subsetS \in \mathcal{C}$, 
  add a clique $C_{\subsetS}$ of properly chosen size $r \geq t$ to~$G$ 
  and connect each $v_u$ for $u \in {\subsetS}$ to a unique vertex in $C_{\subsetS}$.
  \cref{fig:xsc-reduction-instance} shows an example of the
  constructed neighborhood graph for $t=3$.

  Next, we explain how to choose the values of~$q$ and~$r$.
  We set $q = x_q + y_q$, where $x_q \geq 0$ and $y_q \geq 0$ are
  integers satisfying $x_q\sold - y_q\Deltas = g$.
  Such integers exist by \cref{lem:bezout's identity}.
  The intuition behind this is as follows: 
  Dissolving~$x_q$ districts
  in $C_u$ and moving the voters to~$y_q$ districts in~$C_u$ creates
  an overflow of exactly~$g$ voters, who have to \move out of $C_u$.
  Note that the only way to \move voters into or out of $C_u$ is via
  district~$v_u$.
  Moreover, 
  if the constructed instance~$(G,\sold,\Deltas)$ admits a dissolution,
  then exactly~$x_q$ districts in~$C_u$ are
  dissolved because dissolving more districts leads to an overflow of
  at least~$g + \sold + \Deltas > \sold$~voters,
  which is more than~$v_u$ can \move, whereas dissolving less districts
  yields a demand of at least~$\sold + \Deltas - g > \Deltas$~voters, which is more than~$v_u$ can receive.
  Thus, the district~$v_u$ must be dissolved since there is an overflow of $g$~voters to \move out of~$C_u$ and this can only be done via district~$v_u$.
  
  The value of $r \geq t$ is chosen in such a way that, for each subset~${\subsetS} \in
  \mathcal{C}$ and each element~$u \in {\subsetS}$, it is possible to \move~$g$ voters
  from $v_u$ to $C_{\subsetS}$ (recall that $v_u$ must be dissolved). 
  In other words, we require~$C_{\subsetS}$ to be able
  to receive in total $t \cdot g = \sold+\Deltas$ voters
  in at least $t$ non-dissolved districts.
  Thus, we set $r := x_r + y_r$, where $x_r \geq 0$ and $y_r \geq t$
  are integers satisfying $x_r\sold - y_r\Deltas = -(\sold+\Deltas)$.
  Again, since $-(\sold+\Deltas)$ is divisible by~$g$, such integers exist by
  \cref{lem:bezout's identity}.
  It is thus possible to dissolve~$x_r$ districts in~$C_\subsetS$
  \moving the voters to the remaining $y_r$~districts in~$C_\subsetS$ such that
  we end up with a demand of~$\sold + \Deltas$ voters in~$C_\subsetS$.
  Note that the only other possibility is to dissolve $x_r +
  1$ districts in~$C_{\subsetS}$ in order to end up with a demand
  of zero voters.
  In this case, no voters of any other districts connected to~$C_{\subsetS}$ 
  can \move to~$C_{\subsetS}$.
  By the construction of $C_u$, it is clear that it is also
  not possible to \move any voters out of~$C_{\subsetS}$ because no~$v_u$ can
  receive voters in any dissolution.
  Thus,  if the constructed instance~$(G,\sold,\Deltas)$ admits a dissolution,
  then either all or none of
  the districts~$v_u$ connected to some $C_{\subsetS}$ \move $g$ voters to $C_{\subsetS}$.

  We are now ready to show that $G$ has a \dissol if and only if $(X,\mathcal{C})$
  is a yes-instance of \probExCover.

  For the ``only if'' part, suppose that $(X,\mathcal{C})$ is a yes-instance, that is,
  there exists a partition $\mathcal{C}' \subseteq \mathcal{C}$ of~$X$.
  We can thus dissolve~$x_q$ districts in each~$C_u$ (including~$v_u$)
  and \move the voters such that all~$y_q$ non-dissolved
  districts receive exactly $\Deltas$ voters. This is always possible
  since~$C_u$ is a clique. If we do so, then, by construction, 
  $g$ voters have to \move out of each~$v_u$.
  Since~$\mathcal{C}'$ partitions $X$, each~$u \in X$ is contained in
  exactly one subset $S\in\mathcal{C}'$. We can thus \move the $g$
  voters from each~$v_u$ to~$C_{\subsetS}$.
  Now, for each~${\subsetS} \in \mathcal{C}'$, we dissolve
  any $x_r$ districts that are not adjacent to any~$v_u$ and for the
  subsets in $\mathcal{C} \setminus \mathcal{C}'$, we simply dissolve
  $x_r + 1$~arbitrary districts in the corresponding cliques.
  As already discussed, each~$C_{\subsetS}$ with $x_r$~dissolved districts
  receives $t \cdot g$ voters and each~$C_{\subsetS}$ with $x_r+1$~dissolved
  districts receives no voter.
  Thus, this in fact yields an~\dissol.

  For the ``if'' part, assume that there exists an \dissol for~$G$.  
  As already discussed, every \dissol generates an overflow of $g$
  voters in each $C_u$ that has to be \moved over $v_u$ to
  some district in~$C_{\subsetS}$.
  Furthermore, each $C_{\subsetS}$ either receives $g$ voters from all
  its adjacent $v_u$ or no voters at all. Therefore, the subsets ${\subsetS}$
  corresponding to cliques $C_{\subsetS}$ that receive $t \cdot g$ voters
  form a partition of $X$, showing that $(X,\mathcal{C})$ is a yes-instance.
\end{proof}

\subsection{Biased Dissolution}
\label{sec:biaseddissolution-dichotomy}
Since \probValid is a special case of \probWinner, the \NP-hardness results for
$\sold\neq\Deltas$ transfer to \probWinner.
It remains to see whether \probWinner remains \pt{} solvable
when $\sold=\Deltas$.
Interestingly, this is true for $\sold=\Deltas=1$, but \probWinner turns
\NP-hard when $\sold=\Deltas\ge2$.

To analyze the structure of \dissolution{}s, we introduce the concept of the
``edge set used by a dissolution'' which we will use in several proofs.
Let~$(\Vdis,\z)$ be a \dissolution of a graph~$G$.
Let $E_\z\subseteq E(G)$ contain all edges~$\{x,y\}$ with $(x,y)\in \A{\Vdis,G}$ and $\z(x,y)>0$. %
Then, we call $E_\z$ the \emph{edge set used by} the dissolution~$(\Vdis,\z)$.

The following lemma shows that finding an \biasedSdissol{1}{1} essentially corresponds to
finding a maximum-weight perfect matching.

\begin{lemma}
\label[lemma]{lem:biased 1-1-dissolution}
Let $(G=(V,E),\sold=1,\Deltas=1,\kwin,\w)$ be a \probWinner instance.
There is an \biasedSdissol{1}{1} for $(G,\w)$ if and only if there is a perfect 
matching of weight at least~$\kwin$ in $(G,w)$ 
with $w(\{x,y\}):= 1 $ if $\w(x)=\w(y)=1$, and $w(\{x,y\}):=0$ otherwise.
\end{lemma}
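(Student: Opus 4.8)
The plan is to unwind the definitions of \probWinner{} in the special case $\sold=\Deltas=1$ and observe that a dissolution here is exactly a perfect matching, while the biased part amounts to counting how many matching edges join two \partyAsupporter{}-full districts. Recall from \cref{cor:dissolution-P} (or the remark preceding \cref{prop:starpacking-relation}) that for $\sold=\Deltas=1$ a dissolution $(\Vdis,\z)$ corresponds to a perfect matching $M$ of $G$: each dissolved district sends its single voter to exactly one non-dissolved neighbor, and each non-dissolved district receives exactly one voter; the edge set $E_\z$ used by the dissolution is precisely this perfect matching, pairing each $d\in\Vdis$ with the unique $r\in\Vrem$ it feeds into.

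For the ``only if'' direction, I would start from an \biasedSdissol{1}{1} $(\Vdis,\z,\za,\Vwin)$ and take $M:=E_\z$, the associated perfect matching. Consider an edge $\{d,r\}\in M$ with $d\in\Vdis$, $r\in\Vrem$. Since $\snew=\sold+\Deltas=2$, district $r$ wins (is in $\Vwin$) iff $\w(r)+\za(d,r)>1$, i.e.\ iff $\w(r)+\za(d,r)=2$, which (as $\w(r),\za(d,r)\in\{0,1\}$ and by Property~\ref{prop:c}) forces $\w(r)=1$ and $\za(d,r)=1$; and $\za(d,r)=1$ together with Property~\ref{prop:d}) forces $\w(d)=1$. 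Hence every winning district $r\in\Vwin$ is matched to a district $d$ with $\w(d)=\w(r)=1$, so the matching edge $\{d,r\}$ has $w$-weight $1$. As $|\Vwin|\ge\kwin$ (it is exactly $\kwin$ by definition, but $\ge$ suffices) and distinct winning districts use distinct matching edges, $w(M)\ge\kwin$.

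For the ``if'' direction, given a perfect matching $M$ of $G$ with $w(M)\ge\kwin$, I orient it into a dissolution: for each edge $\{x,y\}\in M$ pick one endpoint to dissolve and the other to remain — if $\w(x)=\w(y)=1$ it does not matter which, say dissolve $x$; this defines $\Vdis$, and set $\z(x,y):=1$ for the oriented edges and $0$ elsewhere, which is a valid \Sdissol{1}{1} by \cref{cor:dissolution-P}. For the \partyAsupporter{} movement set $\za(x,y):=1$ exactly when $\w(x)=1$ and $\{x,y\}\in M$ is oriented from $x$ to $y$; this satisfies Properties~\ref{prop:c}) and~\ref{prop:d}). Finally, for each weight-$1$ edge $\{x,y\}\in M$ (oriented $x\to y$, so $\w(x)=\w(y)=1$), the district $y$ receives $\w(y)+\za(x,y)=2>1=\snew/2$, so it wins; collecting any $\kwin$ of these weight-$1$ winning districts into $\Vwin$ gives Property~\ref{prop:e}). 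Thus $(\Vdis,\z,\za,\Vwin)$ is an \biasedSdissol{1}{1}.

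I do not expect a genuine obstacle here: the whole content is the case analysis ``$\snew=2$, so the strict-majority inequality $\w(r)+\za(d,r)>1$ collapses to the value $2$,'' plus bookkeeping to translate between the matching picture and the four-tuple $(\Vdis,\z,\za,\Vwin)$. The one place to be a little careful is the direction of the equivalence with $\kwin$: the definition of a biased dissolution fixes $|\Vwin|=\kwin$ exactly, so in the ``only if'' part one should note that the set of \emph{all} districts that win is a superset of $\Vwin$ and is matched to weight-$1$ edges, giving $w(M)\ge|\Vwin|=\kwin$; and in the ``if'' part one must check there are at least $\kwin$ winning districts to populate $\Vwin$, which is exactly what $w(M)\ge\kwin$ provides.
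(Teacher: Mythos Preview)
Your proposal is correct and follows essentially the same approach as the paper: both directions identify the edge set~$E_\z$ used by the dissolution with a perfect matching and observe that a remaining district wins precisely when both endpoints of its matching edge carry an \partyAsupporter, so weight-$1$ edges correspond bijectively to winning districts. Your treatment is in fact slightly more careful than the paper's in two places: you explicitly invoke Property~\ref{prop:d}) to force $\w(d)=1$ for the dissolved endpoint, and you note that $|\Vwin|=\kwin$ exactly (so one selects $\kwin$ of the available winning districts rather than taking all of them).
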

\begin{proof}
For the ``only if'' part, let $(\Vdis,\z,\za,\Vwin)$ be an \biasedSdissol{1}{1} for~$(G,\w)$.
Then, the edge set~$E_z\subseteq E$ used by $(\Vdis,\z,\za,\Vwin)$ partitions $G$ 
into $1$-stars or in other words, $E_z$ is a perfect matching for~$G$ (see \cref{prop:starpacking-relation}).
Note that a non-dissolved district can only win if it already contains an \partyAsupporter
and receives one additional \partyAsupporter.
By the construction of~$w$, this implies that the weight of each edge that connects a 
winning district is one (i.e., for each $e \in E_z$ it holds that $e \cap \Vwin\neq \emptyset$
if and only if $w(e)=1$).
Since $|\Vwin|\ge\kwin$, the perfect matching~$E_z$ has weight at least $\kwin$. 

For the ``if'' part, let $E'\subseteq E$ be a perfect matching of weight at least~$\kwin$.
By the construction of~$w$, $E'$ contains at least $\kwin$ edges, each of which has weight one.
 Then, we construct an \biasedSdissol{1}{1} $(\Vdis,\z,\za,\Vwin)$ as follows.
 For each edge~$\{x,y\} \in E'$, arbitrarily add one of its endpoints,
 say~$x$, to $\Vdis$ and set $\z(x,y):=1$.
 Furthermore, if $\w(x)=1$, then set $\za(x,y):=1$.
 If $w(\{x,y\})=1$, meaning that the districts corresponding to $x$ and $y$ have an \partyAsupporter each, then add~$y$ to $\Vwin$ since $y$~wins after the dissolution.
 Finally, $|\Vwin|\ge \kwin$ since $|E'|\ge \kwin$.
\end{proof}

\noindent As we have already seen from \cref{cor:dissolution-P}, the edge set used by a \Sdissol{1}{1} is a perfect matching.
This is useful to find a \pt{} algorithm solving \probWinner, exploiting that
maximum-weight perfect matchings can be computed in polynomial time.
Can we find similar useful characterizations for \AbiasedSdissol{\kwin}{\sold}{\sold}{}s for $\sold>1$?

Already for \Sdissol{2}{2}{}s, a characterization by the edge set used
is not as compact as for \Sdissol{1}{1}{}s:
The edge set used by a \Sdissol{2}{2} for some graph~$G$
corresponds to a partition of the graph into disjoint cycles of even length
and disjoint paths on two vertices.
For the case of \biasedSdissol{2}{2}{}, one would at least need some weights and it is
not clear how to find such a partition efficiently.
However, by appropriately setting~$\w$ and~$\kwin$, we can enforce that the edge
set used by any \biasedSdissol{2}{2} only induces cycles of lengths divisible by four:
We let each district have one \partyAsupporter and one \partyBsupporter (i.e., $\w: V\to \{1\}$ for each district~$v$) and 
let $\kwin:=|V(G)|/4$. 
Doing this we end up with a restricted two-factor problem which was already studied
in the literature~\cite{HKKK88}:

\decprob{\probRestTwoFactor}
  {An undirected graph $G=(V,E)$.}
  {Is there a two-factor~$E' \subseteq E$ such that the number of vertices
   in each connected component in~$(V,E')$ belongs to~$L$?}

\noindent A \emph{two-factor} of a graph~$G=(V,E)$ is a subset of edges $E' \subseteq E$ such that
each vertex in the subgraph $G':=(V,E')$ has degree exactly two, that is, $G'$~only
contains disjoint cycles.

\begin{lemma}
\label[lemma]{lem:biased 2-2-dissolution}
Let $G=(V,E)$~be an undirected graph with $4q$ vertices ($q\in\mathbb{N}$).
Then, $G$ has a two-factor~$E'$ whose cycle lengths are all multiples of four 
if and only if $(G,\w)$ admits a \AbiasedSdissol{q}{2}{2} with $\w(v)=1$ for all $v\in V$.
\end{lemma}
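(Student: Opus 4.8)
The plan is to study the \emph{edge set used by} a dissolution, $E_\z$, specialised to $\sold=\Deltas=2$. First I would pin down the structure of the graph $(V,E_\z)$ directly from \cref{def:dissolution}: every vertex of $\Vdis$ must push out two voters and every vertex of $V\setminus\Vdis$ must absorb two, so every vertex has $E_\z$-degree $1$ or $2$; moreover a degree-$1$ dissolved vertex dumps both its voters on its unique $E_\z$-neighbour, which then already has its full intake and hence also has $E_\z$-degree $1$ (and symmetrically for a degree-$1$ non-dissolved vertex), so degree-$1$ vertices occur only as the two endpoints of a $K_2$-component. All other components are $2$-regular, i.e.\ cycles, and since $E_\z$ consists only of edges between $\Vdis$ and $V\setminus\Vdis$, the graph $(V,E_\z)$ is bipartite and these cycles have even length. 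Thus $E_\z$ always partitions $V$ into $K_2$'s and even cycles, and it is a two-factor precisely when there are no $K_2$-components.

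Next I would count winners under $\w\equiv 1$. Here $\snew=4$, so a surviving district wins iff, after receiving its two extra voters, it holds at least three \partyAsupporters{}, i.e.\ iff both voters it receives are \partyAsupporters{}. Since each dissolved district owns exactly one \partyAsupporter{} and must move it out (Property~\ref{prop:d}) of \cref{def:biased}), a winning district must receive one voter from each of two \emph{distinct} dissolved neighbours, both carrying the \partyAsupporter{}; in particular it has $E_\z$-degree $2$ and hence lies on a cycle, so $K_2$-components never contain winners. On a cycle $x_1 r_1 x_2 r_2 \cdots x_k r_k$ of length $2k$ (with $x_i\in\Vdis$, $r_i\in V\setminus\Vdis$), record for each $x_i$ whether $\za$ sends its \partyAsupporter{} ``right'' to $r_i$ or ``left'' to $r_{i-1}$; then $r_i$ wins iff $x_i$ goes right and $x_{i+1}$ goes left, so the number of winners on the cycle equals the number of ``$RL$'' transitions in the cyclic word $d_1\cdots d_k\in\{L,R\}^k$, which equals the number of maximal $R$-blocks and is therefore at most $\lfloor k/2\rfloor=\lfloor |C|/4\rfloor$; the alternating word $RLRL\cdots$ realises $\lfloor k/2\rfloor$, which equals $k/2=|C|/4$ exactly when $k$ is even, i.e.\ when $4\mid|C|$.

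The two implications then drop out of a single global count. For ``only if'': given a \AbiasedSdissol{q}{2}{2}, the set $\Vwin$ exhibits at least $\kwin=q$ winning districts, while the number of winners is at most $\sum_{\text{cycles }C}\lfloor|C|/4\rfloor\le\tfrac14\sum_{\text{cycles }C}|C|\le\tfrac14|V|=q$; the first inequality is strict unless every cycle length is divisible by $4$, and the last is strict unless there is no $K_2$-component, so all inequalities are equalities and $E_\z\subseteq E$ is a two-factor all of whose cycle lengths are multiples of $4$. For ``if'': given such a two-factor $E'$, process each of its cycles $C$ (of length $2k$ with $k$ even) by $2$-colouring its vertices alternately as dissolved/non-dissolved, moving one voter along each cycle edge — a valid \Sdissol{2}{2} on that component, since each dissolved vertex then has exactly two non-dissolved neighbours and vice versa — and orienting the \partyAsupporters{} by the alternating word $RLRL\cdots$; since $k$ is even this makes exactly $k/2=|C|/4$ of the surviving districts of $C$ win, so exactly $\sum_C|C|/4=|V|/4=q$ districts win overall. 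Taking $\Vwin$ to be this set (of size $q=\kwin$) and reading off $\z,\za$ from the construction yields a \AbiasedSdissol{q}{2}{2}, with Properties~\ref{prop:a})--\ref{prop:e}) of \cref{def:dissolution} and \cref{def:biased} immediate.

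The \emph{main obstacle} I anticipate is the cycle-level combinatorial lemma — both the upper bound ``at most $\lfloor|C|/4\rfloor$ winners on $C$'' via the $RL$-transition / $R$-block identity and the matching construction together with the exact divisibility condition $4\mid|C|$ — and getting the structural description of $E_\z$ fully airtight, in particular ruling out induced paths on three or more vertices. Once those are in place, both directions are just bookkeeping against \cref{def:dissolution} and \cref{def:biased}.
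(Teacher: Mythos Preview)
Your proposal is correct and follows essentially the same route as the paper: analyse the edge set~$E_\z$, show its components are $K_2$'s and even cycles, bound the winners per component, and use tightness of the global count $\sum_C\lfloor|C|/4\rfloor\le |V|/4=q$ to force every component to be a cycle of length divisible by four. The only noteworthy difference is in how the per-cycle bound is obtained: the paper uses a one-line pigeonhole (each winner absorbs two \partyAsupporters, the component supplies only $|V(C)|/2$), whereas you argue via the $RL$-transition count on the cyclic direction word---both give $\lfloor|C|/4\rfloor$, and your version has the mild advantage of making the equality case and the explicit construction for the ``if'' direction fall out of the same analysis.
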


\begin{figure}
   \centering
    \begin{tikzpicture}[scale=0.9]
      \begin{scope}[xshift=-3.0cm, yshift=0cm]
        \nodeABd{v1}{}
        \nodeAB{v2}{right=1.5 of v1v1}

        \draw [ZArrow] (v1) -- (v2) node [midway,name=e12]{\moveAB};;

        \node [below=2pt of e12] (comment) {a length-two path};
      \end{scope}
      
      \begin{scope}[xshift=-3cm, yshift=2.7cm]
        \nodeAB{v1}{}
        \nodeABd{v2}{right=1.5 of v1v1}
        \nodeAB{v3}{below=1.1 of v2v1}
        \nodeABd{v4}{below=1.1 of v1v1}

        \draw [ZArrow] (v2) -- (v1) node [midway]{\moveA};;
        \draw [ZArrow] (v2) -- (v3) node [midway]{\moveB};;
        \draw [ZArrow] (v4) -- (v1) node [midway]{\moveA};;
        \draw [ZArrow] (v4) -- (v3) node [midway,name=e43]{\moveB};;

        \node [below=2pt of e43] (comment) {a length-four cycle};
      \end{scope}

      \begin{scope}[xshift=3cm, yshift=2.7cm]
        \nodeAB{v1}{}
        \nodeABd{v2}{right=1.5 of v1v1}
        \nodeAB{v3}{below right=1.1 of v2v1}
        \nodeABd{v4}{below left=1.1 of v3v1}
        \nodeAB{v5}{left=1.5 of v4v1}
        \nodeABd{v6}{below left=1.1 of v1v1}

        \draw [ZArrow] (v2) -- (v1) node [midway]{\moveA};;
        \draw [ZArrow] (v2) -- (v3) node [midway]{\moveB};;
        \draw [ZArrow] (v4) -- (v5) node [midway,name=e45]{\moveA};;
        \draw [ZArrow] (v4) -- (v3) node [midway]{\moveB};;
        \draw [ZArrow] (v6) -- (v1) node [midway]{\moveA};;
        \draw [ZArrow] (v6) -- (v5) node [midway]{\moveB};;

        \node [below=2pt of e45] (comment) {a length-six cycle};
      \end{scope}
      
  \end{tikzpicture}
  \caption{Graphs induced by edge sets used by an \AbiasedSdissol{\kwin}{2}{2}.
           In order to have a majority of A-supporters (black dots) in at least
           half of the new districts, each component must be a cycle of length
           divisible by four.}
 \label{fig:RestrTwoFactor}
\end{figure}
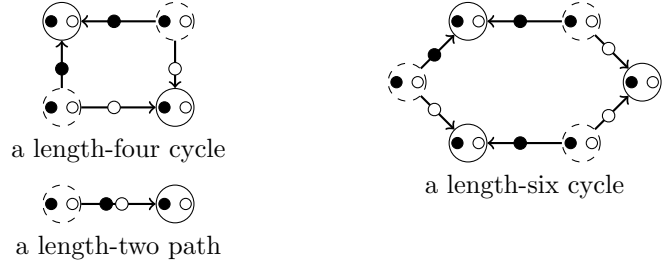

\begin{proof}
 \newcommand{\nqf}{\ensuremath{q}}
   For the ``only if'' part, let $E' \subseteq E$ be an edge subset such that each vertex in $G':=(V,E')$
   has degree two and $G'$~consists of disjoint cycles of lengths divisible by four.
   We now construct a \AbiasedSdissol{\nqf}{2}{2} $(\Vdis,\z,\za,\Vwin)$
   for $(G,\w)$.
   To this end, we start with $\Vdis:=\emptyset$, $\Vwin:=\emptyset$, %
   and do the following for each cycle~$c_1c_2\dots c_{4l}c_{1}, l\ge 1$.
   For each number~$i$ with $1 \le i \le 2l$, add $c_{2i}$ to $\Vdis$,
   and set $\z(c_{2i},c_{2i-1}):=\z(c_{2i},c_{(2i+1)\mod 4l}):=1$.
   For each $1\le i \le l$, we set
   \begin{alignat*}{2}
     \za(c_{4i-2},c_{4i-3})&:=1, &\quad \za(c_{4i-2},c_{4i-1})&:=0,\\
     \za(c_{4i},c_{(4i+1)\mod 4l})&:=1, & \za(c_{4i},c_{4i-1})&:=0.
   \end{alignat*}
   Doing this, every fourth vertex in each cycle receives two additional \partyAsupporters
   (see \cref{fig:RestrTwoFactor} for an illustration of the corresponding dissolutions).
   It is easy to verify that $(\Vdis,\z,\za,\Vwin)$ is indeed a \AbiasedSdissol{\nqf}{2}{2}.

   For the ``if'' part, let $(\Vdis,\z,\za,\Vwin)$ be a \AbiasedSdissol{\nqf}{2}{2} for $(G,\alpha)$.
   Furthermore, let $E_\z$ denote the edge set used by $(\Vdis,\z,\za,\Vwin)$.
   Each component~$C$ in $G[E_\z]$ 
   is either a path of length two or a cycle of even length and 
   consists of exactly $|V(C)|/2$ dissolved and $|V(C)|/2$ non-dissolved districts.
   Since each non-dissolved district needs at least two \partyAsupporters in order to win 
   and only $|V(C)|/2$ \partyAsupporters can be moved from the $|V(C)|/2$ dissolved districts,
   at most $|V(C)|/4$~districts can win.
   With $\kwin=q$, this implies that in total exactly $\nqf$ districts must win.
   This can only succeed if 
   each component~$C$ is a cycle of length divisible by four
   (also see \cref{fig:RestrTwoFactor} for an illustration).
\end{proof}

\noindent Now, we are ready to show that \probWinner is \NP-complete even for constant
values of $\sold$ and $\Deltas$, except if $\sold=\Deltas=1$, where it is
solvable in polynomial time.

\begin{theorem}\label{thm:biaseddissolution-dichotomy}
\probWinner can be solved in $O(n \cdot(m+n\log{n}))$ time if $\sold=\Deltas=1$;
otherwise it is \NP-complete.
\end{theorem}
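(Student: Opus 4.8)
The plan is to prove the dichotomy by a case distinction mirroring the statement, leaning on the structural results already in hand. Membership in \NP{} is clear in every case, since a quadruple $(\Vdis,\z,\za,\Vwin)$ can be checked against Properties~a)--e) of \cref{def:dissolution,def:biased} in polynomial time. For the tractable case $\sold=\Deltas=1$, I would invoke \cref{lem:biased 1-1-dissolution}: the given instance is a yes-instance if and only if the edge-weighted graph $(G,w)$ defined there (weight~$1$ on an edge joining two districts that each already contain an \partyAsupporter, weight~$0$ otherwise) admits a perfect matching of weight at least~$\kwin$. Hence it suffices to build $(G,w)$ in $O(m)$ time, compute a maximum-weight perfect matching (possible in $O(n\cdot(m+n\log n))$ time by a fast weighted matching algorithm), and compare its weight with~$\kwin$; this yields exactly the claimed running time.

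For the hardness side I would isolate three subcases. If $\sold\neq\Deltas$, there is nothing to do: \probValid{} is the special case of \probWinner{} with $\kwin=0$, and \probValid{} is \NP-hard for $\sold\neq\Deltas$ by \cref{thm:dissolution-dichotomy}. If $\sold=\Deltas=2$, I would reduce from \probRestTwoFactor{} with $L$ equal to the positive multiples of four, which is \NP-hard~\cite{HKKK88}; its \NP-hard instances may be assumed to have $|V(G)|=4q$, since otherwise the instance is trivially negative and we simply output a trivial no-instance of \probWinner{}. From such a graph $G$ we output the \probWinner{} instance $(G,2,2,q,\w)$ with $\w(v)=1$ for all~$v$, which by \cref{lem:biased 2-2-dissolution} is a yes-instance exactly when $G$ has a two-factor all of whose cycles have length divisible by four. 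Together with the $\sold\neq\Deltas$ case this already settles everything except $\sold=\Deltas=s\geq 3$.

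This remaining case is the genuinely new one, and it is where I expect the main obstacle. I would aim for a dedicated polynomial-time reduction from an \NP-hard exact-covering problem --- \probExCover{} for a suitable~$t\geq 3$ looks most promising --- built along the lines of the reduction in the proof of \cref{thm:dissolution-dichotomy}: each element and each set of the source instance is replaced by a clique-type gadget, now additionally equipped with an \partyAsupporter distribution, the bound $\kwin$ is set to $|X|$, and the gadgets are arranged so that a distinguished district in an element gadget wins precisely when ``this element is covered''. The difficulty is that, for $\sold=\Deltas$, the plain voter-count arithmetic used in \cref{thm:dissolution-dichotomy} degenerates (one has $g=s$ and $t=2$), so the \partyAsupporter count together with the winning threshold must now carry the exact-cover constraint; moreover the \emph{edge set used} by a dissolution is much harder to control than for $s=2$, where \cref{lem:biased 2-2-dissolution} pins $E_\z$ down to disjoint even cycles and length-two paths. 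For $s\geq 3$ a dissolved district may split its $s$ voters among up to~$s$ of its neighbors, so $E_\z$ can be a fairly unrestricted bipartite graph between $\Vdis$ and $V\setminus\Vdis$. The gadgets therefore have to be tuned --- using \cref{lem:bezout's identity} to pin down the ``overflow'' that each gadget is forced to emit or absorb, together with an \partyAsupporter distribution ensuring that a district can win only if that overflow consists, entirely or in a prescribed ratio, of \partyAsupporters --- so that every \biaseddissol of the constructed graph necessarily has the shape of an exact cover of $X$, and conversely every exact cover induces such a dissolution winning $|X|$ districts. Verifying both directions, and in particular ruling out ``unintended'' dissolutions that reach the threshold~$\kwin$ without corresponding to a cover, is where the bulk of the work lies.
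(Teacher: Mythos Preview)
Your case analysis and the handling of $\sold=\Deltas=1$, $\sold\neq\Deltas$, and $\sold=\Deltas=2$ match the paper exactly. The gap is in the $\sold=\Deltas=s\geq 3$ case.

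Your instinct to reduce from \probExCover{} is right, but the plan to adapt the clique gadgets of \cref{thm:dissolution-dichotomy} via B\'ezout-controlled overflow cannot work as stated: you yourself observe that when $\sold=\Deltas$ one has $g=\sold$ and the arithmetic collapses to $t=2$, so there is no nontrivial overflow to exploit --- every clique of even size admits a perfect matching and hence a self-contained \Sdissol{s}{s}, which means clique gadgets by themselves impose no structural constraint on a dissolution. The \partyAsupporter count must carry the entire argument, and without a concrete distribution and a concrete $\kwin$ you have no reduction, only a hope.

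The paper's construction for this case abandons cliques and B\'ezout entirely. For $t:=s$ it builds, for each block of $t$ elements, a \emph{$t$-elements gadget}: a $t$-star whose center has $0$ \partyAsupporters, whose $t$ leaves (``inner districts'') have $2$ each, and each leaf is attached to one further degree-one ``element district'' carrying $t$ \partyAsupporters. Each set~$Y\in\mathcal{C}$ gets a single ``set district'' with one \partyAsupporter, adjacent to its $t$ element districts; additionally $r-q$ ``dummy districts'' with $0$ \partyAsupporters are made adjacent to all set districts. One sets $\kwin=(t+1)q$, not $|X|$. The decisive device is the dummy layer: the $r-q$ dummies together with their set-district neighbors inevitably produce at least $r-q$ non-dissolved losing districts, so to reach $\kwin$ party~\partyA{} must win \emph{every} remaining non-dissolved district. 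That forces each center to be dissolved (a center cannot win), hence each inner district must receive one \partyBsupporter from its center and $t-1$ \partyAsupporters from its element district; the element district's last \partyAsupporter goes to some adjacent set district, and a set district wins only if it collects one such \partyAsupporter from each of its $t$ elements --- precisely an exact cover. It is the degree-one attachments and the dummy-district counting, not overflow arithmetic, that pin the dissolution down.
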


\begin{proof}
For $\sold=\Deltas=1$, \probWinner reduces to computing a maximum-weight perfect matching
(see \cref{lem:biased 1-1-dissolution}).
This can be done in $O(n \cdot(m+n\log{n}))$ time~\cite{Gab90}.

It is easy to see that \probWinner is in \NP.
Now, we show the \NP-hardness for $\sold=\Deltas\ge 2$.
For $\sold=\Deltas=2$, observe that \cref{lem:biased 2-2-dissolution} 
implicitly provides a \pt{} reduction from the graph problem
\probRestTwoFactor to \probWinner with $L \subseteq \{3,\dots,|V|\}$.

Two-factors of graphs are computable in polynomial time~\cite{EJ70}.
  However, \probRestTwoFactor is \NP-hard if $(\{3,4,\dots,|V|\} \setminus L) \nsubseteq \{3,4\}$~\cite{HKKK88}.
By \cref{lem:biased 2-2-dissolution}, $(G=(V,E),L)$ with $|V|=4q$ and 
$L=\{4,8,\dots,4q\}$ is a yes-instance of \probRestTwoFactor if and only if
$(G,2,2,q,\w)$ with $\w(v)=1$ for all $v \in V$ is a yes-instance of \probWinner.
  Since $(\{3,4,\dots,|V|\} \setminus \{4,8,\dots,4q\}) \nsubseteq \{3,4\}$ for all $q>1$,
  it follows that \probWinner is \NP-complete when $\sold=\Deltas=2$.
  
  For $\sold=\Deltas\ge3$, we show \NP-hardness by a \pt{} reduction
  from the \NP-complete \probExCover for $t\ge3$ (see the corresponding
  definition in \cref{sec:dissolution-dichotomy}).
  Given an \probExCover instance $(X,\mathcal{C})$ with $|X|=t \cdot q$ elements
  and $r:=|\mathcal{C}|$, we construct a \probWinner instance $(G=(V,E),t,t,\kwin,\w)$. 

  \newcommand{\elementgadget}{$t$-elements gadget}

  To construct the graph~$G$, we use the so-called \emph{\elementgadget{}}.
  A \elementgadget{} consists of a $t$-star where each leaf has an additional degree-one neighbor.
  We call the degree-$t$ vertex \emph{center district}, the original star leaves
  \emph{inner districts}, and the additional degree-one vertices \emph{element districts}.
  A $3$-element gadget is illustrated in \cref{fig:3-element gadget}.
  Now, we add to the graph~$G$ the following:
  \begin{itemize}
    \item $q$ \elementgadget{}s;
          we arbitrarily identify each element~$x \in X$ with exactly one of
          the $(q\cdot t)$ \emph{element districts}
          that is denoted as $v_x$ in the following,
    \item for each subset~$Y \in \mathcal{C}$ a \emph{set district}~$v_{Y}$, and
    \item $r-q$~\emph{dummy districts}.
\end{itemize}
Then, we connect each set district~$v_{Y}$ with each element district~$v_x, x \in Y$,
and connect each dummy district with each set district.
We set %
the number~$\kwin$ of winning districts to $(t+1)\cdot q$.

We now describe how many \partyAsupporters each district
contains~(that is, the function~$\w$).
\begin{itemize}
\item The dummy district contains no \partyAsupporters. 
\item Each set district contains exactly one \partyAsupporter.
\item For each \elementgadget, 
  the center district contains no \partyAsupporters,
  each inner district contains exactly two \partyAsupporters, 
  and each element district contains $t$~\partyAsupporters.
\end{itemize}
This concludes the construction which is illustrated for~$t=3$
in \cref{fig:ExCover-3-3-dissol}.

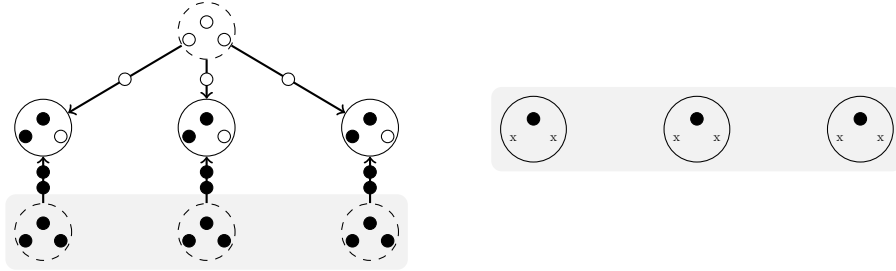
\begin{figure}
   \centering
    \newcommand{\scale}{1}
    \begin{tikzpicture}[scale=\scale]
        \nodeBBBd{c}{};
        \nodeAAB{l2}{below=0.7 of c}
        \nodeAAB{l1}{left=2.0 of l2v1}
        \nodeAAB{l3}{right=2.0 of l2v1}
        \nodeAAAd{u1}{below=0.8 of l1}
        \nodeAAAd{u2}{below=0.8 of l2}
        \nodeAAAd{u3}{below=0.8 of l3}

        \draw [ZArrow] (c) -- (l1) node [midway,name=ec1]{\lB{\scale}};
        \draw [ZArrow] (c) -- (l2) node [midway,name=ec2]{\lB{\scale}};
        \draw [ZArrow] (c) -- (l3) node [midway,name=ec3]{\lB{\scale}};
        \draw [ZArrow] (u1) -- (l1) node [midway,name=e11]{\lAA{\scale}};
        \draw [ZArrow] (u2) -- (l2) node [midway,name=e22]{\lAA{\scale}};
        \draw [ZArrow] (u3) -- (l3) node [midway,name=e33]{\lAA{\scale}};

        \nodeAxx{d1}{right=2 of l3v1}
        \nodeAxx{d2}{right=2.0 of d1v1}
        \nodeAxx{d3}{right=2.0 of d2v1}
     \begin{pgfonlayer}{background}
      \node (d_fit13) [draw=none,transform shape,rounded corners,fill=gray!10,fit=(d1) (d3)] {};
      \node (u_fit13) [draw=none,transform shape,rounded corners,fill=gray!10,fit=(u1) (u3)] {};
     \end{pgfonlayer}
  \end{tikzpicture}
  \caption{Left: A $3$-elements gadget. The only dissolution where A wins all districts 
           requires dissolving the top district and moving exactly one B-supporter
           from the top district to each neighbor.
           Right: Gadget symbol in the construction.}
 \label{fig:3-element gadget}
\end{figure}

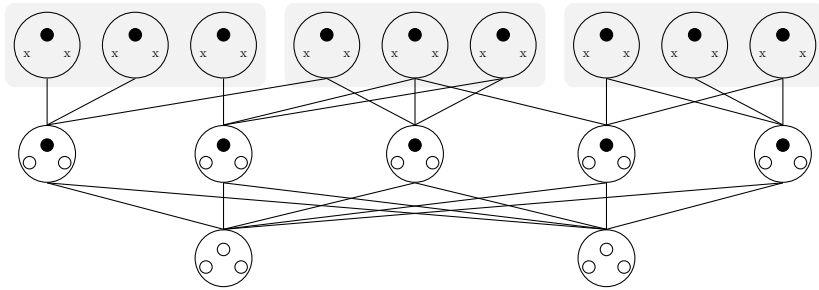
\begin{figure}
   \centering
    \begin{tikzpicture}[scale=1]
      \nodeAxx{e1}{}
      \nodeAxx{e2}{right=1.0 of e1v1}
      \nodeAxx{e3}{right=1.0 of e2v1}
      \begin{pgfonlayer}{background}
       \node (e_fit13) [draw=none,transform shape,rounded corners,fill=gray!10,fit=(e1) (e3)] {};
      \end{pgfonlayer}
      \nodeAxx{e4}{right=1.2 of e3v1}
      \nodeAxx{e5}{right=1.0 of e4v1}
      \nodeAxx{e6}{right=1.0 of e5v1}
      \begin{pgfonlayer}{background}
       \node (e_fit46) [draw=none,transform shape,rounded corners,fill=gray!10,fit=(e4) (e6)] {};
      \end{pgfonlayer}
      \nodeAxx{e7}{right=1.2 of e6v1}
      \nodeAxx{e8}{right=1.0 of e7v1}
      \nodeAxx{e9}{right=1.0 of e8v1}
      \begin{pgfonlayer}{background}
       \node (e_fit79) [draw=none,transform shape,rounded corners,fill=gray!10,fit=(e7) (e9)] {};
      \end{pgfonlayer}

      \nodeABB{S1}{below=0.8 of e1}
      \nodeABB{S2}{below=0.8 of e3}
      \nodeABB{S3}{below=0.8 of e5}
      \nodeABB{S4}{below=0.8 of e7}
      \nodeABB{S5}{below=0.8 of e9}

      \draw [-] (S1.north) -- (e1.south);
      \draw [-] (S1.north) -- (e2.south);
      \draw [-] (S1.north) -- (e4.south);
      \draw [-] (S2.north) -- (e3.south);
      \draw [-] (S2.north) -- (e5.south);
      \draw [-] (S2.north) -- (e6.south);
      \draw [-] (S3.north) -- (e4.south);
      \draw [-] (S3.north) -- (e5.south);
      \draw [-] (S3.north) -- (e6.south);
      \draw [-] (S4.north) -- (e5.south);
      \draw [-] (S4.north) -- (e7.south);
      \draw [-] (S4.north) -- (e9.south);
      \draw [-] (S5.north) -- (e7.south);
      \draw [-] (S5.north) -- (e8.south);
      \draw [-] (S5.north) -- (e9.south);

      \nodeBBB{D1}{below=0.8 of S2}
      \nodeBBB{D2}{below=0.8 of S4}

      \draw [-] (D1.north) -- (S1.south);
      \draw [-] (D1.north) -- (S2.south);
      \draw [-] (D1.north) -- (S3.south);
      \draw [-] (D1.north) -- (S4.south);
      \draw [-] (D1.north) -- (S5.south);
      \draw [-] (D2.north) -- (S1.south);
      \draw [-] (D2.north) -- (S2.south);
      \draw [-] (D2.north) -- (S3.south);
      \draw [-] (D2.north) -- (S4.south);
      \draw [-] (D2.north) -- (S5.south);
 \end{tikzpicture}
  \caption{Illustration of the construction for $t=3$, $r=5$, and $q=3$.}
 \label{fig:ExCover-3-3-dissol}
\end{figure}

  Now, we show that $(X,\mathcal{C})$ is a yes-instance of \probExCover if and only if
  the constructed \probWinner instance $(G,t,t,(t+1)q,\w)$ is a yes-instance.

  \newcommand{\dummy}{\ensuremath{v_{\text{dummy}}}}

  For the ``only if'' part, let $\mathcal{C}' \subseteq \mathcal{C}$ be a subcollection such that
  each element of~$X$ is contained in exactly one subset of~$\mathcal{C}'$.
  A \AbiasedSdissol{(t+1)q}{t}{t} can be constructed as follows.
  Dissolve each center district and move one \partyBsupporter to each of its adjacent inner districts.
  Dissolve each element district and move $(t-1)$ \partyAsupporters to its uniquely determined adjacent inner district.
  For each element district $v_x, x \in X$, move the remaining \partyAsupporter
  to the set district~$v_Y,Y \in \mathcal{C}'$, with $x \in Y$.
  Since $\mathcal{C}'$ partitions $X$, 
  $v_Y$~is uniquely determined.
  The set~$\Vwin$ of winning districts consists of all inner districts and the set districts
  corresponding to the sets in~$\mathcal{C}'$.
  For each dummy district~$\dummy$, uniquely choose one of the set districts~$v_Y, Y\notin\mathcal{C}'$,
  and move all  voters from~$\dummy$ to~$v_Y$.
  This is possible because there are $r-q$ dummy districts and $r-q$ set districts~$v_Y,Y\notin\mathcal{C}'$,
  and each dummy district is adjacent to each set district.
  
  To show that this indeed gives a \AbiasedSdissol{(t+1)q}{t}{t},
  observe that we move all $t$~voters from each dissolved district to
  the adjacent non-dissolved districts.  Each inner district receives
  $\Deltas=t$ voters: $t-1$ \partyAsupporters and one \partyBsupporter.  
  Since each inner district initially contained two
  \partyAsupporters, party~\partyA wins a total of $t\cdot q$ inner districts.
  Each set district $v_Y, Y \in \mathcal{C}'$, receives $t$~\partyAsupporters and
  initially contains one \partyAsupporter.
  Furthermore, $|\mathcal{C}'|=q$, and hence, party~\partyA wins~$q$ set districts in
  total and loses the remaining $r-q$ set districts.
  Thus, we indeed constructed a \AbiasedSdissol{(t+1)q}{t}{t}.

  For the ``if'' part, assume that there is some
  \AbiasedSdissol{(t+1)q}{t}{t} for the constructed instance.
  Since $\sold=\Deltas$ and $G$ has $2t\cdot q+2m$ districts,
  after the dissolution
  a total number of $t\cdot q+r$ districts is dissolved and party~\partyA wins at least $(t+1)q$ districts and loses at most $r-q$ districts.  
  Observe that the only neighbors of the dummy districts are the set districts and hence, by the construction of function~$\w$,
  party~\partyA cannot win any non-dissolved district that receives/contains at least one voter from a dummy district.
  Furthermore, since the set of the $(r-q)$ dummy districts and the set of their neighboring districts build a bipartite induced subgraph,
  there are $(r-q)$ non-dissolved districts which may receive/contain any voters from the dummy districts.
  Thus, party~\partyA loses at least $r-q$ non-dissolved districts.
  Since $\kwin=(t+1)q$, party~\partyA loses \emph{exactly} $r-q$ districts.
  In particular, each of the losing districts contains at least one voter (originally) from a dummy district.
  This implies that party~\partyA has to win each non-dissolved set district, 
  element district, inner district, or center district.
  However, the construction of $\w$ forbids \partyA to win a
  center district or to win an inner district if one
  moves two \partyBsupporters to it.  Thus, we dissolve each center district 
  and move exactly one \partyBsupporter from this center district to each of its adjacent inner districts. 
  As a direct consequence, all element districts are to be dissolved and $t-1$ voters are moved from each element district to its adjacent inner districts such that~\partyA wins all $t\cdot q$ inner districts.  
  There are $t\cdot q$ \partyAsupporters left, one \partyAsupporter from each element district.
  These voters are to be moved to a set of exactly $q$ winning set districts each.
  Since each of these districts needs at least $t$ \partyAsupporters to win and
  has exactly $t$~adjacent element districts, $C':=\{S \in \mathcal{C} \mid v_S \in \Vwin\}$ partitions~$X$.
\end{proof}

\section{Special graph classes}
\label{sec:graph-classes}
First, in \cref{sec:planar}, we consider \probWinner{} on planar
graphs. This problem restriction is interesting especially in the
political districting context since the neighborhood relation between
voting districts on a map is typically planar.  We will see that
\probValid{} and, thus, \probWinner{}, unfortunately remains \NP-hard
for many choices of~$s$ and~$\Delta_s$.

Second, in \cref{sec:clique}, we show that \probWinner{} is \pt{}
solvable on cliques, that is, if voters may be moved unrestrictedly
between dissolved districts and non-dissolved districts.

Finally, in \cref{sec:tw}, we consider \probWinner{} on graphs of
bounded treewidth. This problem restriction is interesting in the
context of distributed systems since computers are often
interconnected using a tree, star, or bus topology.  By presenting a
formulation of \probWinner{} in the monadic second-order logic of
graphs, we show that \probWinner{} is solvable in linear time on
graphs of bounded treewidth when~$s$~and~$\Delta_s$ are
constant. This, however, should be understood as a pure classification
result rather than as an implementable algorithm.

\subsection{Planar graphs}\label{sec:planar}
Computing star partitions is known to be \NP-hard 
even on subcubic grid graphs and split graphs~\cite{BBBCFNW14}.
By \cref{prop:starpacking-relation} in \cref{sec:Relation2Starmatching}
it follows that \probValid is also \NP-hard on planar graph because grid graphs are planar.
However, the \NP-hardness reduction on subcubic grid graphs requires
stars with two leaves such that the \NP-hardness does only transfer
to computing \Sdissol{1}{2}{}s.
Here, we show that \NP-hardness for \probValid{} holds for any constants~$\sold$
and~$\Deltas$ such that $\Delta_s$~divides~$s$ or $s$~divides~$\Delta_s$.

By giving a \pt{} %
reduction from the following
\NP-complete problem, it is easy to derive \NP-hardness results for
\probValid{}.

\decprob{\probPlanarPacking{$H$}}
{A planar undirected graph~$G=(V, E)$.}
{Does $G$ contain an
  \emph{$H$-factor}~$V_1,V_2,\ldots,V_{\lceil|V|/|V(H)|\rceil}$ that
  partitions the vertex set~$V$ such that $G[V_i]$ is isomorphic to
  $H$ for all $i$?}

\noindent \probPlanarPacking{$H$} is \NP-complete for any connected
outerplanar graph~$H$ with three or more vertices~\cite{BJLSS90}.  In
particular, \probPlanarPacking{$H$} is \NP-complete for any $H$ being a
star of size at least three.
This makes it easy to prove the following theorem:

\begin{theorem}
  \label{thm:planar-d|delta}
  \probValid on planar graphs is \NP-complete for all~$s\neq\Delta_s$
  such that $\Delta_s$~divides~$s$ or $s$~divides~$\Delta_s$. It is
  \pt{} solvable for~$s=\Delta_s$.
\end{theorem}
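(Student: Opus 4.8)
The plan is as follows. The polynomial-time half requires nothing new: by \cref{thm:dissolution-dichotomy}, \probValid is solvable in $O(n^{\omega})$ time whenever $\sold=\Deltas$, and this in particular covers planar inputs. So assume from now on that $\sold\neq\Deltas$ with $\Deltas\mid\sold$ or $\sold\mid\Deltas$; the goal is to show \NP-completeness, where membership in \NP{} is immediate since a purported \dissol $(\Vdis,\z)$ can be verified in polynomial time.

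The first step is to collapse both divisibility cases into a single star-partition question. Let $t\ge2$ be the integer with $\sold=t\Deltas$ (if $\Deltas\mid\sold$) or $\Deltas=t\sold$ (if $\sold\mid\Deltas$). By \cref{prop:starpacking-relation} --- applied, in the case $\sold\mid\Deltas$, after interchanging the roles of $\sold$ and $\Deltas$ via \cref{lem:d-delta-mirror} --- a graph $G$ admits an \dissol if and only if $G$ has a \starpartition{t}. Hence it is enough to prove that, for every fixed integer $t\ge2$, it is \NP-hard to decide whether a \emph{planar} graph has a \starpartition{t}.

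For this I would reduce from \probPlanarPacking{$K_{1,t}$}, which is \NP-complete because $K_{1,t}$ is a connected outerplanar graph on $t+1\ge3$ vertices. The reduction simply forwards the planar input graph $G$ (equivalently, it outputs the \probValid instance $(G,\sold,\Deltas)$). One direction is trivial: a $K_{1,t}$-factor of $G$ is in particular a \starpartition{t}, so yes-instances map to yes-instances. For the converse I would use that the host graph $G$ may be assumed \emph{triangle-free} (indeed bipartite): if a size-$(t+1)$ part $V_i$ of a \starpartition{t} contains a copy of $K_{1,t}$, then its center $c$ is adjacent in $G[V_i]$ to the other $t$ vertices, and triangle-freeness forbids any edge among those $t$ vertices (such an edge would close a triangle through $c$), so $G[V_i]$ is exactly $K_{1,t}$; thus on triangle-free graphs a \starpartition{t} coincides with a $K_{1,t}$-factor, and no-instances map to no-instances.

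The step I expect to require the most care --- the main obstacle --- is justifying that \probPlanarPacking{$K_{1,t}$} stays \NP-complete when the host graph is restricted to be triangle-free (e.g.\ bipartite) and planar. For $t=2$ this is already available, since computing \starpartitions{2} is \NP-hard on subcubic grid graphs~\cite{BBBCFNW14}, which are triangle-free and planar. For general $t$ I would verify that the hardness construction for \probPlanarPacking{$K_{1,t}$} of~\cite{BJLSS90} yields bipartite planar instances (its incidence-type gadgets are bipartite). With that in hand, the chain of equivalences above establishes \NP-completeness of \probValid on planar graphs for all such $\sold$ and $\Deltas$, which together with the matching-based algorithm for $\sold=\Deltas$ completes the proof.
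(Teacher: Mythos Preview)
Your approach is the same as the paper's: reduce an \dissol to a \starpartition{t} via \cref{prop:starpacking-relation} (together with \cref{lem:d-delta-mirror} to cover the case $\sold\mid\Deltas$, which the paper leaves implicit), and then invoke the \NP-hardness of \probPlanarPacking{$K_{1,t}$} from~\cite{BJLSS90}. The one divergence is your triangle-free/bipartite detour to reconcile ``\starpartition{t}'' with ``$K_{1,t}$-factor''; this is an overcorrection induced by the paper's phrasing ``$G[V_i]$ is isomorphic to~$H$.'' In the standard formulation (and in~\cite{BJLSS90}) a perfect $H$-matching is a family of vertex-disjoint, not necessarily induced, subgraphs of~$G$, each isomorphic to~$H$, covering~$V(G)$. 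Under that definition a perfect $K_{1,t}$-matching of~$G$ is \emph{literally} a \starpartition{t} of~$G$, so the identity map is already a correct reduction and no triangle-freeness argument is needed. Dropping that detour also eliminates the one soft spot in your write-up, namely the unverified claim that the construction in~\cite{BJLSS90} yields bipartite instances for all~$t$.
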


\begin{proof}
  We have already shown in \cref{thm:dissolution-dichotomy} how to
  solve \probValid{} in polynomial time for~$s=\Delta_s$.
  Hence, now  assume that $\Deltas\neq s$ and~$s$ divides~$\sold$.
  Let $x:=\sold / \Deltas\ge 2$.
  Due to \cref{prop:starpacking-relation} and the fact
  that \probPlanarPacking{$K_{1,x}$} is \NP-complete~\cite{BJLSS90} we can
  conclude that \probValid is \NP-complete even on planar graphs.
\end{proof}

\noindent
It seems to be challenging to transfer the dichotomy
result for \probValid{} on general graphs
(\cref{thm:dissolution-dichotomy}) to the case of planar
graphs. The main problem is that the proof of
\cref{thm:dissolution-dichotomy} exploits \probExCover{} to be
\NP-hard for all~$t\geq 3$. The reduction from \probExCover{} to
\probValid{} produces a graph that contains the
incidence graph of the \probExCover{} instance as a subgraph. %
To obtain a reduction to \probValid on planar graphs, 
it is necessary to have planar incidence graphs of \probExCover.
It is,
however, unknown whether this problem variant, called \textsc{Planar}
\probExCover{}, is \NP-hard for~$t\geq 4$. One might be misled to
think that \probExCover{} is \NP-hard for~$t\geq 4$ since it already is
\NP-hard for~$t=3$. However, the closely related problem \textsc{Planar
  3-Sat}, that is, \textsc{3-Sat} with planar clause-literal incidence
graphs, is \NP-complete, whereas \textsc{Planar 4-Sat} is 
\pt{} solvable: one can show that the
clause-literal incidence graph of a \textsc{Planar 4-Sat} instance
allows for a matching such that each clause is matched to some
literal. These literals can then be simply set to true in order to
satisfy all clauses.
We consider the question whether \textsc{Planar Exact Cover by 4-Sets}
is \NP-hard of independent interest.

\subsection{Cliques}\label{sec:clique}
If the neighborhood graph is a clique, that is, the districts are
fully connected such that voters can \move from any dissolved district
to any non-dissolved district, then the existence of an \dissol
depends only on the number~$|V|$ of districts, the district size~$s$,
and the size increase~$\Deltas$.  Clearly, a \probValid instance is a
yes-instance if and only if~$d := |V|\cdot\Deltas/(s+\Deltas)$ is an
integer.  We now show that \probWinner is not as easy but still
solvable in polynomial time if the neighborhood graph is a clique.
The basic idea is to dissolve districts with a large number
of~\partyAsupporters while minimizing the number of losing districts
by letting the districts with the smallest number of \partyAsupporters
lose.

\begin{theorem}\label{thm:cliquesP}
  \probWinner on cliques is solvable in~$O(|V|^2)$ time.
\end{theorem}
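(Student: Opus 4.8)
The plan is to exploit that on a clique every voter of a dissolved district may be moved to any non-dissolved district, so the whole problem collapses to a counting question about the \partyAsupporter counts. Write $\snew=\sold+\Deltas$ and let $h:=\lceil(\snew+1)/2\rceil$ be the smallest integer exceeding $\snew/2$; a non-dissolved district~$v$ wins exactly when it ends up with at least~$h$ \partyAsupporters, that is, when it receives at least $\demand(v):=\max\{0,\,h-\w(v)\}$ of them, which is possible only if $\demand(v)\le\Deltas$. First I would show that, provided $\kdis=|V|\cdot\Deltas/\snew$ is an integer (otherwise no dissolution exists at all), $(G,\w)$ admits a $\kwin$-biased dissolution if and only if $V$ can be partitioned into a dissolved set~$\Vdis$, a winner set~$\Vwin$, and a loser set~$L$ with $|\Vdis|=\kdis$, $|\Vwin|=\kwin$, with $\demand(v)\le\Deltas$ for every $v\in\Vwin$, and
\[
 \sum_{v\in\Vwin}\demand(v)\ \le\ \sum_{v\in\Vdis}\w(v).
\]
For the ``if'' direction one sends $\demand(v)$ \partyAsupporters into each $v\in\Vwin$ and then fills all remaining slots of all non-dissolved districts with the leftover voters; since the bipartite neighbourhood graph between $\Vdis$ and $V\setminus\Vdis$ is complete, this succeeds as soon as the displayed inequality holds. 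Equivalently, the flow network of \cref{thm:P-Vdis+Vwin-known} for fixed $\Vdis$ and $\Vwin$ becomes complete bipartite on a clique, and a short minimum-cut computation shows that its maximum flow attains the value~$\sold\cdot\kdis$ precisely when the inequality together with the conditions $\demand(v)\le\Deltas$ hold. The ``only if'' direction is immediate, keeping any~$\kwin$ of the actually winning districts as~$\Vwin$.

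The second step is to determine an optimal partition greedily. Using $\demand(v)+\w(v)=\max\{\w(v),h\}$, the inequality above rewrites as $\sum_{v\in\Vwin}\max\{\w(v),h\}\le\sum_{v\in\Vdis\cup\Vwin}\w(v)$, whose right-hand side depends only on the loser set. I would then argue by exchange: swapping a loser with a non-loser of strictly larger \partyAsupporter count never destroys feasibility, so one may assume $L$ consists of the $|V|-\kdis-\kwin$ districts with the fewest \partyAsupporters. Among the remaining $\kdis+\kwin$ districts, every district with $\w(v)<h-\Deltas$ is ineligible to win and must be dissolved, which fits into~$\Vdis$ iff there are at most~$\kdis$ such districts; subject to this, one picks as~$\Vwin$ the~$\kwin$ \emph{eligible} districts minimising $\sum_{v\in\Vwin}\max\{\w(v),h\}$ --- namely those whose \partyAsupporter count is as close to, but not above,~$h$ as possible --- and lets~$\Vdis$ be the rest. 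The instance is a yes-instance precisely when $\kdis$ is integral, $|V|-\kdis-\kwin\ge 0$, at most~$\kdis$ of the non-loser districts are ineligible, and the resulting inequality holds; correctness follows from the first step together with the two exchange claims.

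For the running time, sorting the districts by \partyAsupporter count once and running a constant number of linear scans to evaluate the greedy rule takes $O(|V|\log|V|)$ time, well within the claimed $O(|V|^2)$ bound (with linear-time selection instead of a full sort it is even $O(|V|)$). The step I expect to be the main obstacle is getting the greedy rule exactly right: it is tempting to guess that~$\Vdis$ should simply be the $\kdis$ districts with the most \partyAsupporters, but this is false when many districts already sit just below the threshold~$h$, since then dissolving a high-$\w(v)$ district wastes a cheap potential winner; the clean trade-off only becomes visible after rewriting feasibility in terms of $\max\{\w(v),h\}$ and separately handling the ineligible districts that are forced into~$\Vdis$.
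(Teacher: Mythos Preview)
Your approach is essentially the paper's: fix the loser set~$L$ as the districts with fewest \partyAsupporters via an exchange argument, force the non-winnable non-losers into~$\Vdis$, and choose the remaining dissolved districts to be those with the most \partyAsupporters. The paper phrases the exchange arguments directly on dissolutions (its Claims~1 and~2) and iterates over all values of~$\ell$ to solve the optimisation version, whereas you first derive the clean algebraic feasibility test $\sum_{v\in\Vwin}\max\{\w(v),h\}\le\sum_{v\in\Vdis\cup\Vwin}\w(v)$ and then optimise the partition; your rewriting makes the greedy rule more transparent and yields the sharper $O(|V|\log|V|)$ bound for the decision version.

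One slip to fix: your exchange statement is inverted. You write ``swapping a loser with a non-loser of strictly larger \partyAsupporter count never destroys feasibility,'' but if the non-loser~$w$ has $\w(w)>\w(v)$ and you move~$w$ into~$L$, then either $\sum_{\Vdis}\w$ drops (if $w\in\Vdis$) or the new winner~$v$ may be ineligible (if $w\in\Vwin$). The direction that actually works --- and the one you need for your conclusion --- is the opposite: if a loser has \emph{more} \partyAsupporters than some non-loser, swapping them preserves feasibility. With that corrected, your argument goes through.
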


\begin{proof}
  In fact, we show how to solve the optimization version of
  \probWinner, where we maximize the number~$\kwin$ of winning
  districts.  Intuitively, it appears to be a reasonable approach to
  dissolve districts pursuing the following two objectives.  Our
  \emph{first objective} is that any losing district should contain as
  few \partyAsupporters as possible.  Our \emph{second objective} is
  that any winning district should contain only as
  many \partyAsupporters as necessary.  Dissolving districts this way
  minimizes the number of ``wasted'' \partyAsupporters.

  We now show that this greedy strategy is indeed optimal.  To this
  end, let $G=(V,\binom{V}{2})$ be a clique, let $\w$ be
  an \partyAsupporter distribution over $V$, and let $\sold$ and
  $\Deltas$~be the district size and the district size increase.  With
  $G$ being a complete graph, we are free to move voters from any
  dissolved district to any non-dissolved district.  Let~$\mu :=
  \lfloor (s+\Deltas)/2\rfloor+1$ be the minimum number
  of \partyAsupporters required to win a district.  Thus, a district
  with less than~$(\mu - \Deltas)$
  \partyAsupporters can never win. 
  Define~$\mathcal{L}:=\{v\in V \mid \w(v)< \mu - \Deltas\}$
  to be the set of \emph{non-winnable} districts.

  Our strategy can be sketched as follows
  (see also \cref{fig:winnable-non-winnable} for an illustration).
  Assume that $\kdis$~districts have to be dissolved and $\ell$~districts have to lose,
  and let $\mu$~denote the number of \partyAsupporters needed to win a district.
  Sort the districts according to the number of \partyAsupporters.
  Mark the $\ell$~districts with the fewest number of \partyAsupporters as losing.
  Dissolve all non-marked non-winnable districts.
  If necessary, then also dissolve winnable districts beginning with those
  with the most \partyAsupporters until~$\kdis$ districts have been dissolved.
  Finally, check whether this gives a solution.

  \begin{figure}
   \centering
    \begin{tikzpicture}[scale=0.45,decoration={brace,amplitude=2pt}]
      \begin{scope}
      \draw[->] (0,0) -- (0,7);
      \draw[->] (0,0) -- (10,0);

      \draw[-,dotted] (-0.2,5) -- (10,5);
      \node[draw=none,anchor=east] at (-0.2,5) {$\mu$};
      \draw[-,dotted] (-0.2,2.5) -- (10,2.5);
      \node[draw=none,anchor=east] at (-0.2,2.5) {$\mu - \Deltas$};

      \draw [fill=gray!50] (0.8,0) rectangle (1.2,0.05);
      \draw [fill=gray!50] (1.8,0) rectangle (2.2,0.5);
      \draw [fill=gray!50,pattern=north west lines] (2.8,0) rectangle (3.2,1);
      \draw [fill=gray!50,pattern=north east lines] (3.8,0) rectangle (4.2,1.5);
      \draw [fill=gray!50] (4.8,0) rectangle (5.2,2.5);
      \draw [fill=gray!50] (5.8,0) rectangle (6.2,2.5);
      \draw [fill=gray!50] (6.8,0) rectangle (7.2,3);
      \draw [fill=gray!50] (7.8,0) rectangle (8.2,3.5);
      \draw [fill=gray!50,pattern=dots] (8.8,0) rectangle (9.2,6);

      \draw [decorate,decoration={brace,amplitude=5pt}] (4.2,0) -- (0.8,0)
            node [midway,anchor=north,inner sep=5pt, outer sep=5pt]{non-winnable};
      \draw [decorate,decoration={brace,amplitude=5pt}] (9.2,0) -- (4.8,0)
            node [midway,anchor=north,inner sep=5pt, outer sep=5pt]{winnable};
      \end{scope}
      \begin{scope}[xshift=12.5cm]
      \draw[->] (0,0) -- (0,7);
      \draw[->] (0,0) -- (10,0);

      \draw[-,dotted] (-0.2,5) -- (10,5);
      \draw[-,dotted] (-0.2,2.5) -- (10,2.5);

      \draw [fill=gray!50] (0.8,0) rectangle (1.2,0.05);
      \draw [fill=gray!50] (1.8,0) rectangle (2.2,0.5);
      \draw [fill=gray!50] (4.8,0) rectangle (5.2,2.5);
      \draw [fill=gray!50,pattern=north east lines] (4.8,2.5) rectangle (5.2,4);
      \draw [fill=gray!50,pattern=north west lines] (4.8,4) rectangle (5.2,5);
      \draw [fill=gray!50] (5.8,0) rectangle (6.2,2.5);
      \draw [fill=gray!50,pattern=dots] (5.8,2.5) rectangle (6.2,5);
      \draw [fill=gray!50] (6.8,0) rectangle (7.2,3);
      \draw [fill=gray!50,pattern=dots] (6.8,3) rectangle (7.2,5);
      \draw [fill=gray!50] (7.8,0) rectangle (8.2,3.5);
      \draw [fill=gray!50,pattern=dots] (7.8,3.5) rectangle (8.2,5);

      \draw [decorate,decoration={brace,amplitude=5pt}] (2.2,0) -- (0.8,0)
            node [midway,anchor=north,inner sep=5pt, outer sep=5pt]{losing};
      \draw [decorate,decoration={brace,amplitude=5pt}] (8.2,0) -- (4.8,0)
            node [midway,anchor=north,inner sep=5pt, outer sep=5pt]{winning};
    \end{scope}
    \end{tikzpicture}
    \caption{Assume that we want to find a 4-biased \dissolution for
      the instance illustrated on the left hand side, where each
      district is represented by a bar of height proportional its
      number of \partyAsupporters.  Following our first objective,
      we dissolve the two non-winnable districts with the
      most \partyAsupporters and, following our second objective,
      we dissolve the winnable district with the most \partyAsupporters.
      Non-dissolved districts are represented by bars filled with
      solid gray. Each dissolved district is represented by a bar
      filled with an individual pattern.  The diagram on the right
      illustrates the solution, where the \partyAsupporters of the two
      dissolved non-winnable districts moved to the first winnable
      district and the \partyAsupporters of the dissolved winnable
      district moved to the three remaining winnable districts.  }
   \label{fig:winnable-non-winnable}
  \end{figure}
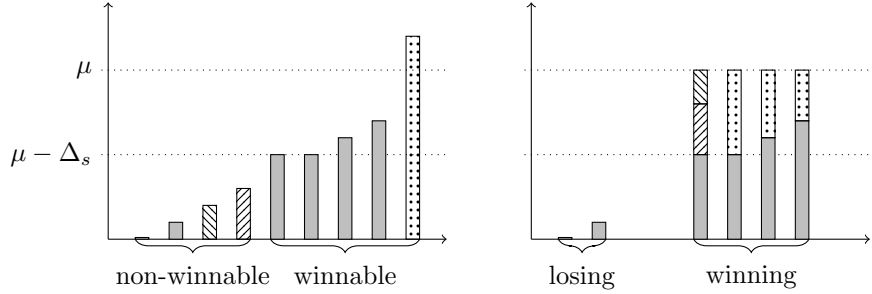

  Our first claim corresponds to the first objective above, that is,
  the losing districts should contain a minimal number of~\partyA-supporters.
  \begin{claim}
    \label{clm1}
    Let $v,w\in V$ be two districts with~$\w(v)\le\w(w)$.
    If there exists an~$\kwin$-biased dissolution where~$v$ is winning
    and~$w$ is losing, then there also exists an~$\kwin$-biased
    dissolution where~$v$ is losing and~$w$ is winning.
  \end{claim}
  \noindent To verify \cref{clm1}, let $(\Vdis,\z,\za,\Vwin)$ be an
  $\kwin$-biased dissolution.
  Let~$v\in\Vwin$ and~$w\in (V\setminus\Vdis) \setminus \Vwin$ be two districts such that~$\w(v)\le\w(w)$.
  Now, simply exchange~$v$ and~$w$, that is,
  set~$\Vwin':=(\Vwin\setminus\{v\})\cup\{w\}$
  and define for all~$(x,y)\in \A{\Vdis, G}$:
  \[\z'(x,y):=\begin{cases}
              \z(x,w)\;& \text{if } y=v\\
              \z(x,v)\;& \text{if } y=w\\
              \z(x,y)\;&\text{else,}
              \end{cases}\quad
  \za'(x,y):=\begin{cases}
             \za(x,w)\;& \text{if } y=v\\
             \za(x,v)\;& \text{if } y=w\\
             \za(x,y)\;&\text{else.}
             \end{cases}\]
  Since~$\w(v) \le \w(w)$, it is clear that~$(\Vdis,\z',\za',\Vwin')$
  is also a well-defined~$\kwin$-biased dissolution.

  The next claim basically corresponds to the second objective above, in the
  sense that districts with a large number of \partyA-supporters
  (possibly too large, that is, more than the required~$\mu$) should
  be dissolved in order to move the voters more efficiently.
  \begin{claim}
    \label{clm2}
    Let $v,w\in V$ be two districts with~$\w(v)\le\w(w)$.
    Assume that there exists an \biaseddissol with maximum~$\kwin$.
    If~$v$ is dissolved, then the following holds:
    \begin{enumerate}[(i)]
    \item\label{clm21} If~$w$ is losing, then there also exists an~$\kwin$-biased
    dissolution where~$w$ is dissolved and~$v$ is losing.
    \item\label{clm22} If~$w$ is winning and~$v$ is winnable, that is,
    $v\not\in\mathcal{L}$, then there exists an~$\kwin$-biased
    dissolution where~$w$ is dissolved and~$v$ is winning.
    \end{enumerate}
  \end{claim}

  \noindent \cref{clm2} also holds by an exchange argument similar to
  the one above:
  Let $(\Vdis,\z,\za,\Vwin)$ be an~$\kwin$-biased
  dissolution and let~$v\in\Vdis$, $w\in V\setminus\Vdis$ be two districts such
  that~$\w(v)\le\w(w)$.
  Again, we exchange~$v$ and~$w$ by setting~$\Vdis' := \Vdis \setminus
  \{v\}\cup\{w\}$.
  Since~$\sum_{x\in\Vdis'}\w(x) \ge \sum_{x\in\Vdis}\w(x)$ and since we are
  free to \move voters arbitrarily between districts, it is clear
  that it is always possible to find an~$\kwin$-biased dissolution such that
  $\Vdis'$ is the set of dissolved districts. 
  In particular, if~$v$ is a winnable district, then it is always
  possible to make~$v$ a winning district.

  Using \cref{clm1,clm2} above, we now show how to compute an optimal
  biased dissolution.
  In order to find a biased dissolution with the maximum number
  of winning districts, we search for a dissolution that loses
  a minimum number of remaining districts.
  Thus, for each~$\ell\in\{0,\ldots,\krem\}$, we check whether it is possible to
  dissolve $\kdis$~districts such that at most~$\ell$ of the remaining~$\krem$
  districts lose.
  To this end, assume that the districts~$v_1,\ldots,v_n$
  are ordered by increasing number of \partyAsupporters, that
  is, $\w(v_1)\leq\w(v_2)\leq \ldots \leq \w(v_n)$
  and let~$V_{\ell}:=\{v_1,\ldots,v_{\ell}\}$.
  Now, if there exists an~$(\krem-\ell)$-biased dissolution, then there also
  exists an~$(\krem-\ell)$-biased dissolution where the losing
  districts are exactly~$V_\ell$. This follows by repeated application
  of the exchange arguments of \cref{clm1} and
  \cref{clm2}(\ref{clm21}).
  Hence, given~$\ell$, we have to check whether there is a
  set~$D\subseteq V\setminus V_\ell$ of~$\kdis$ districts
  that can be dissolved in such a way that all non-dissolved districts
  in~$V\setminus (V_\ell \cup D)$ win and the districts in~$V_\ell$ lose. 

  \newcommand{\Vlastd}{\ensuremath{V^{d'}}}
  
  First, note that in order to achieve this, all districts in~$\mathcal{L}\setminus
  V_\ell$ have to be dissolved because they cannot win in any way.
  Clearly, if~$|\mathcal{L}\setminus V_\ell|>\kdis$, then it is simply not possible
  to lose only~$\ell$~districts and we can immediately
  go to the next iteration with $\ell:=\ell+1$.
  Therefore, we assume that~$|\mathcal{L}\setminus V_\ell| \le \kdis$ and
  let~$\kdis':=\kdis-|\mathcal{L}\setminus V_{\ell}|$ be the number of additional
  districts to dissolve in~$V\setminus(\mathcal{L}\cup V_{\ell})$.
  By \cref{clm2}(\ref{clm22}), it follows that we can assume that
  the $d'$~districts with the maximum number of~\partyA-supporters are
  dissolved, that is, $\Vlastd := \{v_{n-d'+1},\ldots,v_n\}$.
  Thus, we set~$\Vdis:=\mathcal{L}\setminus V_\ell \cup
  \Vlastd$ and check whether there are enough \partyAsupporters in~$\Vdis$
  to let all~$\krem-\ell$ remaining districts in~$V\setminus (V_\ell\cup \Vdis)$~win.
  
  Sorting the districts by the number of \partyAsupporters (in a
  preprocessing step) requires $O(n \log n)$ comparisons.  Then, for
  up to $n$~values of $\ell$, to check whether the remaining districts
  in $V\setminus (V_\ell\cup \Vdis)$ can win requires $O(n)$
  arithmetic operations each.  Thus, assuming constant-time
  arithmetic, we end up with a total running time in $O(n^2)$.
\end{proof}

\subsection{Graphs of bounded treewidth}
\label{sec:tw}

\citet[Theorem 2.3]{Yus07} showed that \textsc{$H$-Factor} is solvable
in linear time on graphs of bounded treewidth when the size of~$H$ is
constant.  This includes the case of finding $x$-star partitions, that
is, \Sdissol{x}{1}{}s respectively \Sdissol{1}{x}{}s when $x$~is
constant. We can show that the more general problem \probWinner{} is
solvable in linear time on graphs of bounded treewidth when~$s$
and~$\Delta_s$ are constants.  In terms of parameterized complexity
analysis~\citep{DF13,FG06,Nie06}, this shows that \probWinner{} is
fixed-parameter tractable with respect to the combined
parameter~$(t,s,\Delta_s)$, where $t$~is the treewidth of the
neighborhood graph.  Note that these results are basically for
classification only, since the corresponding algorithms come along
with enormous constants hidden in the $O$-notation.

\begin{theorem}\label{fplin}
  \probWinner{} is solvable in linear time on graphs of constant
  treewidth when $s$ and~$\Delta_s$ are constants.
\end{theorem}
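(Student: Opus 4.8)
\subsection*{Proof plan}

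The plan is to invoke (the optimization variant of) Courcelle's theorem: I will express ``$(G,\w)$ admits an \biaseddissol'' by a formula in monadic second‑order logic over the (vertex‑coloured) neighbourhood graph, together with a linear objective that counts $|\Vwin|$. Since $\sold$ and $\Deltas$ are constants, the formula is \emph{fixed}, so evaluating it and maximising the objective takes linear time on graphs of constant treewidth; the instance is a yes‑instance iff the optimum is at least~$\kwin$. This simultaneously yields fixed‑parameter tractability for the combined parameter $(t,\sold,\Deltas)$ with $t$ the treewidth.

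First I would fold the \partyA-supporter distribution~$\w$ into the structure by colouring $V$ with classes $V_0^{\w},\dots,V_{\sold}^{\w}$, where $v\in V_c^{\w}$ iff $\w(v)=c$; this is legitimate because $\sold$ is constant, and Courcelle's theorem applies to vertex‑coloured graphs. A candidate biased dissolution is then encoded by a vertex set~$\Vdis$, a vertex set~$\Vwin$, and, for every pair $(i,j)$ with $0\le j\le i\le \sold$, an edge set~$E_{i,j}$, with the intended reading that an edge $\{x,y\}\in E_{i,j}$ carries $i$~voters in total, of which $j$ are \partyA-supporters, flowing from whichever endpoint lies in~$\Vdis$ to the other. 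That the $E_{i,j}$ partition $E$, that every edge of $E_{i,j}$ with $i\ge 1$ has exactly one endpoint in~$\Vdis$ (so the flow direction is determined and matches $\A{\Vdis,G}$), and that $\Vwin\cap\Vdis=\emptyset$ are all straightforward MSO conditions; note that Property~\ref{prop:c}) of \cref{def:biased} is built into the index constraint $j\le i$.

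The key point, and the \textbf{main obstacle}, is that graphs of bounded treewidth may have unbounded degree, so one cannot naively count incident edges in MSO. This is resolved by observing that in any valid dissolution a dissolved vertex emits exactly $\sold$~voters and a non‑dissolved vertex receives exactly $\Deltas$~voters, so at most $\max(\sold,\Deltas)$ incident edges of any vertex carry positive weight. Hence, for each constant $k\le\max(\sold,\Deltas)$, ``the total weight of the edges at~$v$ equals $k$'' is expressible as a finite first‑order disjunction: over all compositions $k=w_1+\dots+w_\ell$ into positive parts ($\ell\le k$) and all admissible \partyA-parts $j_1\le w_1,\dots,j_\ell\le w_\ell$, assert the existence of $\ell$ distinct edges incident to~$v$ lying in $E_{w_1,j_1},\dots,E_{w_\ell,j_\ell}$, together with the universal statement that every other incident edge lies in~$E_{0,0}$. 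With this gadget, Properties~\ref{prop:a}) and~\ref{prop:b}) of \cref{def:dissolution} (weight out of $v\in\Vdis$ is $\sold$; weight into $v\notin\Vdis$ is $\Deltas$), Property~\ref{prop:d}) of \cref{def:biased} (\partyA-weight out of $v\in\Vdis$ equals~$c$ whenever $v\in V_c^{\w}$), and Property~\ref{prop:e}) (for $v\in\Vwin$, the \partyA-weight into~$v$ exceeds $(\sold+\Deltas)/2-\w(v)$, again a finite disjunction as all quantities are bounded) all become MSO‑expressible. Let $\varphi(\Vdis,\Vwin,(E_{i,j})_{i,j})$ be their conjunction.

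Finally I would apply the extension of Courcelle's theorem that, on graphs of bounded treewidth, optimises a linear objective over free vertex/edge set variables subject to an MSO formula in linear time (see, e.g., \cite{DF13,FG06} for background on this algorithmic metatheorem). Taking the objective $|\Vwin|$ and the formula $\varphi$ computes, in linear time, the maximum number of non‑dissolved districts that can be forced to hold a strict \partyA-majority; since enlarging a set of winners is always harmless, $(G,\w)$ admits an \biaseddissol iff this maximum is at least~$\kwin$. Observe that $\kwin$ itself never enters the formula, which is essential because $\kwin$ is not constant, whereas for fixed $\sold,\Deltas$ the formula $\varphi$ has constant size. This gives the claimed linear‑time bound on graphs of constant treewidth.
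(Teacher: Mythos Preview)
Your approach is correct and follows the same high-level strategy as the paper---encode \probWinner{} in MSO and invoke the optimisation variant of Courcelle's theorem to maximise $|\Vwin|$---but the encodings differ substantively. The paper first transforms the input into an auxiliary multigraph: each district becomes a vertex with $\w(v)$ pendant leaves (encoding its \partyAsupporters), and between any two adjacent districts it inserts $\sold{+}1$ parallel edges (potential voter moves) together with $\sold$ parallel subdivided edges (potential \partyA-moves); a dissolution is then specified by selecting an edge set~$Z$ and a vertex set~$Z_\alpha$ of degree-two subdivision vertices, so that all counting reduces to bounded-cardinality assertions on sets of explicit graph elements. You instead stay on the original graph, record~$\w$ via $\sold{+}1$ vertex colours, and record the voter and \partyA-supporter movements simultaneously through a constant family $(E_{i,j})_{0\le j\le i\le \sold}$ of edge-set variables, handling the arithmetic by a finite disjunction over compositions---which is legitimate precisely because, as you observe, each vertex is incident to at most $\max(\sold,\Deltas)$ edges carrying positive weight. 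Your route avoids the auxiliary-graph construction and the (tacit) verification that it preserves bounded treewidth; the paper's route makes the MSO predicates slightly more uniform, since every numeric constraint becomes a literal cardinality bound on a set of edges or vertices. Both yield a fixed-size formula for fixed~$\sold,\Deltas$ and optimise the same linear objective, so the conclusions coincide.
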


\noindent
To prove \cref{fplin}, we exploit a general result that a
maximum-cardinality set satisfying a constant-size formula in monadic
second-order logic for graphs can be computed in linear time on graphs
of constant bounded treewidth~\citep{ArnborgLS91}.
The set whose size we want to maximize is the set~$\Vwin$ of winning districts.
For the remainder of this subsection, we consider multigraphs,
that is, our graphs may contain multiple edges between any two vertices.

\begin{definition}[Monadic second-order logic for graphs] A
  formula~$\phi$ of the monadic second-order logic for graphs may
  consist of the logic operators~$\vee,\wedge,\neg$, vertex variables,
  edge variables, set variables, quantifiers~$\exists$ and~$\forall$
  over vertices, edges, and sets, and the predicates
  
  \begin{enumerate}[i)]
  \item $x\in X$ for a vertex or edge variable~$x$ and a set~$X$,
  \item $\inc(e,v)$, being true if $e$~is an edge incident to the vertex~$v$,
  \item $\adj(v,w)$, being true if $v$~and~$w$ are adjacent vertices,
  \item equality of vertex variables, edge variables, and set
    variables.
  \end{enumerate}
\end{definition}

\noindent
We point out that a constant-size formula in monadic second-order
logic for a problem does not only prove the mere existence of a
linear-time algorithm on graphs of bounded treewidth; the formula
itself can be converted into a linear-time
algorithm~\citep[Chapter~6]{CE12}.

\begin{proof}[Proof of \cref{fplin}]
  We model \probWinner{} as a formula in monadic second-order logic.
  Since monadic second-order logic does not allow
  to count the number of voters moved from one district to another
  or to count how many \partyAsupporters a district contains,
  we first model \probWinner{} as a problem on an auxiliary graph. For
  constant $s$~and~$\Delta_s$, the transformation of a \probWinner{}
  instance to this auxiliary graph can be done in linear time and
  works as follows (see \cref{fig:trafo}):

\begin{enumerate}
\item For each input district of \probWinner{}, introduce a vertex and
  attach to it as many degree-one vertices as the district has
  \partyAsupporters.
\item Between two neighboring districts, add~$s+1$ (multiple) edges
  between their representing vertices. The $s+1$ (multiple) edges
  represent potential moves of voters from one district to another.
\item Finally, connect each pair of vertices representing a pair of
  neighboring districts by $s$~parallel subdivided edges. These
  represent potential moves of \partyAsupporters.
\end{enumerate}

\noindent Note that, by adding~$s+1$ (multiple) edges between any two vertices
representing neighboring districts, we ensure that, in the graph resulting
from the above construction, a vertex has degree one if and only if it
represents an \partyAsupporter.
The vertex representing an \partyAsupporter belongs to the district
represented by its neighbor.
Moreover, a vertex has degree two if and only if it
represents a possible movement of an \partyAsupporter of one
district to another.

\begin{figure}
  \centering
    \begin{tikzpicture}[scale=0.9]
      \begin{scope}
        \nodeAAB{v1}{}
        \nodeABB{v2}{below left=of v1}
        \nodeABB{v3}{below right=of v1}

        \draw [ZEdge] (v1) -- (v2);
        \draw [ZEdge] (v1) -- (v3);
      \end{scope}
    \end{tikzpicture}\hspace{2cm}
    \begin{tikzpicture}[thick]
    \tikzstyle{vertex}=[circle,draw,fill=black,minimum size=5pt,inner sep=0pt]
    \tikzstyle{svertex}=[circle,draw,fill=black,minimum size=3pt,inner sep=0pt]

    \node[vertex] at (2,2) (v1) {};
    \node[vertex] at (0,0) (v2) {};
    \node[vertex] at (4,0) (v3) {};
    \node[vertex] at (2.5,2.5) (pen1) {};
    \node[vertex] at (1.5,2.5) (pen2) {};
    \node[vertex] at (-0.5,-0.5) (pen3) {};
    \node[vertex] at (4.5,-0.5) (pen4){};

    \draw  (v1) -- (v2);
    \draw  (v1) to[out=180,in=90] (v2);
    \draw  (v1) to[out=195,in=75] (v2);
    \draw  (v1) to[out=210,in=60] (v2);

    \draw  (v1) to[out=-90,in=0] node[svertex, midway] {} (v2);
    \draw  (v1) to[out=-105,in=15] node[svertex, midway]{} (v2);
    \draw  (v1) to[out=-120,in=30] node[svertex, midway]{} (v2);

    \draw  (v1) -- (v3);
    \draw  (v1) to[out=0,in=90] (v3);
    \draw  (v1) to[out=-15,in=105] (v3);
    \draw  (v1) to[out=-30,in=120] (v3);

    \draw  (v1) to[out=-60,in=150] node[svertex, midway]{}  (v3);
    \draw  (v1) to[out=-75,in=165]  node[svertex, midway]{} (v3);
    \draw  (v1) to[out=-90,in=180]  node[svertex, midway]{} (v3);

    \draw (v2)--(pen3);
    \draw (v3)--(pen4);
    \draw (v1)--(pen1);
    \draw (v1)--(pen2);
    \end{tikzpicture}

    \caption{Illustration of transforming a \probWinner{} instance
      (left) into an instance of the auxiliary multigraph problem (right).}
  \label{fig:trafo}
\end{figure}
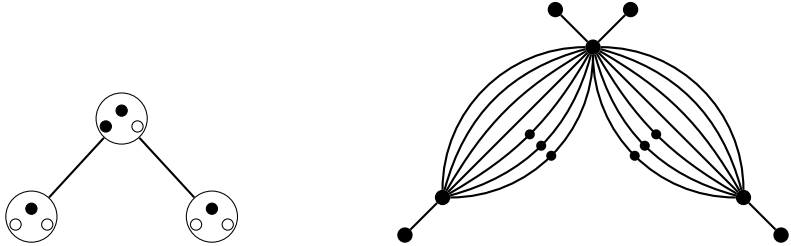

A dissolution now does not contain a function~$z$ moving voters from
one district to another (see \cref{def:dissolution}), but a set~$Z$
of selected edges representing such movements. Similarly,
the \partyAsupporter movement is no longer modeled as a
function~$z_\alpha$ (see \cref{def:biased}), but as a set of
vertices~$Z_\alpha$ representing such movements. Hence, we search for
a maximum vertex-set~$\Vwin$ that satisfies the following formula in monadic
second-order logic of graphs:
\begin{align*}
  \max \Vwin&\text{ s.\,t.\ }\begin{aligned}[t] \exists D\exists
    Z\exists
    Z_\alpha[\text{movements}\wedge\text{\partyA-movements}\wedge\text{districts}\\
    {}\wedge \text{prop-}a\wedge \text{prop-}b\wedge \text{prop-}c \wedge \text{prop-}d\wedge \text{prop-}e],
  \end{aligned}
  \intertext{where $\text{prop-}a, \text{prop-}b, \text{prop-}c, \text{prop-}d$,
    and $\text{prop-}e$ will be predicates ensuring that the
    Properties a)--e) of Definitions~\ref{def:dissolution}
    and~\ref{def:biased} of (biased) \dissolution are satisfied, $D$~will be the set of
    dissolved districts, $Z$~the set of voter movements, and~$Z_\alpha$
    the set of \partyAsupporter movements. To ensure this, we
    define}
  \text{districts}&:=\forall v[(v\in \Vdis \vee v \in \Vwin)\implies
  \text{degree-greater-two}(v)]
  \intertext{so that it is true if and only if each element in~$\Vdis \cup \Vwin$
    is a vertex with degree more than two, that is, it represents a district, where}
  \text{degree-greater-two}(v)&:=\exists v_1\exists v_2\exists
  v_3[\begin{aligned}[t]&v_1\neq v_2\wedge v_1\neq v_3\wedge v_2\neq v_3\\
    & \wedge \adj(v_1,v) \wedge \adj(v_2,v) \wedge \adj(v_3,v)]
  \end{aligned}
  \intertext{is true if and only if $v$~has at least three
    neighbors. Moreover, we define}
  \text{movements}&:=\forall e[\begin{aligned}[t] &e\in
    Z\implies \exists v_1\exists v_2[
    \inc(e,v_1)\wedge\inc(e,v_2)\\
    &\wedge\text{degree-greater-two}(v_1)
    \wedge v_1\in D\\
    &\wedge\text{degree-greater-two}(v_2) \wedge v_2\notin D]]
  \end{aligned}
  \intertext{so that it is true if and only if each element in the
    set~$Z$ is an edge representing a movement and}
  \text{\partyA{}-movements}&:=\forall a[\begin{aligned}[t] &a\in
    Z_{\alpha}\implies \exists v_1\exists v_2[
    \adj(a,v_1)\wedge\adj(a,v_2)\\
    &\wedge v_1\in \Vdis \wedge v_2\notin \Vdis\wedge\neg \text{degree-greater-two}(a)]]
  \end{aligned}
  \intertext{so that it is true if and only if each element in the
    set~$Z_{\alpha}$ is a vertex representing a movement of an \partyAsupporter. It remains
    to give the definitions of the predicates~$\text{prop-}a, \text{prop-}b, \text{prop-}c, \text{prop-}d,$
    and~$\text{prop-}e$. We define}
  \text{prop-}a:=\forall v[%
    v\in \Vdis{}&\implies \exists Z'[
    \card_s(Z')\wedge(\forall e[e\in Z'\iff
    \text{move-from}(e,v)])]]
  \intertext{so that it is true if and only if for each dissolved district~$v$
    there is a set of $\sold$~edges representing movements out of~$v$,
    where}
  \text{move-from}(e,v)&:=v\in \Vdis\wedge e\in Z\wedge\inc(e,v)
  \intertext{is true if and only if~$e$ is an edge representing a movement out of~$v$ and}
  \text{card}_{i}(X)&:=\exists x_1\exists x_2\dots\exists
  x_i\Biggl[
  \begin{aligned}[t]
    \Bigl(\bigwedge_{j=1}^ix_i\in X\Bigr)\wedge
    \Bigl(\bigwedge_{j=1}^i \bigwedge_{k=j+1}^i (x_j\neq x_k)\Bigr)\\
    {}\wedge{}\forall x\Bigl[x\in X\implies \bigvee_{j=1}^ix_j=x\Bigr]
    \Biggr]\\
  \end{aligned}
  \intertext{for $1\leq i\leq s$~is a constant-size formula that is
    true if and only if the set~$X$ has cardinality~$i$. Next, we define}
  \text{prop-}b&:= \forall v[\begin{aligned}[t]
    &(\text{degree-greater-two}(v)\wedge v\notin \Vdis)\implies\\&\exists Z'[
      \card_{\Delta_s}(Z')
      {}\wedge{}(\forall e[e\in Z'\iff
    \text{move-to}(e,v)])]]
  \end{aligned}
  \intertext{so that it is true if and only if there is a set~$Z'$ of
    $\Delta_s$~edges representing movements to each non-dissolved district~$v$, where}
  \text{move-to}(e,v)&:= v\notin D\wedge e\in Z\wedge\inc(e,v)
  \intertext{is true if and only if~$e$ is an edge representing a movement to~$v$.
   Next, we define}
  \text{prop-}c:=\forall v\forall u[
    v\in D {}&\wedge u\notin D \begin{aligned}[t] \wedge \adj&(v,u)\implies
    \exists Z'\exists Z'_\alpha[\text{smaller-equal}(Z'_\alpha,Z')\\
    &\wedge (\forall e[e\in Z'\iff
    \begin{aligned}[t]
      &\text{move-from}(e,v)\\
      &{}\wedge{}\text{move-to}(e,u)])
    \end{aligned}\\
    &\wedge (\forall a[a\in Z'_\alpha\iff
    \begin{aligned}[t]
      &\text{A-move-from}(a,v)\\
      &{}\wedge{}\text{A-move-to}(a,u)])]]
    \end{aligned}
  \end{aligned}
  \intertext{so that it is true if and only if the number of vertices
    representing \partyAsupporters movements from~$v$ to~$u$ is at most
    the number of edges representing movements from~$v$ to~$u$, where}
  \text{smaller-equal}(X,Y)&:=\bigvee_{i=1}^s\bigvee_{j=i}^s
  (\card_i(X)\wedge\card_j(Y))
  \intertext{is a constant-size formula that is true if and only if
    $|X|\leq |Y|$ and}
  \text{A-move-from}(a,v)&:=
  \begin{aligned}[t]
    v\in D\wedge a\in Z_\alpha\wedge\adj(v,a),
  \end{aligned}
  \\
  \text{A-move-to}(a,u)&:=
  \begin{aligned}[t]
    u\notin D\wedge a\in Z_\alpha\wedge\adj(u,a)
  \end{aligned}
  \intertext{are true if and only if $a$~is a vertex representing
    an \partyAsupporter movement from~$v$ or to~$u$, respectively. Next, we define}
  \text{prop-}d:=
    \forall v[v\in D&\implies\begin{aligned}[t]
    \exists Z_\alpha'\exists A[&\text{equal-card}(Z_\alpha',A)\\
    &{}\wedge{}\forall a[a\in A\iff\text{A-supporter-of}(a,v)]\\
    &{}\wedge{} \forall a[a\in Z'_\alpha\iff
    \text{A-move-from}(a,v)]]]
  \end{aligned}
  \intertext{so that it is true if and only if the number
    of \partyAsupporter movements out of a district~$v$ equals the
    number of its \partyAsupporters, where}
  \text{equal-card}(X,Y)&:=
  \bigvee_{i=1}^s(\card_i(X)\wedge\card_i(Y))
  \intertext{is a constant-size formula that is true if and only
    if~$|X|=|Y|$ and}
  \text{A-supporter-of}(a,v)&:=
  \begin{aligned}[t]
    \adj(a,v)\wedge\forall u[\adj(a,u)\implies u=v]
  \end{aligned}
  \intertext{is true if and only if $a$~is vertex representing an \partyAsupporter in
    district~$v$. Finally, we define}
  \text{prop-}e:=
    \forall v[v\in \Vwin&\implies \begin{aligned}[t] \exists
    A[&\text{card}_{> (s+\Delta_s)/2}(A)\\
    &{}\wedge\forall a[a\in A\iff
    \begin{aligned}[t]
    &\text{\partyAsupporter-of}(a,v)\\
    &{}\vee{}\text{\partyA{}-move-to}(a,v)]]]
  \end{aligned}
  \end{aligned}
  \intertext{so that for each district $v \in \Vwin$
    there are more than~$(s+\Delta_s)/2$ vertices which
    either represent \partyAsupporters of district~$v$
    or represent \partyAsupporter movements to district~$v$,
    where}
  \text{card}_{> i}(X)&:=\smashoperator{\bigvee_{j=\lfloor i\rfloor+1}^{\sold+\Deltas}}\card_j(X)
\end{align*}
  is a constant-size formula that is true if and only
  if~$i<|X|\le\sold+\Deltas$ with $i<\sold+\Deltas$. 
\end{proof}

\noindent Without providing any details, we claim that one can also prove
\cref{fplin} by using an explicit dynamic programming algorithm that
works on a so-called tree decomposition of a graph. The algorithm runs
in~$(\Delta_s+s)^{O(t^2)}\cdot n^{O(1)}$~time, but it is very technical and its
correctness proof is very tedious, while practical applicability still
seems out of reach.

\section{Conclusion}
We initiated a graph-theoretic approach to concrete
redistribution problems with potential applications in such diverse areas as 
political districting, green computing, and economization of work processes. 
Obviously, the two basic problems \probValid{} and \probWinner{} concern
highly simplified situations and will not be able to model all interesting
aspects of redistribution scenarios.
For instance, our constraint that before and after the dissolution all vertex 
loads are perfectly balanced may be too restrictive for many applications. 
All in all, we consider our simple (and yet fairly realistic) models as a 
first step into a promising direction for future research.
In particular, this may yield a stronger linking of graph-theoretic concepts
with districting scenarios and other application scenarios.

\looseness=-1 We end with a few specific challenges for future research.  We left
open whether the \classP~vs.~\NP dichotomy for general graphs fully
carries over to the planar case: it might be possible that planar
graphs allow for some further tractable cases with respect to the
relation between old and new district sizes.  To this end, it might
help to answer the question whether \textsc{Planar Exact Cover by
  4-Sets} is \NP-hard.  Since \textsc{Planar Exact Cover by 4-Sets} is
a very natural and simple problem on planar graphs, we believe that
this question is of independent interest.  Moreover, with
redistricting applications in mind it might be of interest to study
special cases of planar graphs (such as grid-like structures) in quest
of finding \pt{} solvable special cases of network-based vertex
dissolution problems.  Having identified several \NP-complete special
cases of \probValid{} and \probWinner{}, it is a natural endeavor to
investigate their \pt{} approximability and their parameterized
complexity; in the latter case one also needs to identify fruitful
parameterizations.  Motivated by our results, parameters measuring the
distance to acyclic graphs (cf.\ \cref{fplin}) or to complete graphs
(cf.\ \cref{thm:cliquesP}) seem promising in the spirit of distance
from triviality parameterizations~\citep{GHN04, Nie10}.  Furthermore,
also the maximum degree of a vertex should not be too large in many
applications.  On the one hand, since already the partition of a graph
into paths of length three, which is a special case of our \probValid
problem, is \NP-hard on graphs with maximum degree at most
three~\cite{MZ05,MT07}, the parameter ``maximum degree'' is not
interesting as single parameter.  On the other hand, the maximum
degree might be worth to be considered in combination with other
parameters.

\paragraph{Acknowledgments}
\noindent We thank the anonymous referees for helpful comments.  René van Bevern
was supported by the DFG, project DAPA (NI 369/12), Robert Bredereck
by the DFG, project PAWS (NI 369/10), Jiehua Chen by the
Studienstiftung des Deutschen Volkes, Vincent Froese by the DFG,
project DAMM (NI 369/13), and Gerhard J.\ Woeginger, while visiting
TU~Berlin, by a Humboldt Research Award of the Alexander 
von Humboldt Foundation, Bonn, Germany.

\bibliographystyle{abbrvnat}
\bibliography{gerry}

\end{document}